\documentclass[12pt]{article}
\usepackage[english]{babel}
\usepackage[utf8]{inputenc}
\usepackage[T1]{fontenc}
\usepackage{listings}
\usepackage{amsbsy}
\usepackage{amsfonts}
\usepackage{amsmath}
\usepackage{amssymb}
\usepackage{amsthm}
\usepackage{graphicx} 
\usepackage[pdftex]{color}
\usepackage{dsfont}
\usepackage{enumerate}
\usepackage{float}
\usepackage{calc, color, setspace}
\usepackage[top=0.48in, bottom=0.68in, left=0.60in, right=0.60in]{geometry}

\usepackage{indentfirst}
\usepackage{longtable}
\usepackage{multirow}
\usepackage{natbib}
\bibliographystyle{Chicago}
\usepackage{xr-hyper}
\usepackage{hyperref,bookmark}
\usepackage{mathtools}
\DeclarePairedDelimiter\floor{\lfloor}{\rfloor}

\usepackage{listings}
\usepackage{fancyvrb}
\usepackage{tcolorbox}

\newtheorem{theorem}{Theorem}[section]
\newtheorem{definition}{Definition}[section]
\newtheorem{lemma}[theorem]{Lemma}
\newtheorem{proposition}[theorem]{Proposition}

\newtheorem{remark}[theorem]{Remark}

\hypersetup{
  colorlinks=true,
  linkcolor=red,
  citecolor=blue,
  filecolor=red,
  urlcolor=teal,
}
\newcommand{\blind}{1}


\begin{document}

\def\spacingset#1{\renewcommand{\baselinestretch}%
{#1}\small\normalsize} \spacingset{1.4}


\if1\blind
{
  \title{\bf Nearly Unstable Integer-Valued ARCH Process and Unit Root Testing
  }
  \author{Wagner Barreto-Souza$^\star$\footnote{E-mail: \textcolor{teal}{\texttt{wagner.barretosouza@kaust.edu.sa}} (Corresponding Author)}\,\,\, and\, Ngai Hang Chan$^\sharp$\footnote{E-mail: \textcolor{teal}{\texttt{nhchan@sta.cuhk.edu.hk}}}\hspace{.2cm}\\
    {\it \normalsize $^\star$Statistics Program, King Abdullah University of Science and Technology, Thuwal, Saudi Arabia}\\
    {\it \normalsize $^\sharp$Department of Statistics, The Chinese University of Hong Kong, Hong Kong}}
  \maketitle
} \fi
\if0\blind
{
  \bigskip
  \bigskip
  \bigskip
  \begin{center}
    {\LARGE\bf Nearly Unstable INARCH Process}
\end{center}
  \medskip
} \fi

\bigskip
\addtocontents{toc}{\protect\setcounter{tocdepth}{1}}

\begin{abstract}
This paper introduces a Nearly Unstable INteger-valued AutoRegressive Conditional Heteroskedasticity (NU-INARCH) process for dealing with count time series data. It is proved that a proper normalization of the NU-INARCH process endowed with a Skorohod topology weakly converges to a Cox-Ingersoll-Ross diffusion. The asymptotic distribution of the conditional least squares estimator of the correlation parameter is established as a functional of certain stochastic integrals. Numerical experiments based on Monte Carlo simulations are provided to verify the behavior of the asymptotic distribution under finite samples. These simulations reveal that the nearly unstable approach provides satisfactory and better results than those based on the stationarity assumption even when the true process is not that close to non-stationarity. A unit root test is proposed and its Type-I error and power are examined via Monte Carlo simulations. As an illustration, the proposed methodology is applied to the daily number of deaths due to COVID-19 in the United Kingdom.
\end{abstract}

{\it \textbf{Keywords}:} Count time series; Cox-Ingersoll-Ross diffusion process; Inference; Limit theorems; Stochastic integral.

\section{Introduction}\label{intro}
  
  First-order nearly unstable continuous autoregressive processes have been well explored in the literature, see for example \cite{chanwei1987}, \cite{phi1987}, \cite{chanetal2019}, and the references therein. In these works, it is assumed that the model approaches the non-stationarity region as the sample size increases. More specifically, a nearly unstable continuous process $\{Y_t^{(n)}\}_{t\in\mathbb N}$ is defined by
  \begin{eqnarray*}
  	Y_t^{(n)}=\rho_nY_{t-1}^{(n)}+\eta_t,\quad t\in\mathbb N,
  \end{eqnarray*}	 
  where $\{\eta_t\}_{t\in\mathbb N}$ is a white noise and $\rho_n=1-b/n$, for $b>0$.
  
  In the past few years, nearly unstable discrete processes have emerged
  based on the INteger-valued AutoRegressive (INAR) approach \citep{mck1985,alal1987}. The first attempt on this subject was due to \cite{ispetal2003}. More specifically, a nearly unstable INAR(1) process $\{X_t\}_{t\in\mathbb N}$ is defined by
  \begin{eqnarray*}
  X^{(n)}_t=\alpha_n\circ X^{(n)}_{t-1}+\epsilon^{(n)}_t, \quad t\in\mathbb N,
  \end{eqnarray*}	
  where $\circ$ is the thinning operator proposed by \cite{stevan1979}, given by $\alpha_n\circ X^{(n)}_{t-1}=\sum_{j=0}^{X^{(n)}_{t-1}} B^{(n)}_{j\,t}$ with $\{B^{(n)}_{j\,t}\}_{j,t\in\mathbb N}\stackrel{iid}{\sim}\mbox{Bernoulli}(\alpha_n)$, for $\alpha_n\in(0,1)$, and $\{\epsilon^{(n)}_t\}_{t\in\mathbb N}$ is a sequence of independent and identically distributed (iid) random variables with $\epsilon^{(n)}_t$ being independent of the counting series $\{B^{(n)}_{j\,k}\}_{j\in\mathbb N}$ for all $k\leq t$, for $t\in\mathbb N$. These authors assumed that $\alpha_n$ approaches 1 (non-stationarity) when $n\rightarrow\infty$ as given in \cite{chanwei1987} in the continuous context. By assuming $\mu_\epsilon\equiv E(\epsilon_t)$ is known, the conditional least squares (CLS) estimator of $\alpha_n$ was explored by \cite{ispetal2003}. They showed that, under nearly non-stationarity and assuming finite second moment for $\epsilon_t$, the CLS estimator weakly converges to a normal distribution at the rate $n^{3/2}$. Other related works dealing with nearly unstable INAR (Galton-Watson/branching) processes are due to \cite{weiwin1990}, \cite{win1991}, \cite{ispetal2005}, \cite{rah2007}, \cite{rah2008}, \cite{droetal2009}, \cite{rah2009}, \cite{barcetal2011}, \cite{ispetal2014}, \cite{barcetal2014}, \cite {guozha2014}, and \cite{barcetal2016}. Practical situations demonstrating evidence of a nearly unstable INAR model are discussed for instance by \cite{hel2001}. 
  
  Another popular way for dealing with count time series data is the INteger-valued Genenalized AutoRegressive Conditional Heterokedastic (INGARCH) models by \cite{ferlandetal2006}, \cite{foketal2009}, \cite{fokfri2010}, \cite{zhu2011},  \cite{foktjo2011}, \cite{zhu2012}, \cite{chrfok2015}, \cite{gonetal2015}, \cite{davliu2016}, \cite{silbar2019}, \cite{weietal2020}, which constitute in some sense an integer-valued counterpart of the classical GARCH models by \cite{bollerslev1986}. The INGARCH methodology is the focus of this paper. Like the existing literature on nearly unstable continuous and INAR processes that assumes first-order autoregressive dependence, in this paper we consider the first-order autoregressive version of the INGARCH approach, which is known as INARCH(1) (INteger-valued AutoRegressive Conditional Heteroskedasticity).

Our chief goal in this paper is to introduce a Nearly Unstable INARCH (denoted by NU-INARCH) process for dealing with count time series data. To the best of our knowledge, this is the first time that a nearly unstable count time series model is being proposed based on the INARCH approach; all existing nearly unstable discrete processes in the literature consider the INAR approach. We establish the weak convergence of the NU-INARCH process (when properly normalized) endowed with a Skorohod topology. With this result at hand, we derive the asymptotic distribution of the conditional least squares estimator of the correlation parameter as a functional of certain stochastic integrals. An equally important contribution of this paper is to develop a unit root test (URT) for the INARCH model, where the asymptotic distribution of the statistics under the null hypothesis is provided. Note that although URTs are well explored in the continuous case, only sporadic results are available for the discrete case. A few works dealing with this relevant problem, based on the INAR approach, are due to \cite{hel2001}  and \cite{droetal2009}.

The paper is organized as follows. In Section \ref{sec:model_fluct}, the NU-INARCH model is introduced and a fluctuation theorem is established, which involves the Cox-Ingersoll-Ross diffusion process. The asymptotic distribution of the CLS estimator for the correlation parameter is derived in Section \ref{sec:cls} under the nearly unstable and stationarity assumptions. Section \ref{sec:sim} provides simulated results about the asymptotic distribution of the CLS estimator under both nearly unstable and stationary approaches and also compares them in terms of confidence interval coverages. A unit root test for the INARCH process is proposed in Section \ref{sec:urt} and its performance is evaluated via Monte Carlo simulations. An empirical application about the daily number of deaths due to COVID-19 in the United Kingdom, which exhibits a nearly unstable/non-stationary behavior, is provided in Section \ref{sec:application}. Concluding remarks and future research are addressed in Section \ref{conclusions}.
  
  \section{Model and the Fluctuation Theorem}\label{sec:model_fluct}

In this section, we define the nearly unstable INARCH process and obtain its weak convergence (under a proper normalization) in the space of the non-negative c{\`a}dl{\`a}g functions endowed with the Skorokhod topology.  
  
  \begin{definition}\label{def:n_unst}
  	We say that a sequence $\{X^{(n)}_t\}_{t\in\mathbb N}$ is a first-order nearly unstable integer-valued ARCH process (in short NU-INARCH) if 
  	\begin{eqnarray}
  	X^{(n)}_t|\mathcal F^{(n)}_{t-1}&\sim&\mbox{Poisson}(\lambda^{(n)}_t),\label{eq1}\\
  	\lambda^{(n)}_t\equiv E(X^{(n)}_t|\mathcal F^{(n)}_{t-1})&=&\beta+\alpha_n X^{(n)}_{t-1},\quad t\geq1,\label{eq2}
  	\end{eqnarray}	 
  	for $n\in\mathbb N$, where $\mathcal F^{(n)}_{t-1}=\sigma\{X^{(n)}_{t-1},\ldots,X^{(n)}_0\}$, $\beta>0$, and $\alpha_n=1-\dfrac{\gamma_n}{n}$,           
	with $\displaystyle\lim_{n\rightarrow\infty}\gamma_n=\gamma>0$, and $X_0^{(n)}=\kappa\in\mathbb N$ (constant starting value). 
  \end{definition}
  
  \begin{remark}
  	For the nearly unstable INARCH model defined above, we have that $\mbox{corr}(X^{(n)}_t,X^{(n)}_{t-k})=\alpha_n^k$, for $k\in\mathbb N$. The parameterization of $\alpha_n$ in (\ref{eq2}) was first proposed by \cite{chanwei1987} and subsequently used in \cite{ispetal2003}.
  \end{remark}	

In the next proposition, we provide the mean, variance, and autocorrelation function of the NU-INARCH process. These results will be important to establish the proper normalization in order to obtain a non-trivial limit for the counting process.

\begin{proposition}\label{prop:moments}
 Let $\{X^{(n)}_t\}_{t\in\mathbb N}$ be a nearly unstable INARCH process. Then, its marginal mean and variance, and autocorrelation function are given respectively by
 \begin{eqnarray*}
&& E(X^{(n)}_t)=\beta\dfrac{1-\alpha_n^t}{1-\alpha_n},\\
&& \mbox{Var}(X^{(n)}_t)=\dfrac{\beta}{1-\alpha_n}\left\{\dfrac{1-\alpha_n^{2t}}{1-\alpha_n^2}-\alpha_n^t\dfrac{1-\alpha_n^{t}}{1-\alpha_n}\right\},\\
&& \mbox{cov}(X^{(n)}_{t+k},X^{(n)}_t)=\alpha_n^k\mbox{Var}(X^{(n)}_t), \quad t,k\in\mathbb N_0\equiv\{0,1,2,\ldots\}.
 \end{eqnarray*}
\end{proposition}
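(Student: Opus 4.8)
The plan is to obtain all three quantities by iterated conditioning on the filtration $\{\mathcal F^{(n)}_t\}$, exploiting two structural facts from Definition \ref{def:n_unst}: the conditional mean $E(X^{(n)}_t\mid\mathcal F^{(n)}_{t-1})=\beta+\alpha_n X^{(n)}_{t-1}$, and the Poisson property, which forces the conditional variance to coincide with the conditional mean, $\mathrm{Var}(X^{(n)}_t\mid\mathcal F^{(n)}_{t-1})=\lambda^{(n)}_t=\beta+\alpha_n X^{(n)}_{t-1}$. Writing $m_t=E(X^{(n)}_t)$ and $v_t=\mathrm{Var}(X^{(n)}_t)$, and using that $X^{(n)}_0=\kappa$ is deterministic (so $v_0=0$), the tower property gives $m_t=E\{E(X^{(n)}_t\mid\mathcal F^{(n)}_{t-1})\}=\beta+\alpha_n m_{t-1}$, a first-order linear recursion. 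Iterating and summing the resulting geometric series yields $m_t=\beta(1-\alpha_n^t)/(1-\alpha_n)$.

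For the variance I would apply the law of total variance, $v_t=E\{\mathrm{Var}(X^{(n)}_t\mid\mathcal F^{(n)}_{t-1})\}+\mathrm{Var}\{E(X^{(n)}_t\mid\mathcal F^{(n)}_{t-1})\}$. The Poisson identity makes the first term equal to $E(\lambda^{(n)}_t)=\beta+\alpha_n m_{t-1}=m_t$, while the second equals $\mathrm{Var}(\beta+\alpha_n X^{(n)}_{t-1})=\alpha_n^2 v_{t-1}$. This produces the inhomogeneous recursion $v_t=m_t+\alpha_n^2 v_{t-1}$, driven by the already-computed mean. Solving with $v_0=0$ gives $v_t=\sum_{j=1}^t\alpha_n^{2(t-j)}m_j$, after which inserting $m_j$ and collapsing the two resulting geometric sums produces the closed form $\frac{\beta}{1-\alpha_n}\{\frac{1-\alpha_n^{2t}}{1-\alpha_n^2}-\alpha_n^t\frac{1-\alpha_n^t}{1-\alpha_n}\}$.

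For the autocovariance, fix $t$ and treat $k\ge1$ by conditioning the cross product on the finer field $\mathcal F^{(n)}_{t+k-1}$, with respect to which $X^{(n)}_t$ is measurable, so that $E(X^{(n)}_{t+k}X^{(n)}_t\mid\mathcal F^{(n)}_{t+k-1})=X^{(n)}_t(\beta+\alpha_n X^{(n)}_{t+k-1})$. Taking expectations and subtracting $m_{t+k}m_t$, together with the relation $m_{t+k}=\beta+\alpha_n m_{t+k-1}$, cancels the $\beta m_t$ term and leaves the recursion $\mathrm{cov}(X^{(n)}_{t+k},X^{(n)}_t)=\alpha_n\,\mathrm{cov}(X^{(n)}_{t+k-1},X^{(n)}_t)$. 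Iterating this down to $k=0$ gives $\alpha_n^k\,v_t$, as claimed, with the case $k=0$ being trivial.

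The derivation is essentially routine; the only delicate point is the variance step, where the Poisson structure must be used to identify the expectation of the conditional variance with $m_t$, after which the closed-form simplification of the nested geometric sums is the most computation-heavy part. Careful handling of the deterministic initial condition ($v_0=0$) is likewise needed so that the transient terms reduce exactly to the stated formulas.
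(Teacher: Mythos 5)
Your proposal is correct and follows essentially the same route as the paper: tower property for the mean, law of total variance combined with the Poisson identity $\mathrm{Var}(X^{(n)}_t\mid\mathcal F^{(n)}_{t-1})=E(X^{(n)}_t\mid\mathcal F^{(n)}_{t-1})$ for the variance, and a conditioning argument yielding the recursion $\mathrm{cov}(X^{(n)}_{t+k},X^{(n)}_t)=\alpha_n\,\mathrm{cov}(X^{(n)}_{t+k-1},X^{(n)}_t)$ for the autocovariance. The only cosmetic difference is that you condition the cross-product directly on the finer field $\mathcal F^{(n)}_{t+k-1}$, whereas the paper invokes the law of total covariance given $\mathcal F^{(n)}_t$ and pushes $E(X^{(n)}_{t+k}\mid\mathcal F^{(n)}_t)$ through the tower property; both reduce to the identical recursion.
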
  

\begin{proof}
	We have that $E(X^{(n)}_t)=E\left(E(X^{(n)}_t|\mathcal F^{(n)}_{t-1})\right)=\beta+\alpha_n E(X^{(n)}_{t-1})$. By using recursion $t$ times, we obtain the result for the marginal mean. For the variance, it follows that 
	\begin{eqnarray*}
		&&	\mbox{Var}(X^{(n)}_t)=E\left(\mbox{Var}(X^{(n)}_t|\mathcal F^{(n)}_{t-1})\right)+\mbox{Var}\left(E(X^{(n)}_t|\mathcal F^{(n)}_{t-1})\right)=\beta+\alpha_n E(X^{(n)}_{t-1})+\alpha_n^2\mbox{Var}(X^{(n)}_{t-1})=\\
		&&	\beta\dfrac{1-\alpha_n^{t}}{1-\alpha_n}+\alpha_n^2\mbox{Var}(X^{(n)}_{t-1})=\dfrac{\beta}{1-\alpha_n}\left\{\sum_{k=0}^{t-1}\alpha_n^{2k}-\alpha_n^t\sum_{k=0}^{t-1}\alpha_n^{k}\right\}=\dfrac{\beta}{1-\alpha_n}\left\{\dfrac{1-\alpha_n^{2t}}{1-\alpha_n^2}-\alpha_n^{t}\dfrac{1-\alpha_n^{t}}{1-\alpha_n}\right\}.
	\end{eqnarray*}

	Finally, for $k,t\in\mathbb N_0$, the autocorrelation function becomes
	\begin{eqnarray*} 
		\mbox{cov}(X^{(n)}_{t+k},X^{(n)}_t)&=&E(\mbox{cov}(X^{(n)}_{t+k},X^{(n)}_t)|\mathcal F^{(n)}_t)+\mbox{cov}(E(X^{(n)}_{t+k}|\mathcal F^{(n)}_t),E(X^{(n)}_t|\mathcal F^{(n)}_t))\\
		&=&\mbox{cov}(E(X^{(n)}_{t+k}|\mathcal F^{(n)}_t),X^{(n)}_t))=\alpha_n\mbox{cov}(E(X^{(n)}_{t+k-1}|\mathcal F^{(n)}_t),X^{(n)}_t))\\
		&=&\alpha_n\mbox{cov}(X^{(n)}_{t+k-1},X^{(n)}_t)=\alpha_n^k\mbox{Var}(X^{(n)}_t),
	\end{eqnarray*}	 
	where we have used in the third equality the fact that $E(X^{(n)}_{t+k}|\mathcal F^{(n)}_t)=E\left(E(X^{(n)}_{t+k}|\mathcal F^{(n)}_{t+k-1})|\mathcal F^{(n)}_t\right)=\beta+\alpha_nE(X^{(n)}_{t+k-1}|\mathcal F^{(n)}_t)$ since $\mathcal F^{(n)}_t\subseteq\mathcal F^{(n)}_{t+k-1}$ for $k\geq1$.
\end{proof}	 

From Proposition \ref{prop:moments}, we have that $E(X^{(n)}_{\floor{nt}})\approx\beta\gamma^{-1}n(1-e^{-\gamma t})=\mathcal O(n)$ and $\mbox{Var}(X^{(n)}_{\floor{nt}})\approx\beta\gamma^{-2}n^2(1-e^{-\gamma t})^2/2=\mathcal O(n^2)$. We then define the normalized process $\mathcal X^{(n)}(t)\equiv X^{(n)}_{\floor{nt}}/n$ and obtain that $\mathcal X^{(n)}(t)=\mathcal O_p(1)$, for $t\geq0$. In the following theorem, we establish the weak convergence of the process $\{\mathcal X^{(n)}(t);\,\, t\geq0\}$ as $n\rightarrow\infty$. We introduce some notation before presenting such a result. Denote by $D^+[0,\infty)$ the space of the non-negative c{\`a}dl{\`a}g (right continuous with left limits) functions on $[0,\infty)$ and $C_c^{\infty}[0,\infty)$ the space of infinitely differentiable functions on $[0,\infty)$ having compact supports.

\begin{theorem}\label{mainthm}
The stochastic process $\{\mathcal X^{(n)}(t);\,\, t\geq0\}$ weakly converges in $D^+[0,\infty)$ endowed with the Skorokhod topology to a diffusion process $\{\mathcal X(t);\,\, t\geq0\}$ given by the solution of the stochastic differential equation
\begin{eqnarray}\label{diffusion}
d\mathcal X(t)=(\beta-\gamma\mathcal X(t))dt+\sqrt{\mathcal X(t)}dB(t),\quad t>0,
\end{eqnarray}
and $\mathcal X(0)=0$, as $n\rightarrow\infty$, where $\{B(t);\,\,t\geq0\}$ is a standard Brownian motion.
\end{theorem}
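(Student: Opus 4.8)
The plan is to recognize Theorem~\ref{mainthm} as a diffusion-approximation result and to prove it through the generator-convergence / martingale-problem machinery for Markov processes (in the spirit of Ethier and Kurtz, \emph{Markov Processes: Characterization and Convergence}), following the same strategy used in the continuous case by \cite{chanwei1987} and in the INAR case by \cite{ispetal2003}. The rescaled process $\mathcal X^{(n)}$ is the time-$1/n$ embedding of the Markov chain $\{X^{(n)}_t\}$ as a pure-jump c{\`a}dl{\`a}g process, so the argument reduces to three ingredients: (i) the discrete generator of $\mathcal X^{(n)}$ converges to the generator of the limiting diffusion; (ii) the limiting martingale problem is well posed; and (iii) the sequence $\{\mathcal X^{(n)}\}$ is tight.

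For (i), fix $f\in C_c^\infty[0,\infty)$ and define the discrete generator
\[
\mathcal A_n f(x)=n\,E\!\left[f\!\left(X^{(n)}_{k+1}/n\right)-f(x)\,\big|\,X^{(n)}_k=nx\right],
\]
where, by (\ref{eq1})--(\ref{eq2}), $X^{(n)}_{k+1}\mid\{X^{(n)}_k=nx\}\sim\mbox{Poisson}(\beta+\alpha_n nx)$. Taylor expanding $f$ about $x$ and inserting the Poisson conditional moments (conditional mean of $X^{(n)}_{k+1}-nx$ equal to $\beta-\gamma_n x$ and conditional variance equal to $\beta+\alpha_n nx$) gives
\[
\mathcal A_n f(x)=(\beta-\gamma_n x)f'(x)+\tfrac12\!\left(\tfrac{\beta}{n}+\alpha_n x+\tfrac{(\beta-\gamma_n x)^2}{n}\right)f''(x)+R_n(x),
\]
so that $\mathcal A_n f\to(\beta-\gamma x)f'+\tfrac12 x f''=:\mathcal A f$ uniformly on compact sets, using $\gamma_n\to\gamma$ and $\alpha_n\to1$. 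The remainder $R_n$ gathers the cubic-and-higher Taylor terms; since $f$ has bounded derivatives with compact support and the $k$-th moment of $X^{(n)}_{k+1}-nx$ is $\mathcal O(n^{\floor{k/2}})$, each such term is $\mathcal O(n^{1-k+\floor{k/2}})\to0$ for $k\geq3$, so $R_n\to0$ uniformly on compacts. Here $\mathcal A$ is exactly the generator of the Cox--Ingersoll--Ross diffusion in (\ref{diffusion}).

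For (ii), the drift $x\mapsto\beta-\gamma x$ is Lipschitz and the diffusion coefficient $x\mapsto\sqrt{x}$ is H{\"o}lder continuous of order $1/2$, so the Yamada--Watanabe criterion yields pathwise uniqueness, hence uniqueness in law, for (\ref{diffusion}) started at $\mathcal X(0)=0$; because $\beta>0$ the drift is strictly positive at the boundary and the solution stays in $[0,\infty)$, which justifies working in $D^+[0,\infty)$ and makes the martingale problem for $\mathcal A$ (with initial law $\delta_0$, consistent with $\mathcal X^{(n)}(0)=\kappa/n\to0$) well posed. For (iii), I would obtain the compact-containment condition from the moment bounds of Proposition~\ref{prop:moments} (which give $\sup_{t\leq T}E\,\mathcal X^{(n)}(t)=\mathcal O(1)$ together with the mean-reverting drift) and control the oscillations through the generator estimates of (i); the maximal jump of $\mathcal X^{(n)}$ over $[0,T]$ equals $\max_k|X^{(n)}_{k+1}-X^{(n)}_k|/n=\mathcal O_p(\sqrt{n\log n}/n)\to0$, forcing any limit to be continuous. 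Tightness then follows from the Ethier--Kurtz criterion, every weak limit point solves the martingale problem for $\mathcal A$, and the uniqueness in (ii) identifies that limit as $\{\mathcal X(t)\}$, which completes the proof.

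I expect the main obstacle to be the degeneracy of the limit at the boundary $x=0$: the coefficient $\sqrt{x}$ is not Lipschitz and the process starts exactly at $0$, so neither well-posedness nor the identification of the limit follows from the standard Lipschitz/uniformly-elliptic theory, and one must lean on the H{\"o}lder-$1/2$ (Yamada--Watanabe) uniqueness together with the fact that $C_c^\infty[0,\infty)$ is separating and rich enough to pin down the martingale problem on the closed half-line. A secondary difficulty is that the increments $X^{(n)}_{k+1}-X^{(n)}_k$ are unbounded Poisson jumps, so the uniform control of both the remainder $R_n$ and the maximal jump must rest on quantitative Poisson moment estimates rather than on any bounded-increment assumption.
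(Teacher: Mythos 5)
Your proposal is correct in outline, but it is architected differently from the paper's proof, so the comparison is worth spelling out. Both arguments share the same computational core: conditionally on $X^{(n)}_{t-1}=nx$ the next state is $\mbox{Poisson}(\beta+\alpha_n nx)$, and Taylor expansion of a test function against these conditional moments identifies the CIR generator $(\beta-\gamma x)h'(x)+\tfrac12 xh''(x)$. The routes diverge in how this is upgraded to weak convergence. The paper proves that the generator error $\epsilon_n(x)=n\left(T_nh(x)-h(x)\right)-(\beta-\gamma x)h'(x)-\tfrac12 xh''(x)$ vanishes \emph{uniformly over the entire lattice state space} $E_n$ --- this is where all its work goes, with integral-form Taylor remainders and the compact support of $h$ used to handle sequences $x_n\to0$, $x_n\to x\in(0,\infty)$, and $x_n\to\infty$ --- and then invokes Theorem 6.5 of Chapter 1 and Corollary 8.9 of Chapter 4 of \cite{ethkur1986}, which convert sup-norm generator convergence directly into convergence in $D^+[0,\infty)$; no explicit tightness or uniqueness argument appears in the paper. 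You instead prove generator convergence only uniformly on compacts and pay for that with the two ingredients the paper never makes explicit: tightness (compact containment plus vanishing maximal jumps, forcing continuous limits) and well-posedness of the limiting martingale problem via Yamada--Watanabe. This is a legitimate trade: you avoid the global $\sup_{x\in E_n}$ estimate, which is the paper's hardest step, and your step (ii) verifies the limit-identification hypothesis that the paper's appeal to Corollary 8.9 leaves unchecked (that corollary presupposes the closure of the limit generator generates a Feller semigroup, the semigroup-language counterpart of well-posedness of the martingale problem); in exchange, the paper's route never has to prove tightness at all.

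Two repairs are needed before your sketch is a complete proof. First, compact containment does not follow from the pointwise moment bounds of Proposition \ref{prop:moments} alone; you need a maximal inequality. It is available cheaply from the mean-reverting structure you mention: with $N=\floor{nT}$, the process $Y^{(n)}_t=X^{(n)}_t+\beta(N-t)$, $t\le N$, is a non-negative supermartingale, since $E(Y^{(n)}_t\mid\mathcal F^{(n)}_{t-1})=\alpha_nX^{(n)}_{t-1}+\beta(N-t+1)\le Y^{(n)}_{t-1}$, whence $P\left(\max_{t\le N}X^{(n)}_t\ge Kn\right)\le E(Y^{(n)}_0)/(Kn)\le(\kappa/n+\beta T)/K$ uniformly in $n$, which is the compact-containment condition. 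Second, your remainder bookkeeping treats $f$ as if it admitted a convergent Taylor series with moment bounds at every order; instead, stop the expansion at third order with a Lagrange remainder and bound it by $|R_n(x)|\le\tfrac{n}{6}\|f'''\|\,E\big|X^{(n)}_{k+1}/n-x\big|^3\le\tfrac{n}{6}\|f'''\|\left(E\left(X^{(n)}_{k+1}/n-x\right)^4\right)^{3/4}=\mathcal O(n^{-1/2})$ uniformly on compacts, using the fourth central Poisson moment $3\lambda^2+\lambda$ with $\lambda=\beta+\alpha_n nx=\mathcal O(n)$. With these two patches, and your Yamada--Watanabe and jump-control steps as stated, the argument goes through as a valid alternative proof of Theorem \ref{mainthm}.
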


\begin{remark}
	The process $\{\mathcal X(t);\,\,t\geq0\}$ appearing in Theorem \ref{mainthm}, Equation (\ref{diffusion}), is known in the literature as the Cox-Ingersoll-Ross (CIR) process \citep{coxetal1985}.
\end{remark}

\begin{proof}
We have that $X^{(n)}_{t}|X^{(n)}_{t-1}=n x\sim\mbox{Poisson}(\beta+\alpha_n nx)$, with $x\in E_n\equiv\{j/n: j=0,1,2,\ldots \}$; we here denote $Z_x^{(n)}\sim \mbox{Poisson}(\beta+\alpha_n nx)$ and $\widetilde Z_x^{(n)}\equiv Z_x^{(n)}/n$. In particular, $X^{(n)}_0=\kappa/n\rightarrow0$ almost surely. Note that $\widetilde Z_x^{(n)}$ is a Markov chain assuming values in $E_n$. For $h\in C_c^{\infty}[0,\infty)$, define $T_n h(x)\equiv E\left(h(\widetilde Z_x^{(n)})\right)$. From Theorem 6.5 in Chapter 1 and Corollary 8.9 in Chapter 4 of \cite{ethkur1986}, to obtain the desired result, it is enough to show that
\begin{eqnarray}\label{cond_conv_diff}
\lim_{n\rightarrow\infty}\sup_{x\in E_n}|\epsilon_n(x)|=0, \quad h\in C_c^{\infty}[0,\infty),
\end{eqnarray}
with $\epsilon_n(x)=n\left(T_n h(x)-h(x)\right)-(\beta-\gamma x)h'(x)-\dfrac{1}{2}xh''(x)$, where $h'(\cdot)$ and $h''(\cdot)$ denote the first and second derivatives of $h(\cdot)$, respectively.

For $\widetilde Z_x^{(n)}\neq x$, we have that
\begin{eqnarray}\label{aux1}
\int_0^1h''(x+v(\widetilde Z_x^{(n)}-x))dv=\dfrac{h'(\widetilde Z_x^{(n)})-h'(x)}{\widetilde Z_x^{(n)}-x}
\end{eqnarray}
and 
\begin{eqnarray}\label{aux2}
\int_0^1vh''(x+v(\widetilde Z_x^{(n)}-x))dv=\dfrac{h'(\widetilde Z_x^{(n)})}{\widetilde Z_x^{(n)}-x}-\dfrac{h(\widetilde Z_x^{(n)})-h(x)}{(\widetilde Z_x^{(n)}-x)^2}.
\end{eqnarray}

By combining (\ref{aux1}) and (\ref{aux2}), we obtain that
\begin{eqnarray}\label{aux3}
n\left(h(\widetilde Z_x^{(n)})-h(x)\right)=\int_0^1n(\widetilde Z_x^{(n)}-x)^2(1-v)h''(x+v(\widetilde Z_x^{(n)}-x))dv+n(\widetilde Z_x^{(n)}-x)h'(x).
\end{eqnarray}

Note that Equation (\ref{aux3}) also holds for $\widetilde Z_x^{(n)}=x$. Further, we can write
\begin{eqnarray}\label{aux4}
-\dfrac{1}{2}E\left(n(\widetilde Z_x^{(n)}-x)^2\right)h''(x)=E\left(-\int_0^1n(\widetilde Z_x^{(n)}-x)^2(1-v)h''(x)dv\right).
\end{eqnarray}

We now use the Equations (\ref{aux3}) and (\ref{aux4}) to express $\epsilon_n(x)$ as follows:
\begin{eqnarray}\label{aux5}
\epsilon_n(x)&=&E\left(\int_0^1n(\widetilde Z_x^{(n)}-x)^2(1-v)\left(h''(x+v(\widetilde Z_x^{(n)}-x))-h''(x)\right)dv\right)+\nonumber\\
&&h'(x)\left\{E\left(n\widetilde Z_x^{(n)}\right)-(\beta-\gamma x)\right\}+\dfrac{1}{2}h''(x)\left\{E\left(n(\widetilde Z_x^{(n)}-x)^2\right)-x\right\}\nonumber\\
&\coloneqq&\epsilon^{(1)}_n(x)+\epsilon^{(2)}_n(x)+\epsilon^{(3)}_n(x).
\end{eqnarray}

We will show that $\lim_{n\rightarrow\infty}\sup_{x\in E_n}|\epsilon^{(j)}_n(x)|=0$, for $j=1,2,3$. This result, Equation (\ref{aux5}), and the triangular inequality imply that (\ref{cond_conv_diff}) holds and therefore conclude the proof of the theorem. 

To show the case $j=1$, we argue as in the proof of Theorem 3.1 in Chapter 9 of \cite{ethkur1986}. Then, the result follows by showing that $\lim_{n\rightarrow\infty}|\epsilon^{(1)}_n(x_n)|=0$ for any convergent sequence $\{x_n\}_{n\in\mathbb N}$, where $x_n\rightarrow\infty$ is allowed.
Without loss of generality, suppose that the support of $h(\cdot)$ is contained in the interval $[0,c]$, for constant $c>0$. For $v\in(0,1)$ and $x\in E_n^*\equiv E_n-\{0\}$, it folllows that $x+v(\widetilde Z_x^{(n)}-x)>x(1-v)$ and therefore the integral involved in $\epsilon^{(1)}_n(x)$ equals 0 under the region $x(1-v)>c$ ($h''(z)=0$ for $z>c$), that is $v<1-c/x$. Define $\omega_*(x)=\min\{0,1-c/x\}$ for $x>0$,  $\omega_*(0)=0$, $\omega^*(x)=\max\{1,c/x\}$ for $x>0$,  and $\omega^*(0)=1$. Hence, it follows that 
\begin{eqnarray}\label{ineq_thm1}
|\epsilon^{(1)}_n(x_n)|&=& \bigg|E\left(\int_{\omega_*(x)}^1n(\widetilde Z_x^{(n)}-x)^2(1-v)\left(h''(x+v(\widetilde Z_x^{(n)}-x))-h''(x)\right)dv\right)\bigg|\nonumber\\
&\leq&E\left(\int_{\omega_*(x)}^1n(\widetilde Z_x^{(n)}-x)^2(1-v)2\|h''\|dv\right)=nE\left((\widetilde Z_x^{(n)}-x)^2\right)\|h''\|\omega^*(x)^2.
\end{eqnarray}

Further, we have that $E\left((\widetilde Z_x^{(n)}-x)^2\right)=n^{-2}(\beta+\beta^2+\gamma_n^2x^2)+2\beta n^{-1}(\alpha_n-1)x+n^{-1}\gamma_n^2 x^2$. Consider $x_n\rightarrow 0$, then $nE\left((\widetilde Z_x^{(n)}-x_n)^2\right)\rightarrow 0$ and $\omega^*(x_n)\rightarrow 1$. These results give us that the right-hand side of (\ref{ineq_thm1}) goes to 0 as $n\rightarrow\infty$. We obtain the same conclusion when $x_n\rightarrow\infty$ since $nE\left((\widetilde Z_x^{(n)}-x_n)^2\right)=\mathcal O(x_n)$ and $\omega^*(x_n)^2=\mathcal O(x_n^{-2})$, and hence $\lim_{n\rightarrow\infty}nE\left((\widetilde Z_x^{(n)}-x)^2\right)\|h''\|\omega^*(x)^2=\lim_{n\rightarrow\infty}\mathcal O(x_n^{-1})=0$. Suppose now that $x_n\rightarrow x\in(0,\infty)$. We can establish the weak convergence of $\sqrt{n}(\widetilde Z_x^{(n)}-x)$ via its characteristic function as follows:
\begin{eqnarray*}
E\left(\exp\{it\sqrt{n}(\widetilde Z_x^{(n)}-x_n)\}\right)&=&\exp\left\{-it\sqrt{n}x_n+(\beta+\alpha_nnx_n)(e^{it/\sqrt{n}}-1)\right\}\\
&=&\exp\left\{itx_n\dfrac{\gamma_n}{\sqrt{n}}-\alpha_n x_n\dfrac{t^2}{2}+\mathcal O(n^{-3/2})\right\}\\
&\longrightarrow& \exp\{-xt^2/2\},\quad t\in\mathbb R,
\end{eqnarray*}
as $n\rightarrow\infty$. Therefore, $\sqrt{n}(\widetilde Z_x^{(n)}-x_n)\stackrel{d}{\longrightarrow}N(0,x)$. Hence, the integrand in $\epsilon^{(1)}_n(x_n)$ is bounded above by an integrable random variable. Further, this integrand converges in probability to 0 since $\widetilde Z_x^{(n)}-x_n\stackrel{p}{\longrightarrow}0$. We then apply the Dominated Convergence Theorem to conclude that $\lim_{n\rightarrow\infty}|\epsilon^{(1)}_n(x_n)|=0$.

For the case $j=2$, it follows that 
\begin{eqnarray*}
\sup_{x\in E_n}|\epsilon^{(2)}_n(x)|&=&\sup_{x\in E_n}x|h'(x)||\gamma_n-\gamma|\leq \sup_{x\in E_n}x|h'(x)|I\{0\leq x\leq c\}|\gamma_n-\gamma|\\
&\leq&c\|h'\||\gamma_n-\gamma|\rightarrow 0,
\end{eqnarray*}
as $n\rightarrow\infty$. In a similar fashion, for $j=3$, it can be shown that $\lim_{n\rightarrow\infty}\sup_{x\in E_n}|\epsilon^{(3)}_n(x)|=0$, which concludes the proof.
\end{proof}

\section{Conditional Least Squares}\label{sec:cls}

In this section, we provide the asymptotic distribution of the conditional least squares estimator of $\alpha_n$ for the nearly unstable INARCH process. The parameter $\beta$ is assumed to be known. This can be seen as a nuisance parameter since our main interest relies on the parameter $\alpha_n$ that controls the dependence in the model. In the empirical illustration, we discuss how to deal with the unknown $\beta$ case.
  
  The CLS estimator of $\alpha$ is obtained by minimizing the $Q$-function given by $Q(\alpha)=\sum_{t=2}^n(X_t-E(X_t|\mathcal F_{t-1}))^2=\sum_{t=2}^n(X_t-\beta-\alpha X_{t-1})^2$. Hence, we obtain explicitly the CLS estimator of $\alpha$, say $\widehat\alpha_n$, which is given by
    \begin{eqnarray}\label{cls}
  	\widehat\alpha_n=\dfrac{\displaystyle\sum_{t=2}^nX_{t-1}(X_t-\beta)}{\displaystyle\sum_{t=2}^nX_{t-1}^2}.
  \end{eqnarray}
  
We begin by deriving the asymptotic distribution of $\widehat\alpha_n$ under the stationary assumption, where we denote the count time series by $\{X_t\}_{t\in\mathbb N}$ (no need for the superscript $(n)$). This case will be contrasted to the nearly unstable INARCH process through simulation in the following section. 
  
\begin{theorem}\label{weak_limit_stat}
	Assume that $X_1,\ldots,X_n$ is a trajectory from a stationary Poisson INARCH(1) model, that is $\alpha_n=\alpha<1$. Then, the CLS estimator $\widehat\alpha_n$ given in (\ref{cls}) satisfies
	\begin{eqnarray*}
		\sqrt{n}(\widehat\alpha_n-\alpha)\stackrel{d}{\longrightarrow}N(0,\widetilde\sigma^2),
	\end{eqnarray*}	
as $n\rightarrow\infty$, where 
\begin{eqnarray*}
\widetilde\sigma^2=\dfrac{(1-\alpha)(1-\alpha^2)}{(1+\beta(1+\alpha))^2}\left\{1+\beta(1-\alpha)+\dfrac{\alpha(2+\beta^{-1})}{1-\alpha}-\dfrac{\alpha^2(1-\alpha)\beta^{-1}}{1-\alpha^3}+\dfrac{1+\beta(1+\alpha)}{1-\alpha}\right\}.
\end{eqnarray*}	
\end{theorem}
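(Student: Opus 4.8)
The plan is to exploit the martingale structure of the score of the $Q$-function. Writing $\lambda_t=\beta+\alpha X_{t-1}=E(X_t\mid\mathcal F_{t-1})$ and $M_t=X_t-\lambda_t$, the explicit form (\ref{cls}) gives
\[
\sqrt n(\widehat\alpha_n-\alpha)=\frac{n^{-1/2}\sum_{t=2}^nX_{t-1}M_t}{n^{-1}\sum_{t=2}^nX_{t-1}^2}.
\]
Since $E(M_t\mid\mathcal F_{t-1})=0$, the numerator summands $\{X_{t-1}M_t\}$ form a stationary, ergodic martingale difference sequence with respect to $\{\mathcal F_{t-1}\}$. I would treat numerator and denominator separately and recombine by Slutsky's theorem.

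For the denominator I would invoke ergodicity: when $\alpha<1$ the Poisson INARCH(1) chain is positive Harris recurrent (indeed geometrically ergodic), its stationary law has finite moments of all orders, and the ergodic theorem yields $n^{-1}\sum_{t=2}^nX_{t-1}^2\to m_2:=E(X_1^2)$ almost surely. For the numerator I would apply a martingale central limit theorem (e.g.\ Hall--Heyde or Billingsley). The crucial ingredient is that, because $X_t\mid\mathcal F_{t-1}$ is Poisson (so its conditional variance equals its conditional mean), the conditional variance of each increment is $E\big((X_{t-1}M_t)^2\mid\mathcal F_{t-1}\big)=X_{t-1}^2\lambda_t=X_{t-1}^2(\beta+\alpha X_{t-1})$. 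Ergodicity then gives
\[
n^{-1}\sum_{t=2}^nX_{t-1}^2(\beta+\alpha X_{t-1})\xrightarrow{\;p\;}\beta m_2+\alpha m_3,\qquad m_3:=E(X_1^3),
\]
and a conditional Lindeberg (or Lyapunov) condition follows from the finiteness of higher stationary moments. Hence the numerator converges in distribution to $N(0,\beta m_2+\alpha m_3)$, and Slutsky's theorem gives $\widetilde\sigma^2=(\beta m_2+\alpha m_3)/m_2^2$.

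It then remains to evaluate the stationary moments in closed form. Letting $t\to\infty$ in Proposition \ref{prop:moments} gives $m_1=\beta/(1-\alpha)$ and $m_2=\mbox{Var}(X_1)+m_1^2$ with $\mbox{Var}(X_1)=m_1/(1-\alpha^2)$. For the third moment I would use the Poisson conditional formula $E(X_t^3\mid\mathcal F_{t-1})=\lambda_t+3\lambda_t^2+\lambda_t^3$, take expectations, expand $\lambda_t=\beta+\alpha X_{t-1}$, and impose stationarity to obtain the linear equation
\[
m_3(1-\alpha^3)=(\beta+3\beta^2+\beta^3)+\alpha(1+6\beta+3\beta^2)\,m_1+3\alpha^2(1+\beta)\,m_2.
\]
Solving for $m_3$ and substituting $m_2,m_3$ into $(\beta m_2+\alpha m_3)/m_2^2$ should collapse to the stated expression, whose prefactor $(1-\alpha)(1-\alpha^2)/(1+\beta(1+\alpha))^2$ reflects the factorization $m_2=\beta[1+\beta(1+\alpha)]/[(1-\alpha)^2(1+\alpha)]$.

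I expect two points to require the most care. The first is the rigorous justification of ergodicity together with the finiteness of the relevant stationary moments (through $m_3$, and a fourth moment for Lindeberg), since these underpin both the convergence of the denominator and the validity of the martingale CLT. The second, purely computational, obstacle is the bookkeeping in solving for $m_3$ and reducing $(\beta m_2+\alpha m_3)/m_2^2$ to the displayed closed form; this is routine but lengthy, and the factor $1-\alpha^3$ in the denominator of $m_3$ is what produces the term $\alpha^2(1-\alpha)\beta^{-1}/(1-\alpha^3)$ in $\widetilde\sigma^2$.
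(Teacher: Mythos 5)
Your proposal is correct and, in substance, follows the same route as the paper: the paper also reduces the problem to strict stationarity and ergodicity of the chain (citing Fokianos, Rahbek and Tj{\o}stheim, 2009) plus a CLS central limit theorem, and it writes the asymptotic variance as $\widetilde\sigma^2=R/U^2$ with $U=E(X_{t-1}^2)$ and $R=\beta E(X_{t-1}^2)+\alpha E(X_{t-1}^3)$ --- precisely your $(\beta m_2+\alpha m_3)/m_2^2$. The only genuine difference is one of packaging: the paper invokes Theorem 3.2 of Tj{\o}stheim (1986) as a black box and takes the stationary moments from Wei{\ss} (2010), whereas you prove the limit directly (ergodic theorem for the denominator, martingale CLT for the numerator, Slutsky) and obtain $m_3$ from the Poisson conditional third-moment recursion; your recursion $m_3(1-\alpha^3)=(\beta+3\beta^2+\beta^3)+\alpha(1+6\beta+3\beta^2)m_1+3\alpha^2(1+\beta)m_2$ and your closed form $m_2=\beta[1+\beta(1+\alpha)]/[(1-\alpha)^2(1+\alpha)]$ are both correct. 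Your self-contained route is arguably cleaner here because the estimator is an explicit ratio, so no expansion of a criterion function is needed; the paper's route buys brevity. One simplification: since the summands $X_{t-1}M_t$ form a stationary ergodic martingale difference sequence, Billingsley's CLT applies under $E[(X_{t-1}M_t)^2]<\infty$ alone, so you do not need a Lindeberg condition or fourth stationary moments.

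A warning about your final step, however: substituting $m_2,m_3$ into $(\beta m_2+\alpha m_3)/m_2^2$ does \emph{not} collapse to the displayed expression for $\widetilde\sigma^2$ --- but the fault lies in the display, not in your derivation. Simplifying $R/U^2$ (equivalently, your formula) yields
\begin{equation*}
\frac{(1-\alpha)(1-\alpha^2)}{(1+\beta(1+\alpha))^2}\left\{\frac{\alpha(2+\beta^{-1})}{1-\alpha}-\frac{\alpha^2(1-\alpha)\beta^{-1}}{1-\alpha^3}+\frac{1+\beta(1+\alpha)}{1-\alpha}\right\},
\end{equation*}
which is the stated formula \emph{without} the term $1+\beta(1-\alpha)$ inside the braces. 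A numerical check confirms the discrepancy: at $\alpha=1/2$, $\beta=1$ one gets $m_2=20/3$, $m_3=200/7$, hence $(\beta m_2+\alpha m_3)/m_2^2=33/70\approx0.471$, while the theorem's displayed $\widetilde\sigma^2\approx0.561$; the paper's own intermediate quantities ($U=f_2(1+\beta(1+\alpha))=20/3$ and $R=440/21$) likewise give $R/U^2=33/70$. So your variance agrees exactly with what the paper's proof produces, and the $\widetilde\sigma^2$ printed in the theorem statement appears to carry a spurious additive term; carrying out the bookkeeping you deferred would have exposed this rather than confirmed the display.
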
  
  
\begin{proof}
From \cite{foketal2009}, we have that $\{X_t\}$ is strictly stationary and ergodic since $\alpha<1$. Hence, we can use Theorem 3.2 from \cite{tjo1986} to establish the asymptotic normality of the CLS estimator $\widehat\alpha_n$. The other conditions necessary to obtain this weak convergence can be straightforwardly checked in our case and therefore are omitted. Applying this theorem, we get that the asymptotic variance, say $\widetilde\sigma^2$, assumes the form $\widetilde\sigma^2=R/U^2$, with $U=E\left(\left(\dfrac{\partial E(X_t|\mathcal F_{t-1})}{\partial\alpha}\right)^2\right)=E(X_{t-1}^2)$
and $R=E\left(\left(\dfrac{\partial E(X_t|\mathcal F_{t-1})}{\partial\alpha}\right)^2 \mbox{Var}(X_t|\mathcal F_{t-1})\right)=\beta E(X_{t-1}^2)+\alpha E(X_{t-1}^3)$. Explicit expression for the marginal moments of a Poisson INARCH(1) model are given in \cite{wei2010}. Using these results and the notation considered there with $f_k\equiv \dfrac{\beta}{\prod_{i=1}^k(1-\alpha^i)}$, for $k\in\mathbb N$, we obtain $U=f_2(1+\beta(1+\alpha))$ and $R=\dfrac{\alpha f_2(1+\beta)}{1-\alpha}+\alpha f_1f_2-\alpha^2(1-\alpha)f_3+\alpha f_1f_2(1+\beta(1+\alpha))+\beta f_2(1+\beta(1+\alpha))$. Direct algebric manipulations conclude the proof.
\end{proof}

From now on assume that $\{X_t^{(n)}\}_{t\in\mathbb N}$ is a nearly unstable INARCH process as given in Definition \ref{def:n_unst}. 
Define $W_t^{(n)}=X^{(n)}_t-E(X^{(n)}_t|\mathcal F^{(n)}_{t-1})$, $X^{(n)}(s)=X^{(n)}_{\floor*{ns}}$, and $W^{(n)}(s)=\sum_{k=1}^{\floor*{ns}}W^{(n)}_k$, for $t\in\mathbb N_0$ and $s\geq0$, where $\floor*{x}$ denotes the integer-part of $x\in\mathbb R$. Like in the nearly unstable INAR process by \cite{ispetal2003}, we can express $\widehat\alpha_n-\alpha_n$ as
\begin{eqnarray}\label{useful_rep}
	\widehat\alpha_n-\alpha_n=\dfrac{\displaystyle\sum_{t=2}^nX^{(n)}_{t-1}W^{(n)}_t}{\displaystyle\sum_{t=2}^n(X^{(n)}_{t-1})^2}=\dfrac{\displaystyle\int_0^1X^{(n)}(s)dW^{(n)}(s)}{n\displaystyle\int_0^1(X^{(n)}(s))^2ds}.
\end{eqnarray}	  
  
 In the following lemma, we provide the asymptotic behavior of the autocovariance function of the process $\{W^{(n)}(s);\,\,s\geq0\}$; note that $E(W^{(n)}(s))=0$. This will be important to identify the proper normalization of $\widehat\alpha_n-\alpha_n$ in (\ref{useful_rep}) yielding a non-trivial weak limit.

\begin{lemma}\label{lem:cov_M}
	For $s,v\geq0$, we have that $\mbox{cov}(W^{(n)}(s),W^{(n)}(v))\approx n^2 C_W(s\wedge v)$, where $C_W(u)=\beta\gamma^{-2}(\gamma u+e^{-\gamma u}-1)$ for $u\geq0$, $s\wedge v=\min(s,v)$, and $a_n\approx b_n$ denoting that $\lim_{n\rightarrow\infty}a_n/b_n=1$ for real sequences $\{a_n\}$ and $\{b_n\}$.
\end{lemma}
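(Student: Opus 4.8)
The plan is to exploit the martingale-difference structure of $\{W_t^{(n)}\}$ to collapse the double sum defining the covariance into a single sum of variances, and then to evaluate this sum asymptotically using the moment formulas of Proposition \ref{prop:moments}.

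First I would record that $E(W_t^{(n)}\mid\mathcal F_{t-1}^{(n)})=X^{(n)}_t-E(X^{(n)}_t\mid\mathcal F^{(n)}_{t-1})$ has conditional mean zero, so $\{W_t^{(n)}\}$ is a sequence of martingale differences with respect to $\{\mathcal F_t^{(n)}\}$. Consequently $E(W_k^{(n)}W_j^{(n)})=0$ whenever $k\neq j$: for $k<j$ this follows by conditioning on $\mathcal F_{j-1}^{(n)}$ and pulling out the $\mathcal F_{j-1}^{(n)}$-measurable factor $W_k^{(n)}$. Since the floor function is monotone, $\floor*{ns}\wedge\floor*{nv}=\floor*{n(s\wedge v)}$, and therefore
\begin{eqnarray*}
\mbox{cov}(W^{(n)}(s),W^{(n)}(v))=\sum_{k=1}^{\floor*{n(s\wedge v)}}\mbox{Var}(W_k^{(n)}).
\end{eqnarray*}

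The next step is to identify $\mbox{Var}(W_k^{(n)})$. Because $E(W_k^{(n)})=0$ and the conditional law in (\ref{eq1}) is Poisson, the conditional variance of $W_k^{(n)}$ equals its conditional mean $\lambda_k^{(n)}$; taking expectations and using the tower property gives $\mbox{Var}(W_k^{(n)})=E(\lambda_k^{(n)})=E(X_k^{(n)})$. Proposition \ref{prop:moments} then supplies the explicit value $E(X_k^{(n)})=\beta(1-\alpha_n^k)/(1-\alpha_n)$, so the covariance becomes a geometric-type sum that I would evaluate in closed form as
\begin{eqnarray*}
\dfrac{\beta}{1-\alpha_n}\left(\floor*{n(s\wedge v)}-\alpha_n\dfrac{1-\alpha_n^{\floor*{n(s\wedge v)}}}{1-\alpha_n}\right).
\end{eqnarray*}

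Finally I would pass to the limit. Writing $u=s\wedge v$, substituting $1-\alpha_n=\gamma_n/n$, and using $\floor*{nu}\sim nu$, $\gamma_n\to\gamma$, together with the elementary limit $\alpha_n^{\floor*{nu}}=(1-\gamma_n/n)^{\floor*{nu}}\to e^{-\gamma u}$, shows that the displayed expression is asymptotically $\beta\gamma^{-2}n^2(\gamma u+e^{-\gamma u}-1)=n^2C_W(u)$, which is exactly the claim. The only delicate point is this last limit: I must control $(1-\gamma_n/n)^{\floor*{nu}}$ when both $\gamma_n$ and the exponent vary with $n$, which I would handle by taking logarithms and invoking $\gamma_n\to\gamma$ and $\floor*{nu}/n\to u$; everything else is routine bookkeeping of finite geometric sums.
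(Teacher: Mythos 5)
Your proof is correct and arrives at exactly the paper's intermediate expressions, but you compute the key quantity $\mbox{Var}(W_k^{(n)})$ by a different route. The paper expands $\mbox{Var}(W_k^{(n)})=\mbox{Var}(X^{(n)}_k)+\alpha_n^2\mbox{Var}(X^{(n)}_{k-1})-2\alpha_n\mbox{cov}(X^{(n)}_k,X^{(n)}_{k-1})$, then uses the identity $\mbox{cov}(X^{(n)}_k,X^{(n)}_{k-1})=\alpha_n\mbox{Var}(X^{(n)}_{k-1})$ and the explicit variance formula of Proposition \ref{prop:moments} to reach $\beta(1-\alpha_n^k)/(1-\alpha_n)$. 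You instead apply the law of total variance: since $E(W_k^{(n)}\mid\mathcal F^{(n)}_{k-1})=0$, you get $\mbox{Var}(W_k^{(n)})=E\left(\mbox{Var}(X_k^{(n)}\mid\mathcal F^{(n)}_{k-1})\right)=E(\lambda_k^{(n)})=E(X_k^{(n)})$, exploiting that the conditional Poisson law has variance equal to its mean; this requires only the marginal-mean formula from Proposition \ref{prop:moments}, not the variance or covariance. Both routes give the same $\mbox{Var}(W_k^{(n)})=\beta(1-\alpha_n^k)/(1-\alpha_n)$, after which the two proofs coincide (orthogonality of the increments, geometric summation, and the limit $\alpha_n^{\floor*{nu}}\rightarrow e^{-\gamma u}$). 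Your version is slightly more economical and isolates exactly where the Poisson assumption enters: under a different conditional law (e.g.\ negative binomial, as contemplated in Section \ref{conclusions}), your argument shows that only $E\left(\mbox{Var}(X_k^{(n)}\mid\mathcal F^{(n)}_{k-1})\right)$ needs recomputing. You also make explicit two points the paper leaves implicit, namely the martingale-difference argument behind $\mbox{cov}(W^{(n)}_k,W^{(n)}_l)=0$ for $k\neq l$, and the identity $\floor*{ns}\wedge\floor*{nv}=\floor*{n(s\wedge v)}$. One small slip in your write-up: the displayed relation in your first paragraph should be the definition $W_t^{(n)}=X^{(n)}_t-E(X^{(n)}_t\mid\mathcal F^{(n)}_{t-1})$ rather than an equation for $E(W_t^{(n)}\mid\mathcal F_{t-1}^{(n)})$; the property your argument actually uses, $E(W_t^{(n)}\mid\mathcal F^{(n)}_{t-1})=0$, then follows immediately.
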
    
  
\begin{proof}  It is straightforward that $E(W^{(n)}(s))=0$ and $\mbox{cov}(W^{(n)}_k,W^{(n)}_l)=0$ for $k\neq l$. Further, $\mbox{Var}(W^{(n)}_k)=\mbox{Var}(X^{(n)}_k)+\alpha_n^2\mbox{Var}(X^{(n)}_{k-1})-2\alpha_n\mbox{cov}(X^{(n)}_k,X^{(n)}_{k-1})=\mbox{Var}(X^{(n)}_k)-\alpha_n^2\mbox{Var}(X^{(n)}_{k-1})$, where the last equality follows from the expression of the covariance given in Proposition \ref{prop:moments}. After using the expression of the variance given in that lemma, we obtain that
	$\mbox{Var}(W^{(n)}_k)=\beta\dfrac{1-\alpha_n^k}{1-\alpha_n}$.
	
	From the above results and Proposition \ref{prop:moments}, we obtain that
	\begin{eqnarray*}
		\mbox{cov}(W^{(n)}(s),W^{(n)}(v))&=&\sum_{k=1}^{{\floor*{ns}}\wedge{\floor*{nv}}}\mbox{Var}(W^{(n)}_k)=\dfrac{\beta}{1-\alpha_n}\left\{{\floor*{ns}}\wedge{\floor*{nv}}-\alpha_n\dfrac{1-\alpha_n^{{\floor*{ns}}\wedge{\floor*{nv}}}}{1-\alpha_n}\right\}\\
		&\approx& n^2 \beta\gamma^{-2}(\gamma u+e^{-\gamma u}-1)=n^2C_W(s\wedge v).
	\end{eqnarray*}	 
\end{proof}  
  
Lemma \ref{lem:cov_M} and Theorem \ref{mainthm} give us that $	\widehat\alpha_n-\alpha_n=\mathcal O_p(n^{-1})$. We now are able to establish the asymptotic distribution of the CLS estimator $\widehat\alpha_n$ under the nearly unstable INARCH process as follows.  
  
\begin{theorem}\label{weak_limit_nonstat}
Let $\{\mathcal X(t);\,\,t\geq0\}$ be the diffusion process given in (\ref{diffusion}). Then, the CLS estimator $\widehat\alpha_n$ satisfy the following weak convergence
\begin{eqnarray}\label{cls_weak_conv}
n(\widehat\alpha_n-\alpha_n)\stackrel{d}{\longrightarrow}\dfrac{\displaystyle\int_0^1\mathcal X(t)d\mathcal W(t)}{\displaystyle\int_0^1\mathcal X(t)^2dt}=\dfrac{\displaystyle\int_0^1\mathcal X(t)^{3/2}dB(t)}{\displaystyle\int_0^1\mathcal X(t)^2dt},
\end{eqnarray}
as $n\rightarrow\infty$, where $d\mathcal W(t)=\sqrt{\mathcal X(t)}dB(t)$, for $t>0$, with $\mathcal W(0)=0$.
\end{theorem}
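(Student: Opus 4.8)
The plan is to carry the representation (\ref{useful_rep}) over to the normalized scale. Writing $\mathcal X^{(n)}(s)=X^{(n)}(s)/n$ and $\mathcal W^{(n)}(s)\equiv W^{(n)}(s)/n$, the powers of $n$ in numerator and denominator cancel and
\[
n(\widehat\alpha_n-\alpha_n)=\frac{\int_0^1\mathcal X^{(n)}(s)\,d\mathcal W^{(n)}(s)}{\int_0^1(\mathcal X^{(n)}(s))^2\,ds}.
\]
It therefore suffices to prove the joint weak convergence of numerator and denominator to $\int_0^1\mathcal X(t)\,d\mathcal W(t)$ and $\int_0^1\mathcal X(t)^2\,dt$, and that the limiting denominator is almost surely strictly positive, after which the continuous mapping theorem applied to the ratio delivers (\ref{cls_weak_conv}); substituting $d\mathcal W=\sqrt{\mathcal X}\,dB$ gives the second expression.

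The central step is to promote the marginal convergence of Theorem \ref{mainthm} to joint convergence of $(\mathcal X^{(n)},\mathcal W^{(n)})$ and to identify the limit $\mathcal W$. From $X^{(n)}_k-X^{(n)}_{k-1}=W^{(n)}_k+\beta-(\gamma_n/n)X^{(n)}_{k-1}$, summing over $k\le\floor*{ns}$ and dividing by $n$ yields the decomposition
\[
\mathcal W^{(n)}(s)=\mathcal X^{(n)}(s)-\beta\frac{\floor*{ns}}{n}+\gamma_n\int_0^s\mathcal X^{(n)}(u)\,du+o_p(1),
\]
where the remainder collects $\kappa/n$ and a Riemann-sum error, both uniformly negligible on $[0,1]$. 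Since the map $x\mapsto\int_0^\cdot x(u)\,du$ is continuous in the Skorokhod topology, the right-hand side exhibits $\mathcal W^{(n)}$ as a continuous functional of $\mathcal X^{(n)}$ up to negligible terms, so the continuous mapping theorem together with Theorem \ref{mainthm} gives $(\mathcal X^{(n)},\mathcal W^{(n)})\Rightarrow(\mathcal X,\mathcal W)$ with $\mathcal W(s)=\mathcal X(s)-\beta s+\gamma\int_0^s\mathcal X(u)\,du=\int_0^s\sqrt{\mathcal X(u)}\,dB(u)$, the last identity being the integral form of (\ref{diffusion}). As a check, $\mbox{cov}(\mathcal W(s),\mathcal W(v))=\int_0^{s\wedge v}E(\mathcal X(u))\,du=C_W(s\wedge v)$ reproduces Lemma \ref{lem:cov_M}.

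The denominator is routine: $x\mapsto\int_0^1 x(s)^2\,ds$ is continuous at continuous limit points, so it converges to $\int_0^1\mathcal X^2\,dt$, which is almost surely positive because the CIR process started at $0$ is nonnegative and not identically zero. The numerator is the delicate term, since the integrand $\mathcal X^{(n)}$ and the martingale integrator $\mathcal W^{(n)}$ converge simultaneously and the plain continuous mapping theorem does not apply to stochastic integrals. Here I would invoke a convergence theorem for stochastic integrals of Kurtz--Protter type: given the joint convergence just established and a uniform tightness (UT) condition on the integrators $\mathcal W^{(n)}$, one obtains $\int_0^1\mathcal X^{(n)}\,d\mathcal W^{(n)}\Rightarrow\int_0^1\mathcal X\,d\mathcal W$ jointly with the denominator. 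The UT condition is checked through the predictable quadratic variation
\[
\langle\mathcal W^{(n)}\rangle(s)=\frac1{n^2}\sum_{k=1}^{\floor*{ns}}\mbox{Var}(X^{(n)}_k\mid\mathcal F^{(n)}_{k-1})=\frac1{n^2}\sum_{k=1}^{\floor*{ns}}(\beta+\alpha_nX^{(n)}_{k-1}),
\]
which converges to $\int_0^s\mathcal X(u)\,du=\langle\mathcal W\rangle(s)$ and remains tight, a conditional Lindeberg condition for the Poisson increments supplying the remaining regularity.

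The main obstacle is precisely this numerator. Controlling the predictable quadratic variation together with the (conditionally Poisson) jump sizes uniformly on $[0,1]$, so as to legitimize the limit exchange in the stochastic integral, is where the genuine work lies; everything else reduces to the continuous mapping theorem. A secondary point that must not be overlooked is the strict positivity of $\int_0^1\mathcal X^2\,dt$, which guarantees that the ratio map is almost surely continuous at the limit and hence that the joint convergence of numerator and denominator transfers to the quotient.
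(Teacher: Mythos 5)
Your proposal is correct and structurally identical to the paper's proof: the same normalized representation of $n(\widehat\alpha_n-\alpha_n)$, the same telescoping identity writing $\mathcal W^{(n)}$ as $\mathcal X^{(n)}+\gamma_n\int_0^{\cdot}\mathcal X^{(n)}(u)\,du-\beta\,(\cdot)$ up to $O(1/n)$ terms, the same identification $\mathcal W(s)=\int_0^s\sqrt{\mathcal X(u)}\,dB(u)$ from the integral form of (\ref{diffusion}), and a final continuous-mapping step for the ratio. Where you genuinely part ways with the paper is the numerator. The paper passes from the joint convergence $(\mathcal X^{(n)},\mathcal W^{(n)})\Rightarrow(\mathcal X,\mathcal W)$ (obtained via the functionals $\Phi_n$, $\Phi$ and the arguments of Proposition 4.1 of Isp\'any et al., 2003) to $\int_0^1\mathcal X^{(n)}\,d\mathcal W^{(n)}\Rightarrow\int_0^1\mathcal X\,d\mathcal W$ by invoking the continuous mapping theorem, while you observe---correctly---that $(x,w)\mapsto\int_0^1 x_{-}\,dw$ is not a continuous functional on Skorokhod space (it is not even defined pathwise at the limit, since $\mathcal W$ has infinite variation), and you substitute a Kurtz--Protter-type stochastic-integral convergence theorem, verifying uniform tightness of the integrators through $\langle\mathcal W^{(n)}\rangle(s)=n^{-2}\sum_{k\leq ns}(\beta+\alpha_n X^{(n)}_{k-1})$, whose expectation is $O(1)$ and which converges to $\int_0^s\mathcal X(u)\,du$. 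This is exactly the step that the paper's write-up compresses, so your route is the more rigorous one at the only delicate point of the argument; what the paper's formulation buys is brevity, outsourcing the martingale technicalities to the cited reference. Your two side remarks---the almost-sure strict positivity of $\int_0^1\mathcal X(t)^2\,dt$ needed for continuity of the ratio map (true, since the CIR path with $\beta>0$ is continuous, nonnegative, and not identically zero on any interval), and the covariance cross-check $\mathrm{cov}(\mathcal W(s),\mathcal W(v))=\int_0^{s\wedge v}E(\mathcal X(u))\,du=C_W(s\wedge v)$ against Lemma \ref{lem:cov_M}---are both correct and go beyond what the paper states explicitly.
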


\begin{proof}
Define $\mathcal W^{(n)}(s)=W^{(n)}(s)/n$, for $s>0$. We have that $$n(\widehat\alpha_n-\alpha_n)=\dfrac{\displaystyle\int_0^1X^{(n)}(s)dW^{(n)}(s)}{\displaystyle\int_0^1(X^{(n)}(s))^2ds}=\dfrac{\displaystyle\int_0^1\mathcal X^{(n)}(s)d\mathcal W^{(n)}(s)}{\displaystyle\int_0^1(\mathcal X^{(n)}(s))^2ds},$$ where both numerator and denominator have the same order of magnitude $\mathcal O_p(n^{2})$.
	
For $s>0$, it follows that 	
\begin{eqnarray*}
W^{(n)}(s)=\sum_{k=1}^{\floor*{ns}}\left(X^{(n)}_k-\beta-\alpha_nX^{(n)}_{k-1}\right)=X^{(n)}_{\floor*{ns}}+\dfrac{\gamma_n}{n}\sum_{k=1}^{\floor*{ns}}X^{(n)}_{k-1}-\beta\floor*{ns}-\kappa,
\end{eqnarray*}	
and then $\mathcal W^{(n)}(s)$ can be expressed by
\begin{eqnarray*}
\mathcal W^{(n)}(s)=\mathcal X^{(n)}\left(\frac{\floor*{ns}}{n}\right)+\gamma_n\displaystyle\int_0^{\frac{\floor*{ns}}{n}}\mathcal X^{(n)}(u)du-\beta\frac{\floor*{ns}}{n}-\frac{\kappa}{n}.
\end{eqnarray*}	

Define the functions $\Phi_n$ ($n=1,2,\ldots$) and $\Phi$ mapping $D^+[0,\infty)$ into $D(\mathbb R_+,\mathbb R^2)$ as $\Phi_n(x)(s)=\bigg(x(s),\\x\left(\frac{\floor*{ns}}{n}\right)+\gamma_n\displaystyle\int_0^{\frac{\floor*{ns}}{n}}x(u)du-\beta\frac{\floor*{ns}}{n}-\frac{\kappa}{n}\bigg)$ and $\Phi(x)(s)=\bigg(x(s),x(s)+\gamma\displaystyle\int_0^{s}x(u)du-\beta s\bigg)$. Hence, it follows that $(\mathcal X^{(n)}(s),\mathcal W^{(n)}(s))=\Phi_n(\mathcal X^{(n)})(s)$. Using the fact that the CIR process has almost sure continuous trajectories and similar arguments given in the proof of Proposition 4.1 of \cite{ispetal2003}, we obtain that $\Phi_n(\mathcal X^{(n)})$ weakly converges to $\Phi(\mathcal X)$ as $n\rightarrow\infty$. 

In particular, we have that $\mathcal W^{(n)}(s)$ weakly converges to $\mathcal W(s)=\mathcal X(s)+\gamma\displaystyle\int_0^{s}\mathcal X(u)du-\beta s$. From the definition of $\mathcal X$, we have that $\gamma\displaystyle\int_0^{s}\mathcal X(u)du=-\mathcal X(s)+\beta s+\displaystyle\int_0^s\sqrt{\mathcal X (u)}dB(u)$ and, therefore, $\mathcal W(s)=\displaystyle\int_0^s\sqrt{\mathcal X (u)}dB(u)$. In other words, $d\mathcal W(t)=\sqrt{\mathcal X(t)}dB(t)$. The above results and the continuous mapping theorem give us that $\displaystyle\int_0^1\mathcal X^{(n)}(s)d\mathcal W^{(n)}(s)\stackrel{d}{\longrightarrow}\displaystyle\int_0^1\mathcal X(s)d\mathcal W(s)=\displaystyle\int_0^1\mathcal X(s)^{3/2}dB(s)$.

The above arguments are straightforwardly extended to establish the joint weak convergence 
\begin{eqnarray*}
\left(\mathcal X^{(n)}(s),\mathcal W^{(n)}(s),\int_0^1(\mathcal X^{(n)}(u))^2du\right)\Rightarrow\left(\mathcal X(s),\mathcal W(s),\int_0^1(\mathcal X(u))^2du\right)
\end{eqnarray*}
in $D(\mathbb R_+,\mathbb R^3)$ as $n\rightarrow\infty$. Then, the desired result given in (\ref{cls_weak_conv}) is obtained by applying the continuous mapping theorem.
\end{proof}

\section{Simulated Experiments}\label{sec:sim}
  
In this section, we present simulated results illustrating the behavior of the asymptotic distributions of the normalized CLS estimator under the nearly unstable and stable cases. All the numerical results of this paper were obtained by using the statistical software \texttt{R} \citep{R2021}. We conduct Monte Carlo simulations with 10000 replications, where we generate Poisson INARCH(1) trajectories with $\beta=1$, $\alpha=0.98,0.99,0.999$, and initially a sample size of $n=500$. Note that the chosen values for $\alpha$ here indicate nearly unstable count processes. For each replication, we compute the CLS estimate of $\alpha$ using (\ref{cls}) and then its standardized estimate as $n(\widehat\alpha_n-\alpha)$ and $\sqrt{n}(\widehat\alpha_n-\alpha)$ according to the nearly unstable (Theorem \ref{weak_limit_nonstat}) and stable/stationary (Theorem \ref{weak_limit_stat}) cases, respectively.
  
A generator from the asymptotic distribution given on the right-hand side of (\ref{cls_weak_conv}) was implemented, where the stochastic integrals are approximately evaluated via type-Riemann integrals. Hence, for instance, we can obtain its quantiles and also plot the associated density function by generating samples and then applying a non-parametric density estimator (here the Gaussian kernel is considered), which are important for what follows. We present the histograms and qq-plots of the standardized CLS estimates along with their associated asymptotic density/quantiles under the stable and nearly unstable cases in Figures \ref{fig:STAT_alpha_n500} and \ref{fig:NONSTAT_alpha_n500}, respectively. From Figure \ref{fig:STAT_alpha_n500}, it is evident that the normal approximation is not adequate and it is worsening when $\alpha$ gets closer to 1, which is expected since these results are based on stationarity. On the other hand, the histograms and qq-plots regarding the nearly unstable approximation given in Figure \ref{fig:NONSTAT_alpha_n500} show an excellent agreement between the empirical standardized estimates and the theoretical asymptotic distribution for all scenarios.

\begin{figure}
\begin{center}
\includegraphics[width=5.4cm]{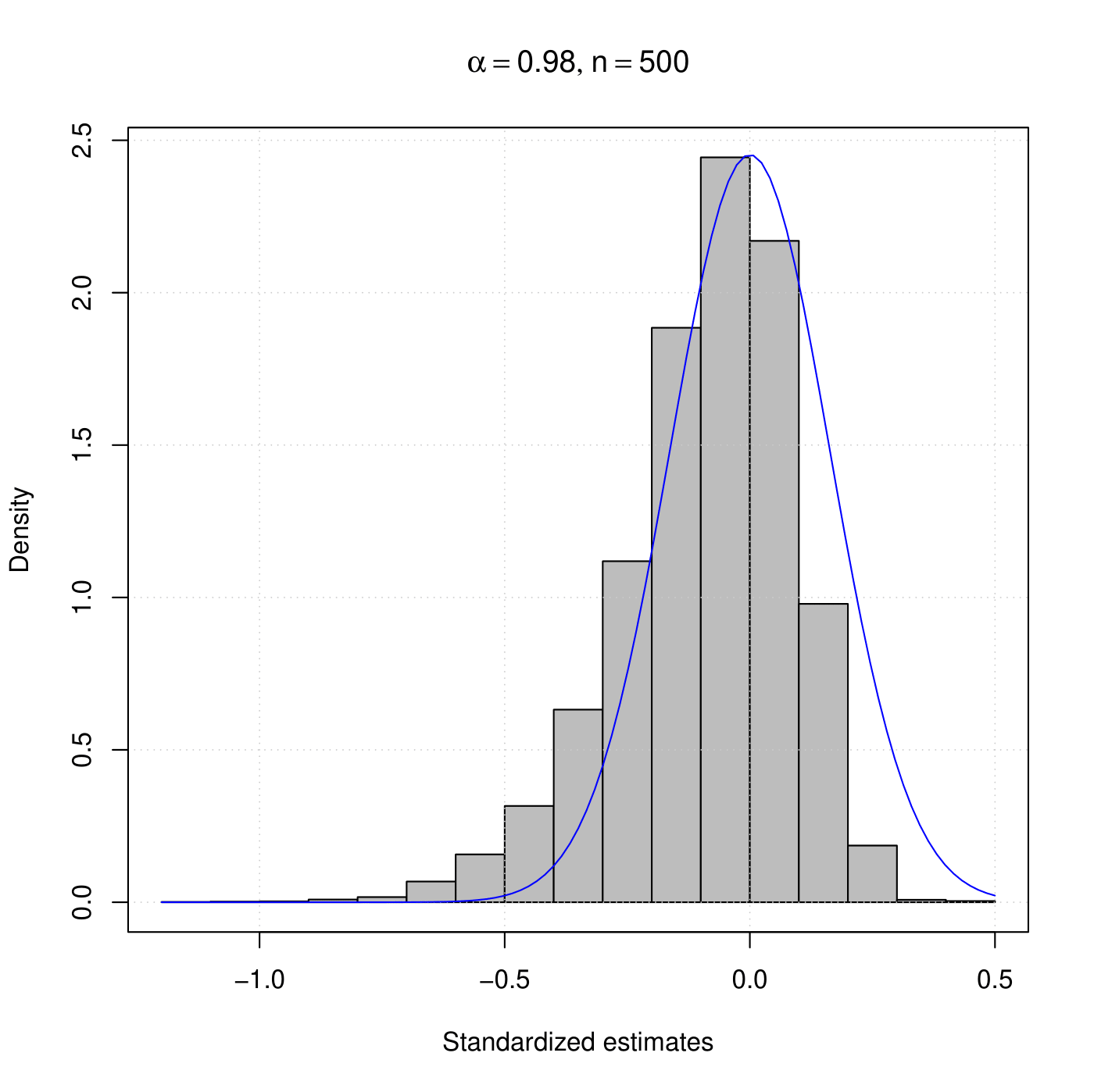}\includegraphics[width=5.4cm]{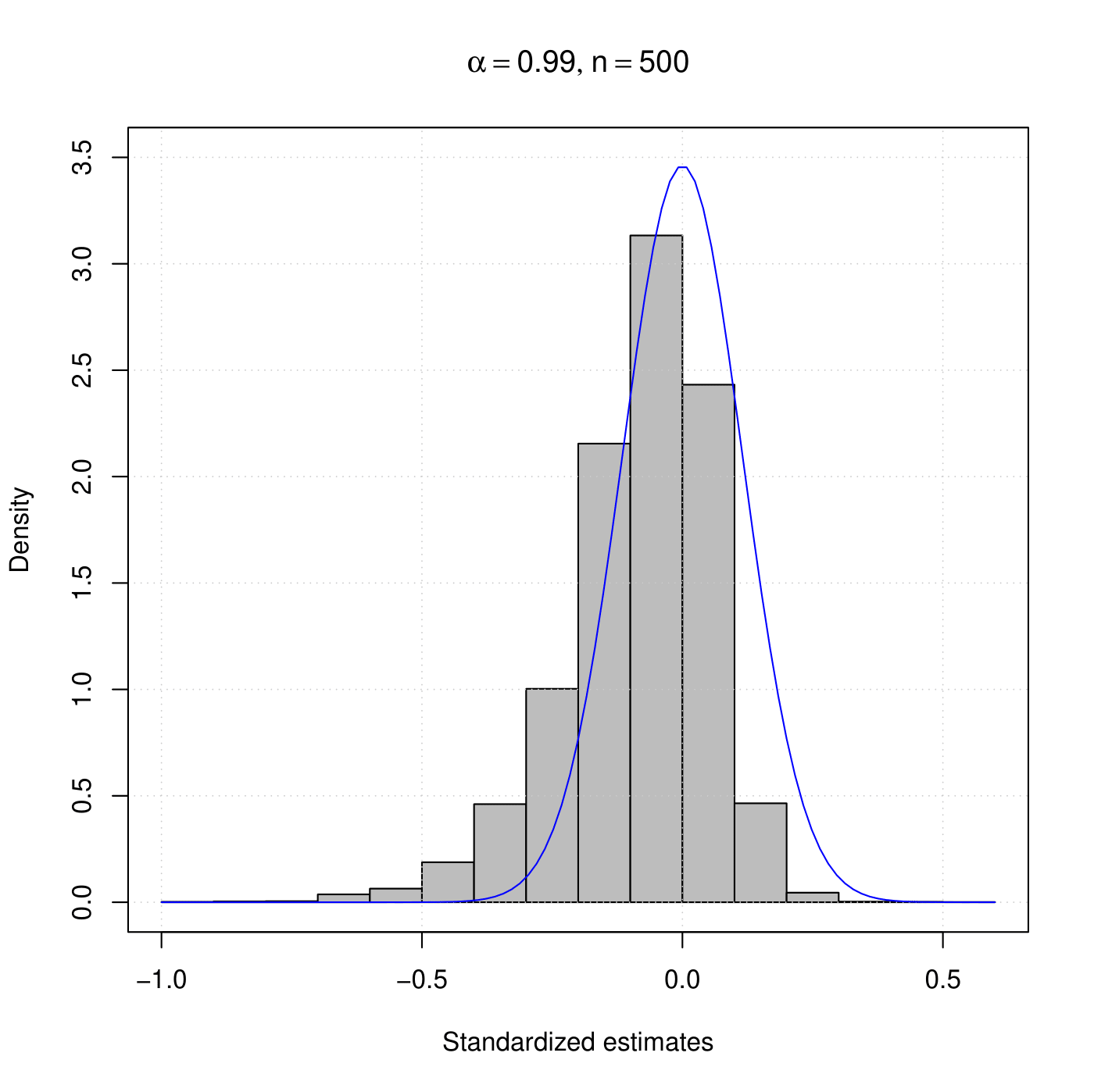}\includegraphics[width=5.4cm]{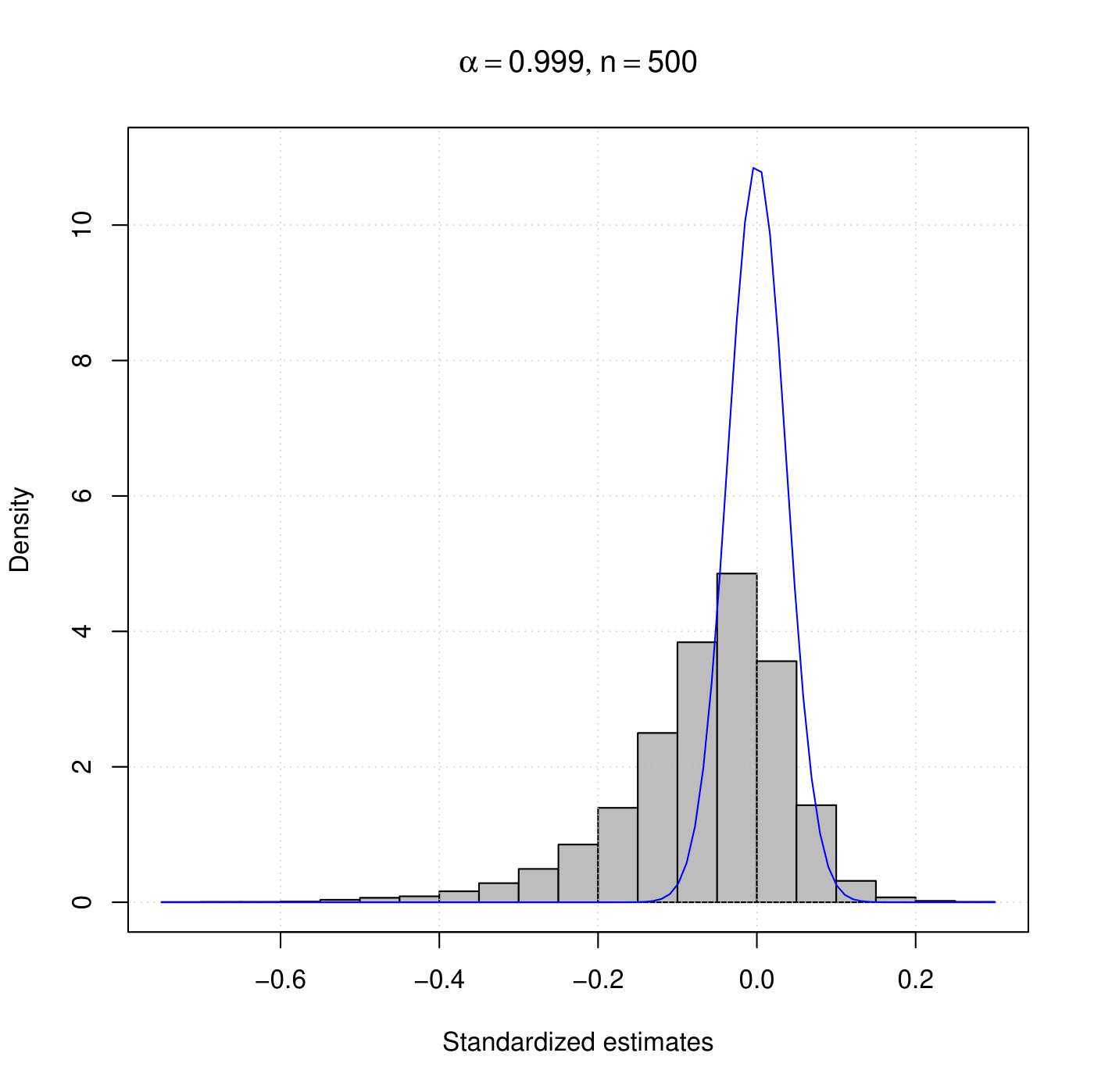}
\includegraphics[width=5.4cm]{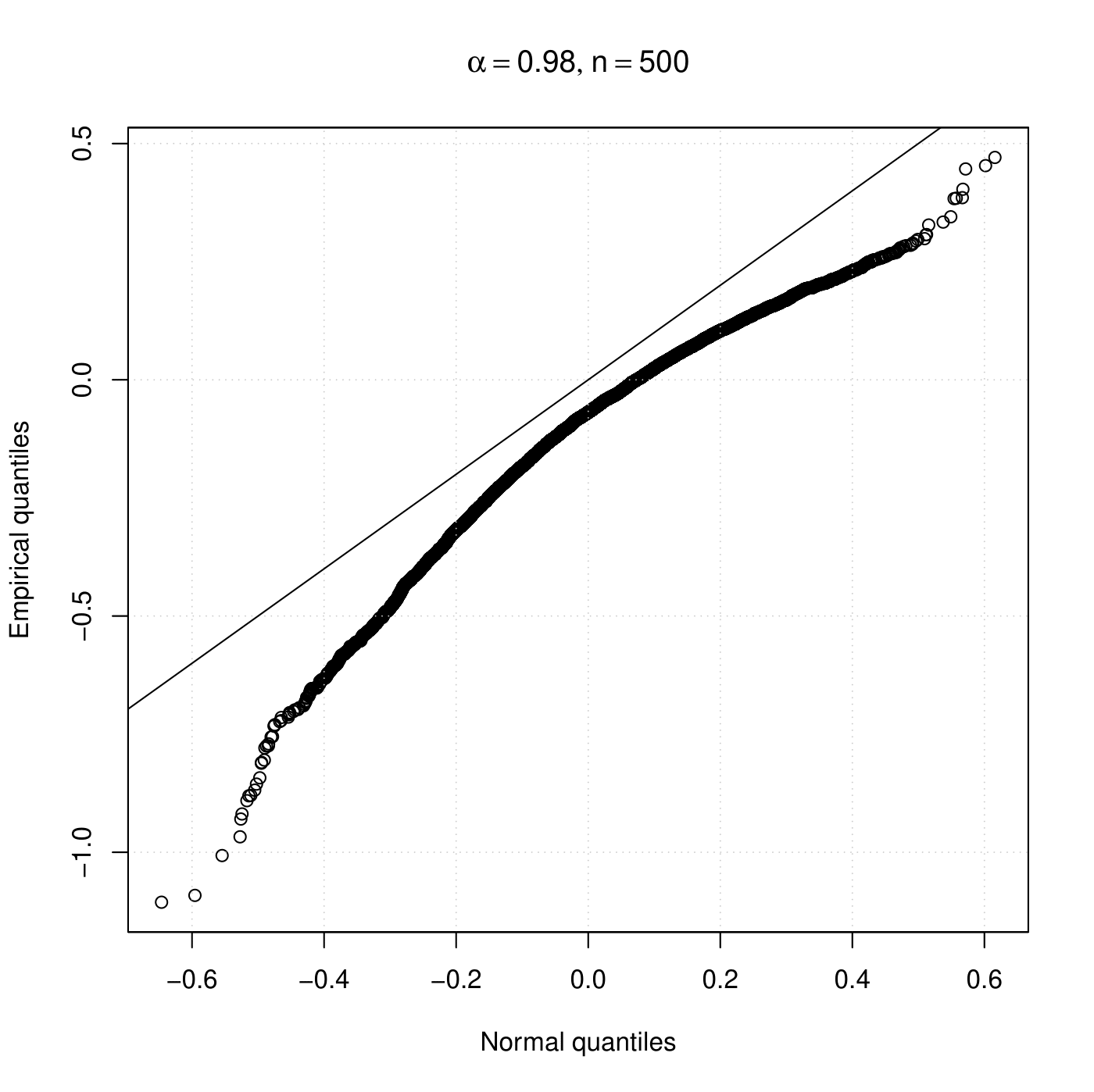}\includegraphics[width=5.4cm]{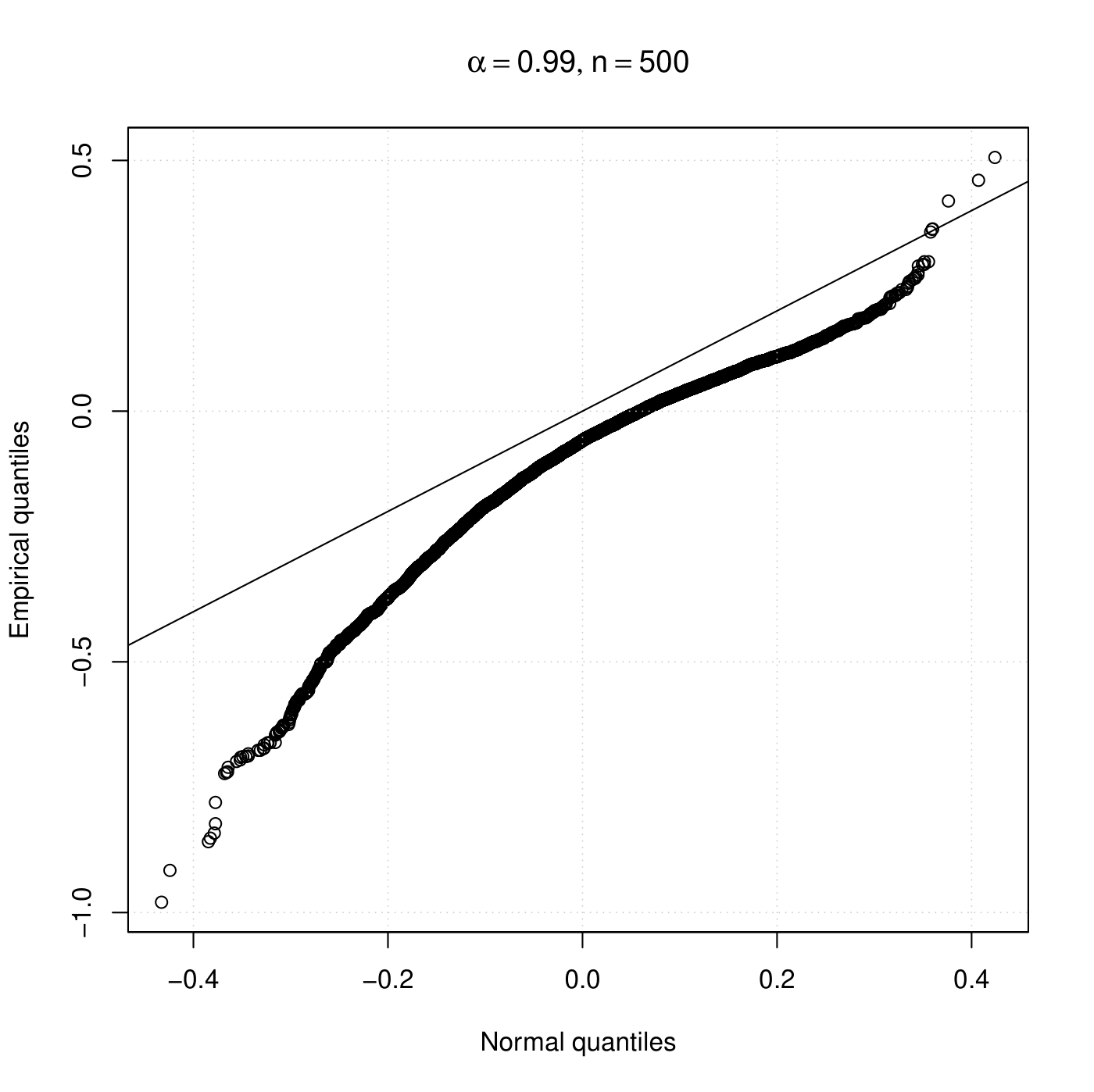}\includegraphics[width=5.4cm]{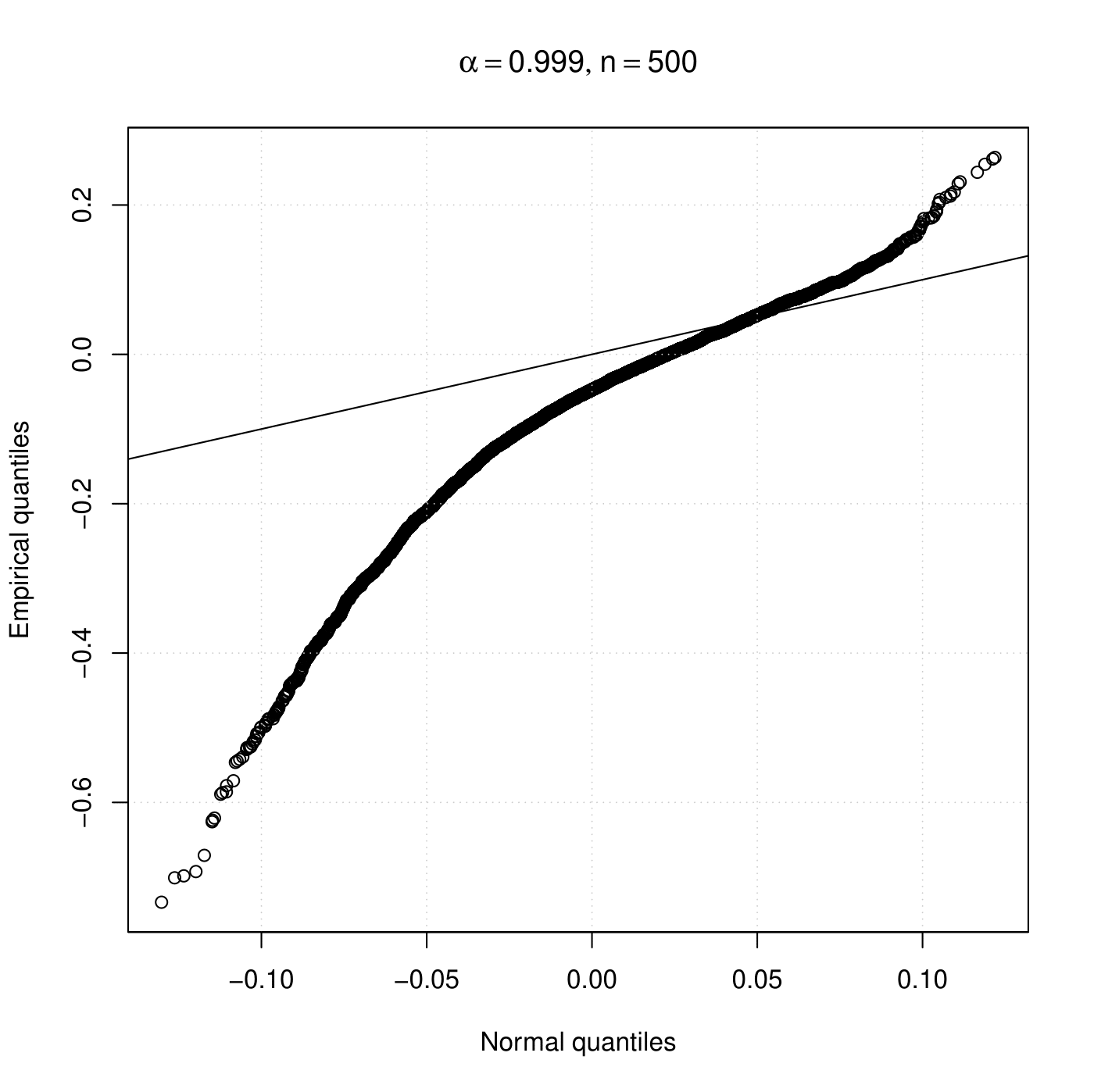}  	
\caption{Histograms and qq-plots of the standardized estimates $\sqrt{n}(\widehat\alpha_n-\alpha)$ for $\alpha=0.98$, $\alpha=0.99$, and $\alpha=0.999$, along with their associated limiting normal density/quantiles given in Theorem \ref{weak_limit_stat} (under stationarity). The sample size is $n=500$.}\label{fig:STAT_alpha_n500}
\includegraphics[width=5.4cm]{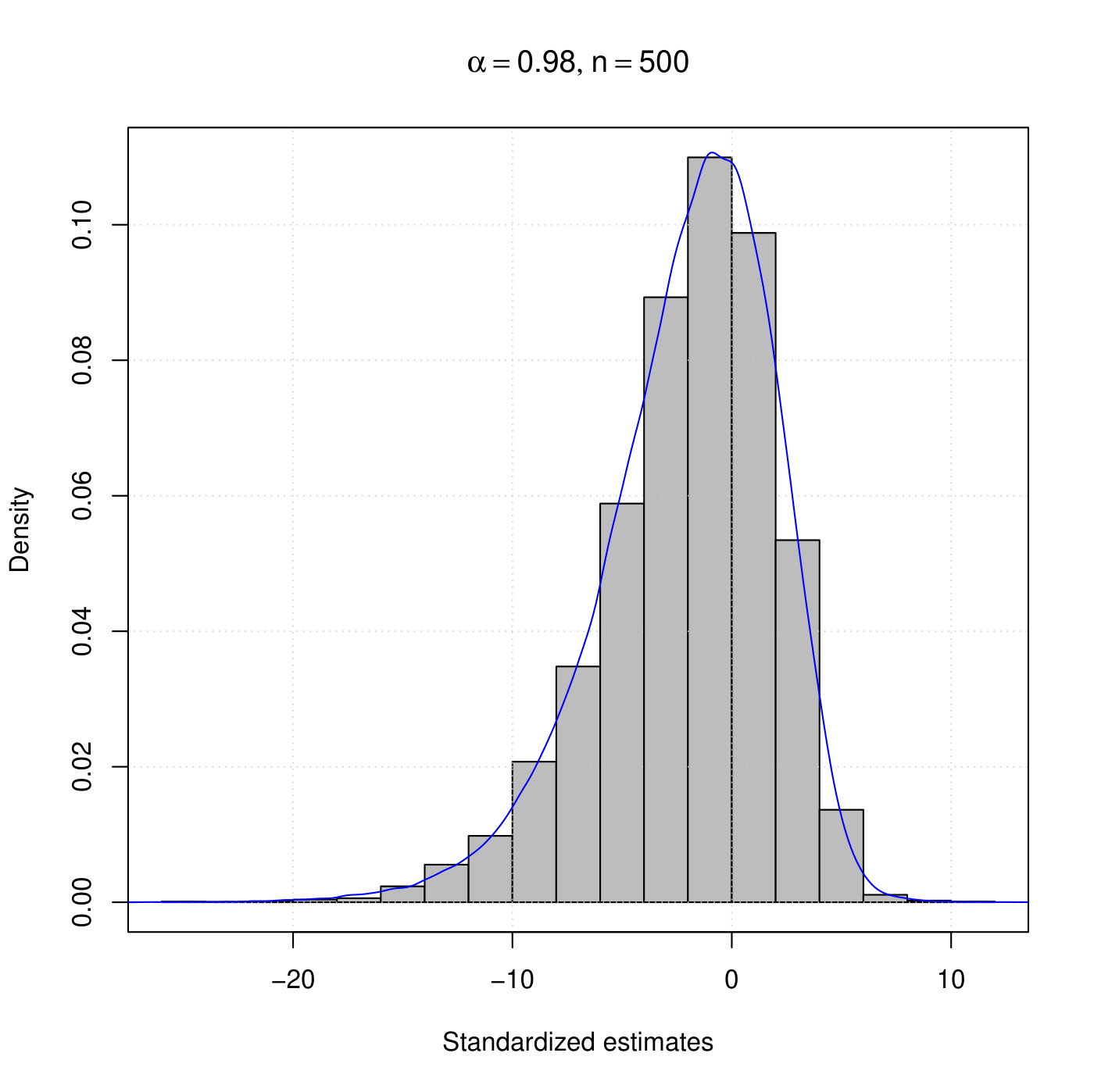}\includegraphics[width=5.4cm]{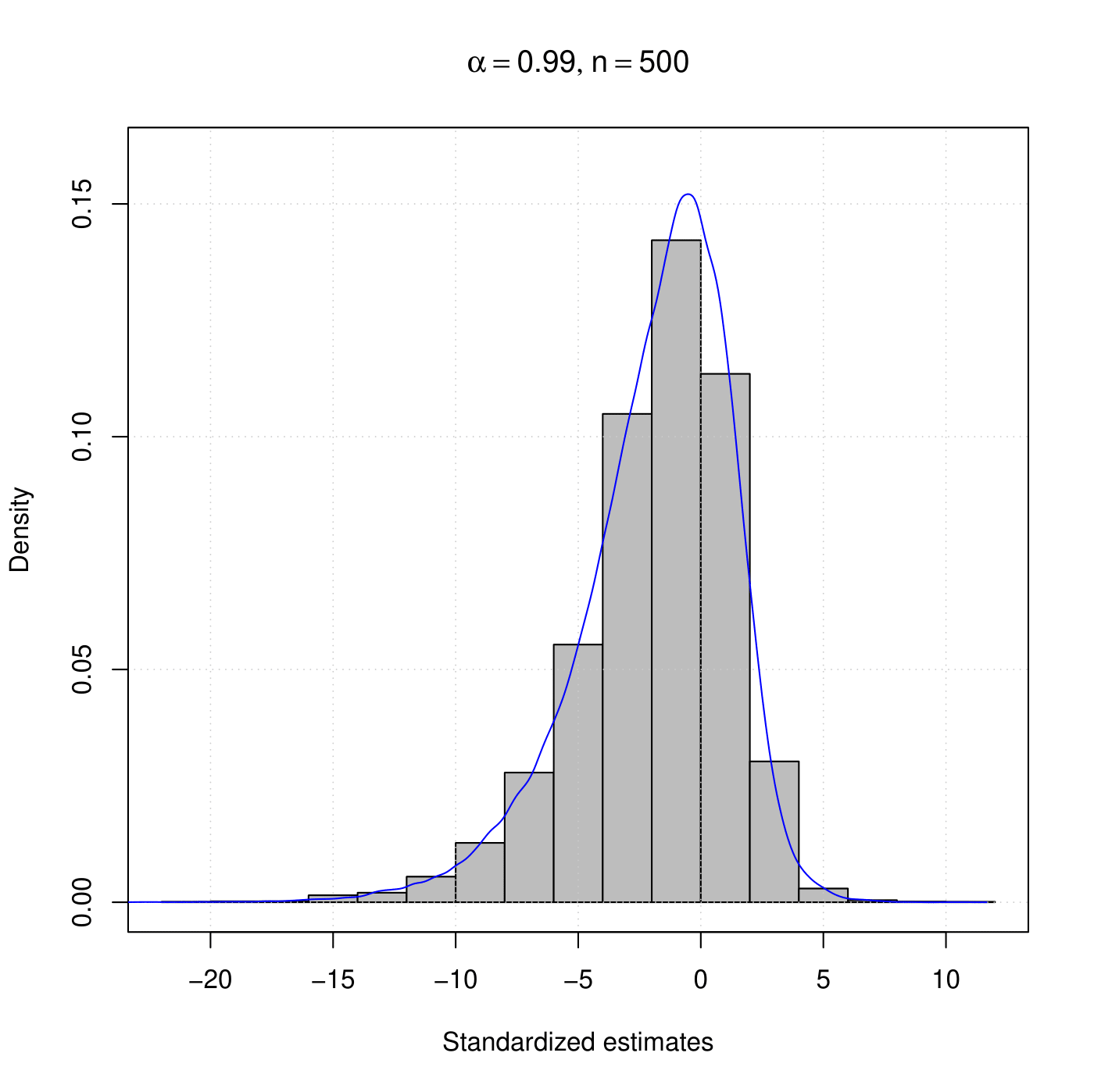}\includegraphics[width=5.4cm]{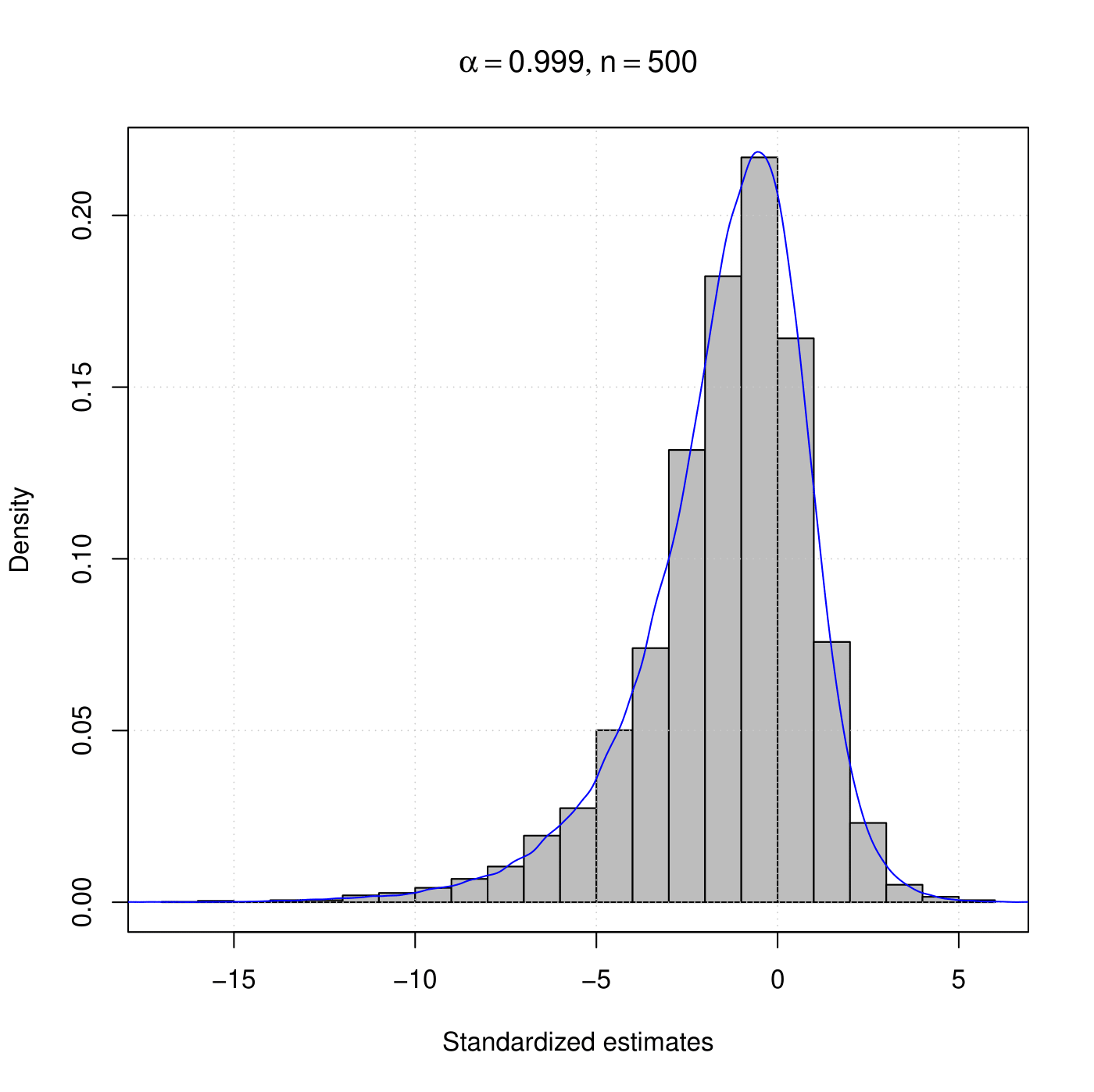}
\includegraphics[width=5.4cm]{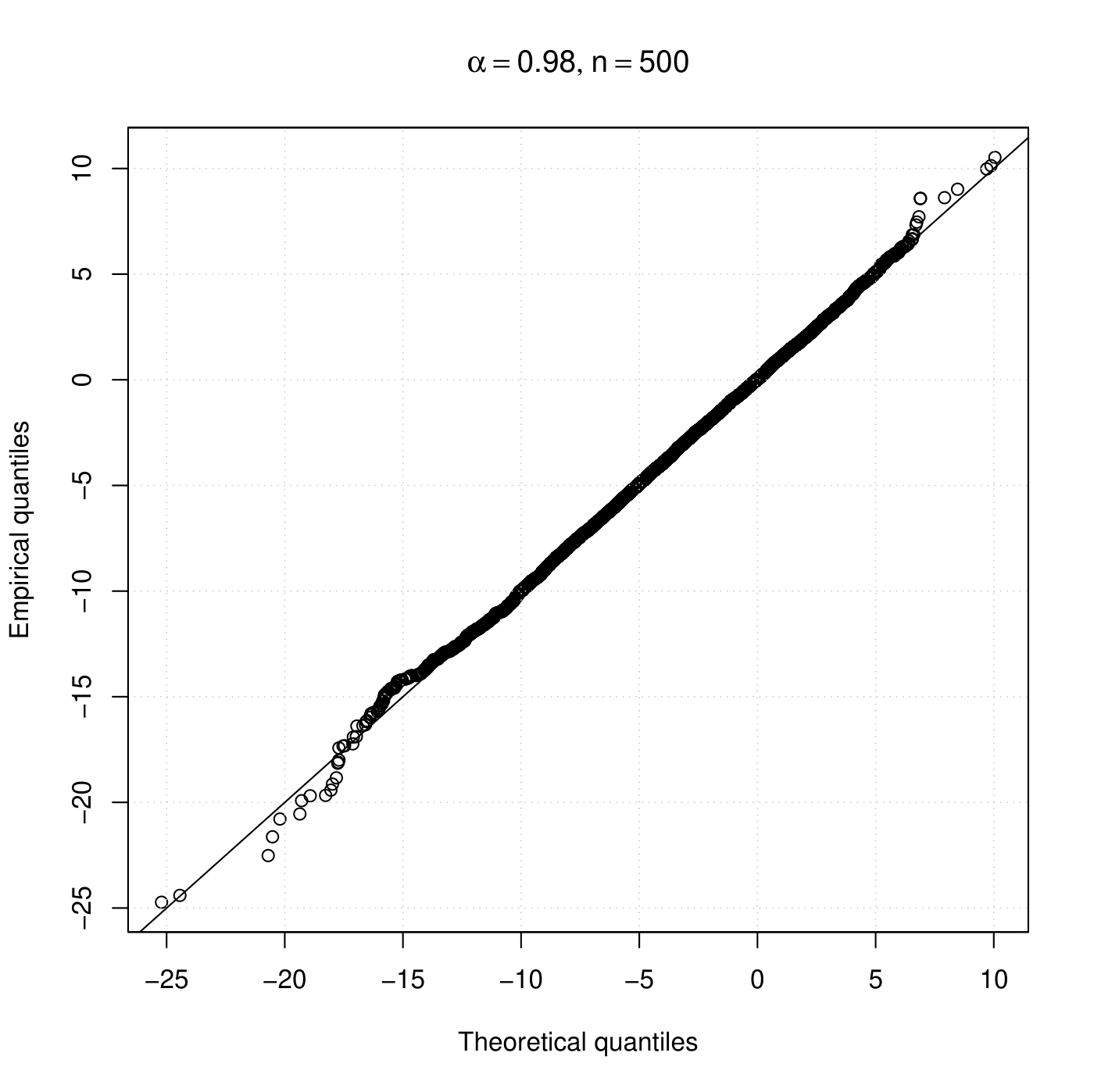}\includegraphics[width=5.4cm]{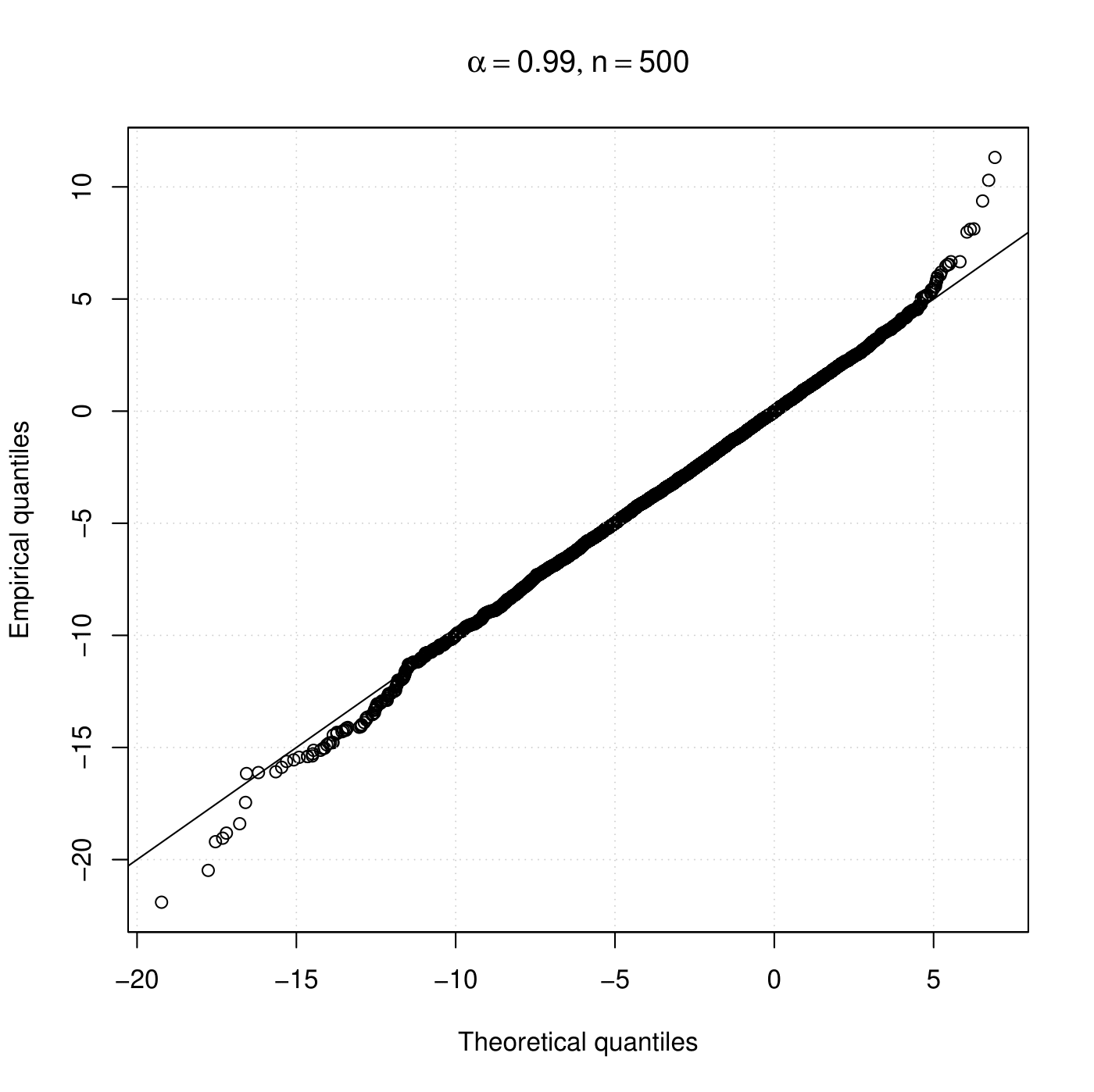}\includegraphics[width=5.4cm]{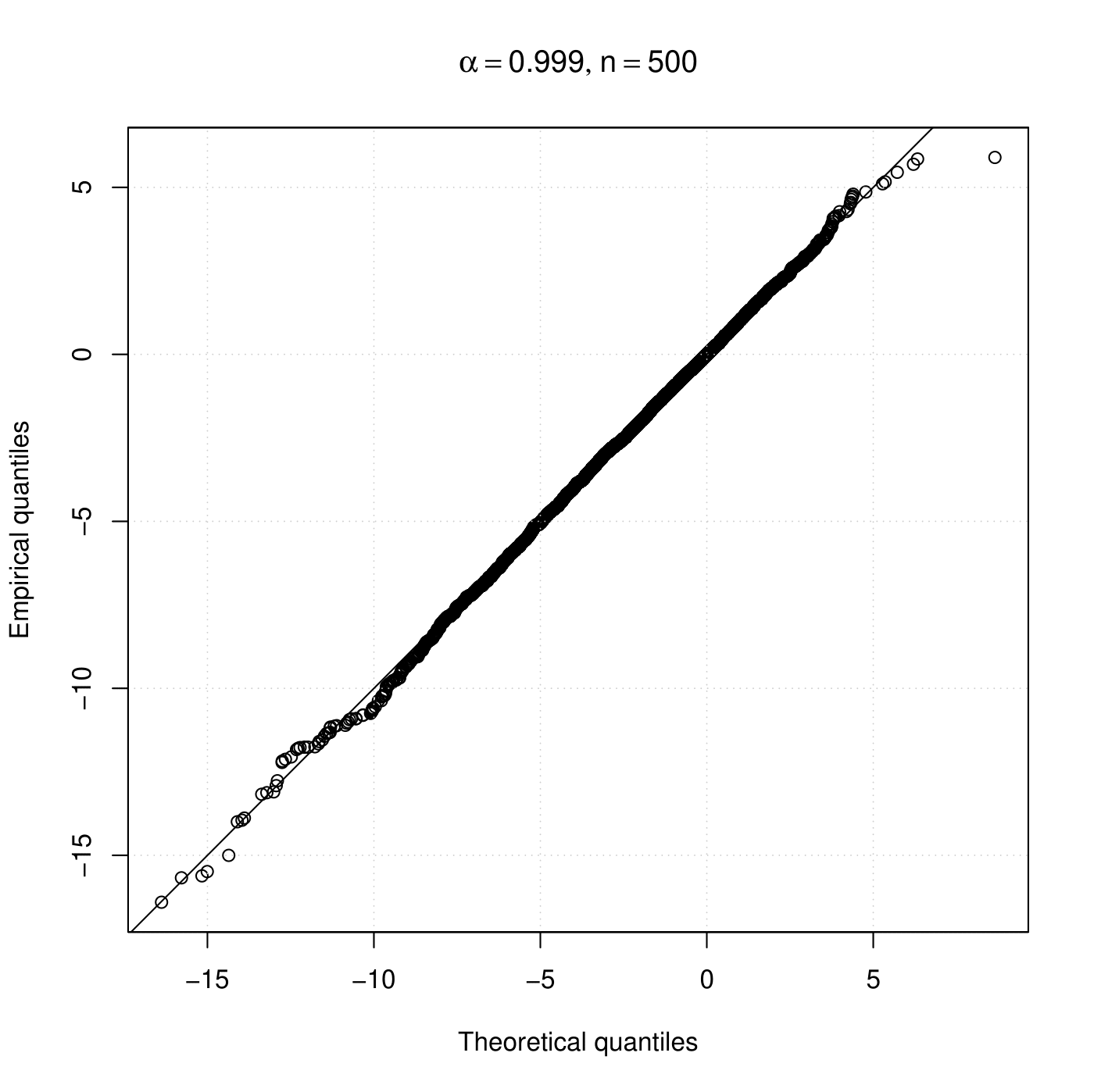}  	
\caption{Histograms and qq-plots of the standardized estimates $n(\widehat\alpha_n-\alpha)$ for $\alpha=0.98$, $\alpha=0.99$, and $\alpha=0.999$, along with their associated limiting density/quantiles given in Theorem \ref{weak_limit_nonstat} (under nearly non-stationarity). The sample size is $n=500$.}\label{fig:NONSTAT_alpha_n500}
	\end{center}
\end{figure}

A natural question is what happens when $\alpha$ is not close to 1. To address this point, we run additional simulations with $\alpha=0.7,0.8,0.9$, and the remaining settings as before. Figures \ref{fig:STAT_alpha_n500II} and \ref{fig:NONSTAT_alpha_n500II} exhibit histograms and qq-plots of the standardized CLS estimates of $\alpha$ obtained from a Monte Carlo simulation for the stationary and nearly non-stationary Poisson INARCH processes. From Figure \ref{fig:STAT_alpha_n500II}, we observe some deviation from the normality even for the case $\alpha=0.7$. This is well evidenced by the qq-plots. Surprisingly, the results based on the nearly unstable methodology work quite satisfactorily even for $\alpha=0.7$. These conclusions can be drawn again in Figure \ref{fig:NONSTAT_alpha_n500II}, where we note a good agreement between the empirical standardized CLS estimates and the theoretical asymptotic distribution derived in Theorem \ref{weak_limit_nonstat}.

All the configurations considered here are repeated again with a sample size $n=1000$.  Figures \ref{fig:STAT_alpha_n1000} and \ref{fig:NONSTAT_alpha_n1000} give us the histograms and qq-plots of the standardized CLS estimates under the stable and nearly unstable Poisson INARCH processes, respectively, under the settings $\alpha=0.98,0.99,0.999$. The plots regarding the settings $\alpha=0.7,0.8,0.9$ for the stable and nearly unstable cases are reported in Figures \ref{fig:STAT_alpha_n1000II} and \ref{fig:NONSTAT_alpha_n1000II}, respectively.

 The conclusions are quite similar to the case $n=500$ for the configurations nearly to non-stationarity $\alpha=0.98,0.99,0.999$. Regarding the configurations where $\alpha=0.7,0.8,0.9$, although there is an improvement in the results based on the stationary case (compared to $n=500$), deviations from the normality can still be observed. In contrast, the nearly unstable approach again works very well and provides the best outcomes. As a short conclusion, we recommend using the nearly unstable-based approach even when the fitted model may in practice not be too close to the non-stationarity region because the proposed methodology works well and perform better than the stationary-based approach.

\begin{figure}
	\begin{center}
		\includegraphics[width=5.4cm]{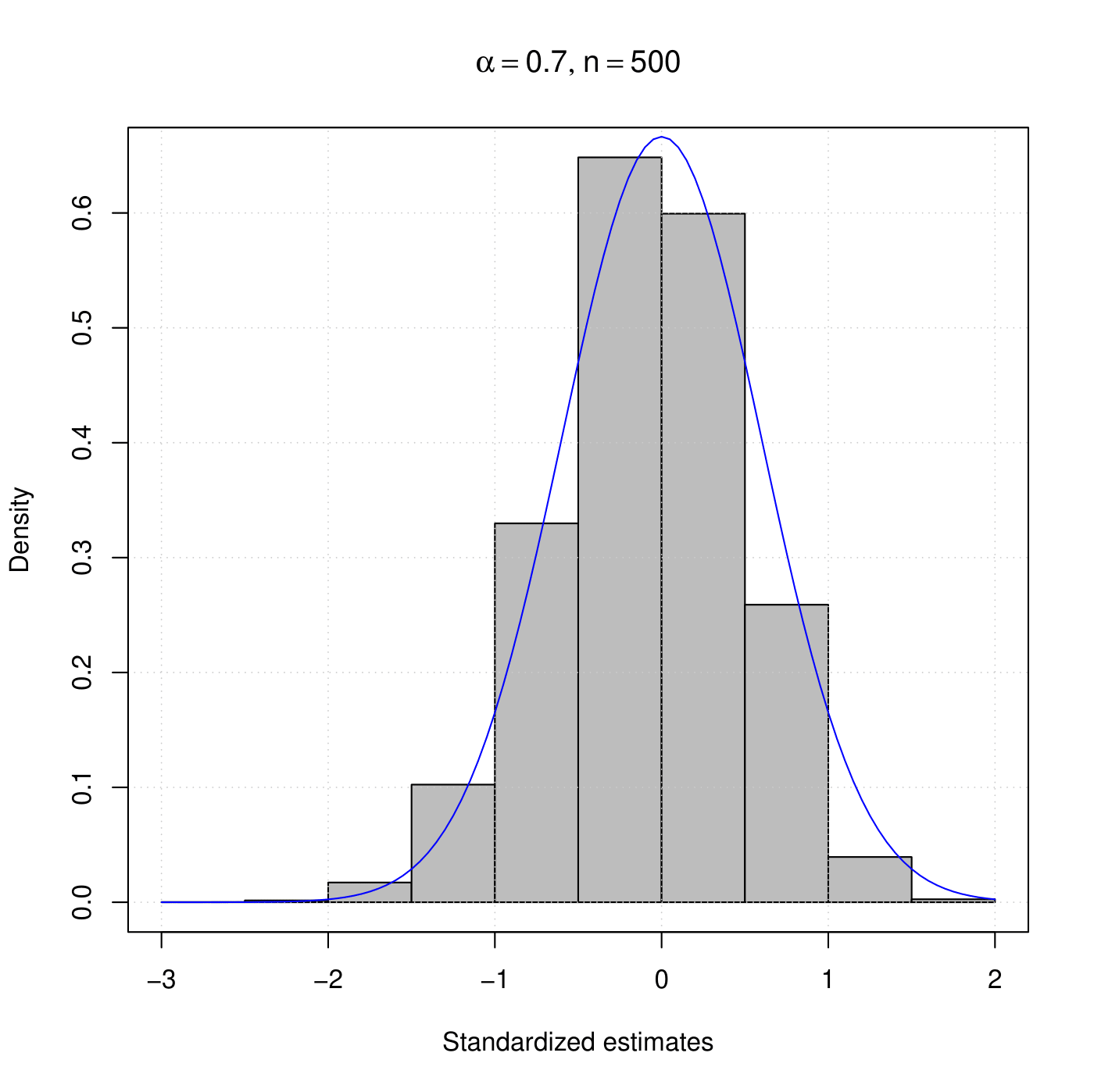}\includegraphics[width=5.4cm]{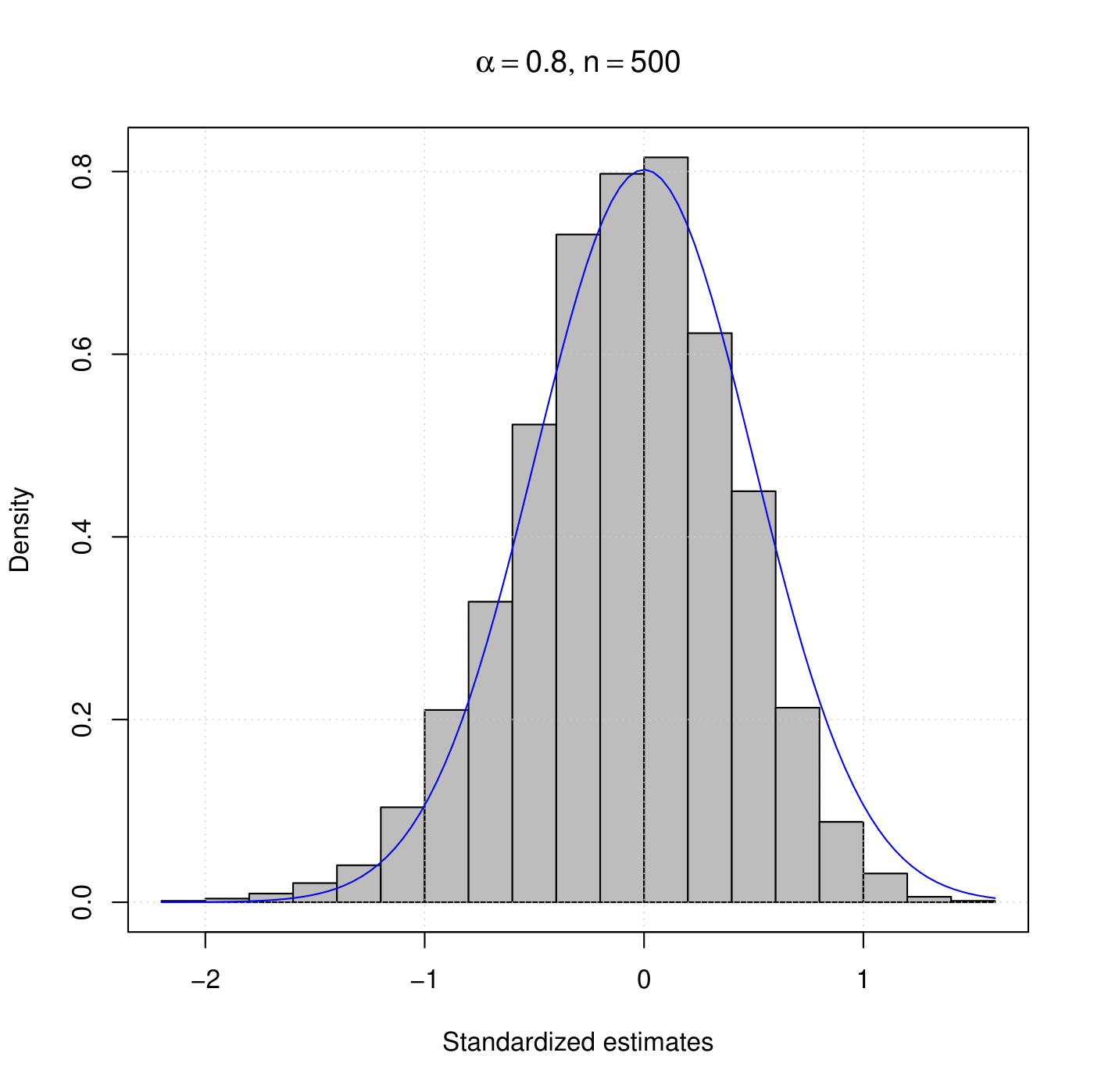}\includegraphics[width=5.4cm]{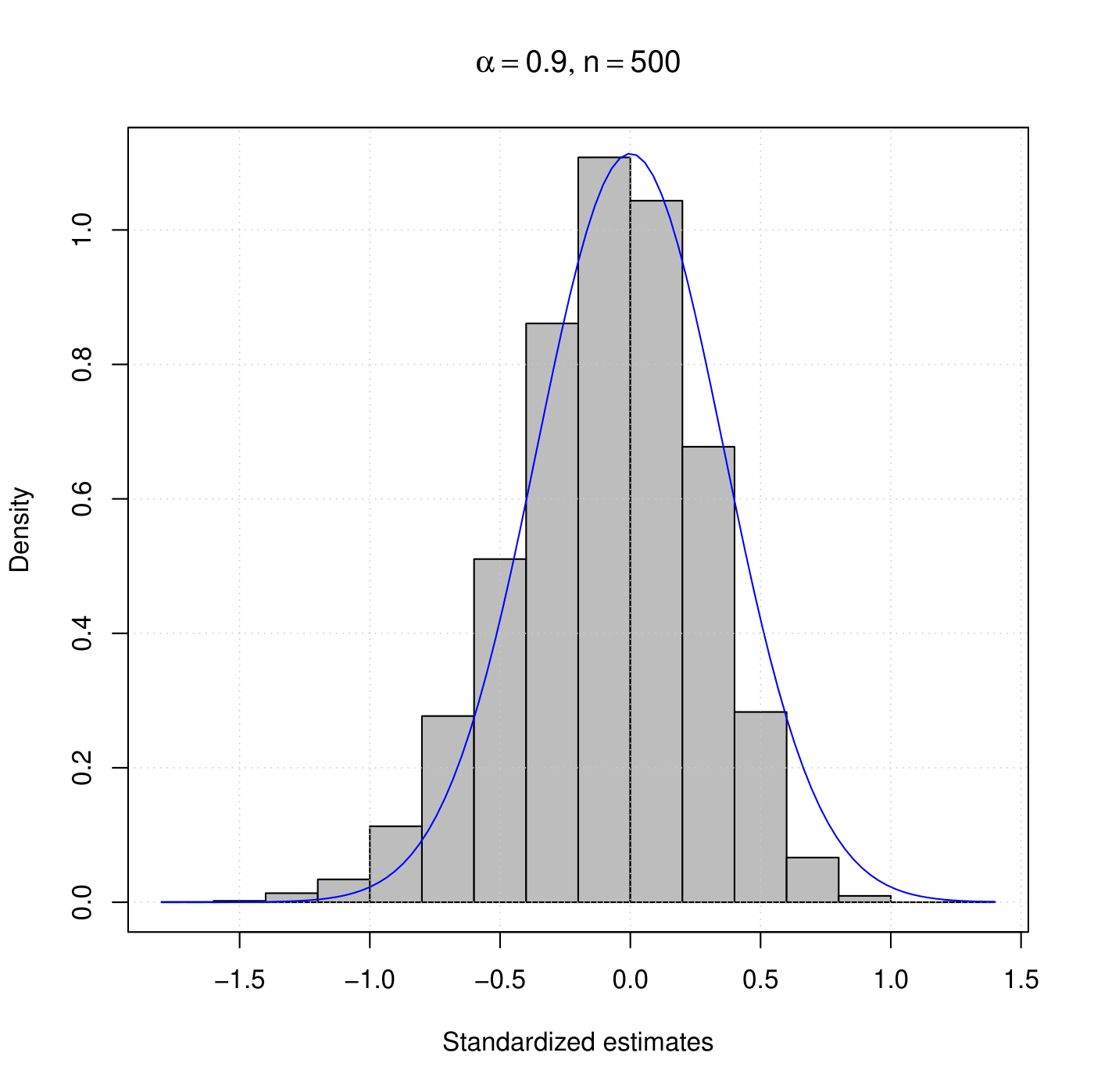}
		\includegraphics[width=5.4cm]{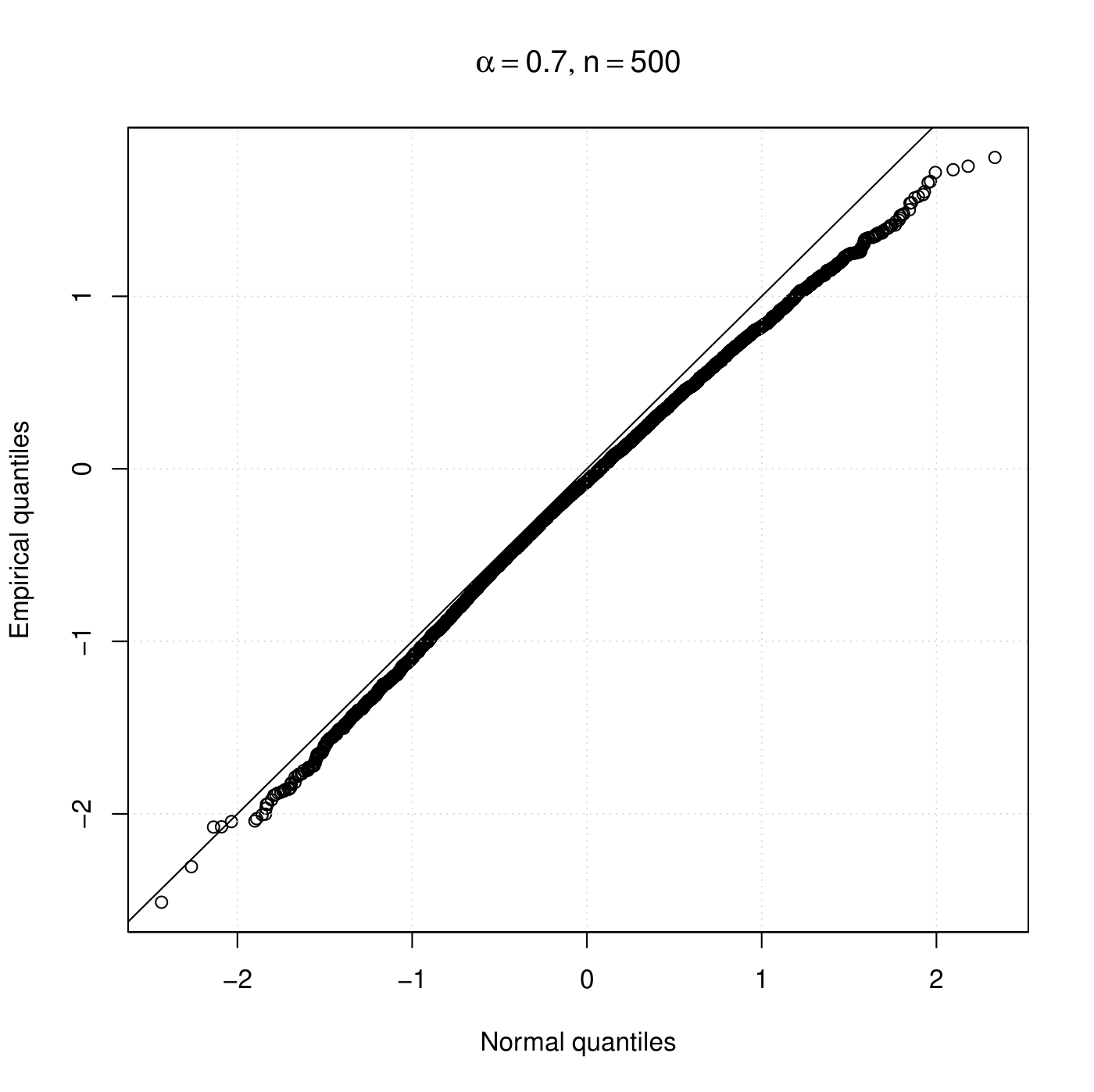}\includegraphics[width=5.4cm]{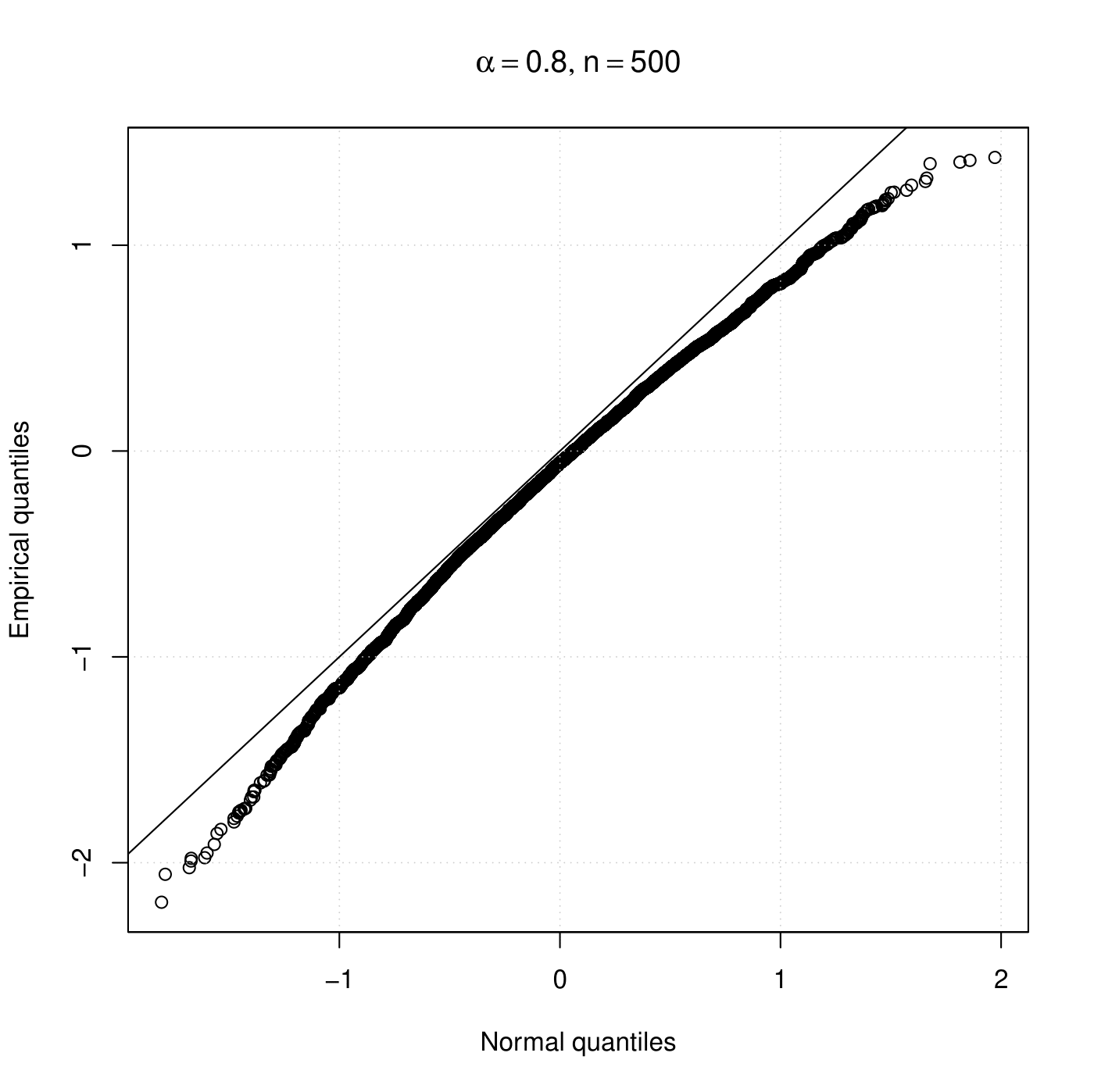}\includegraphics[width=5.4cm]{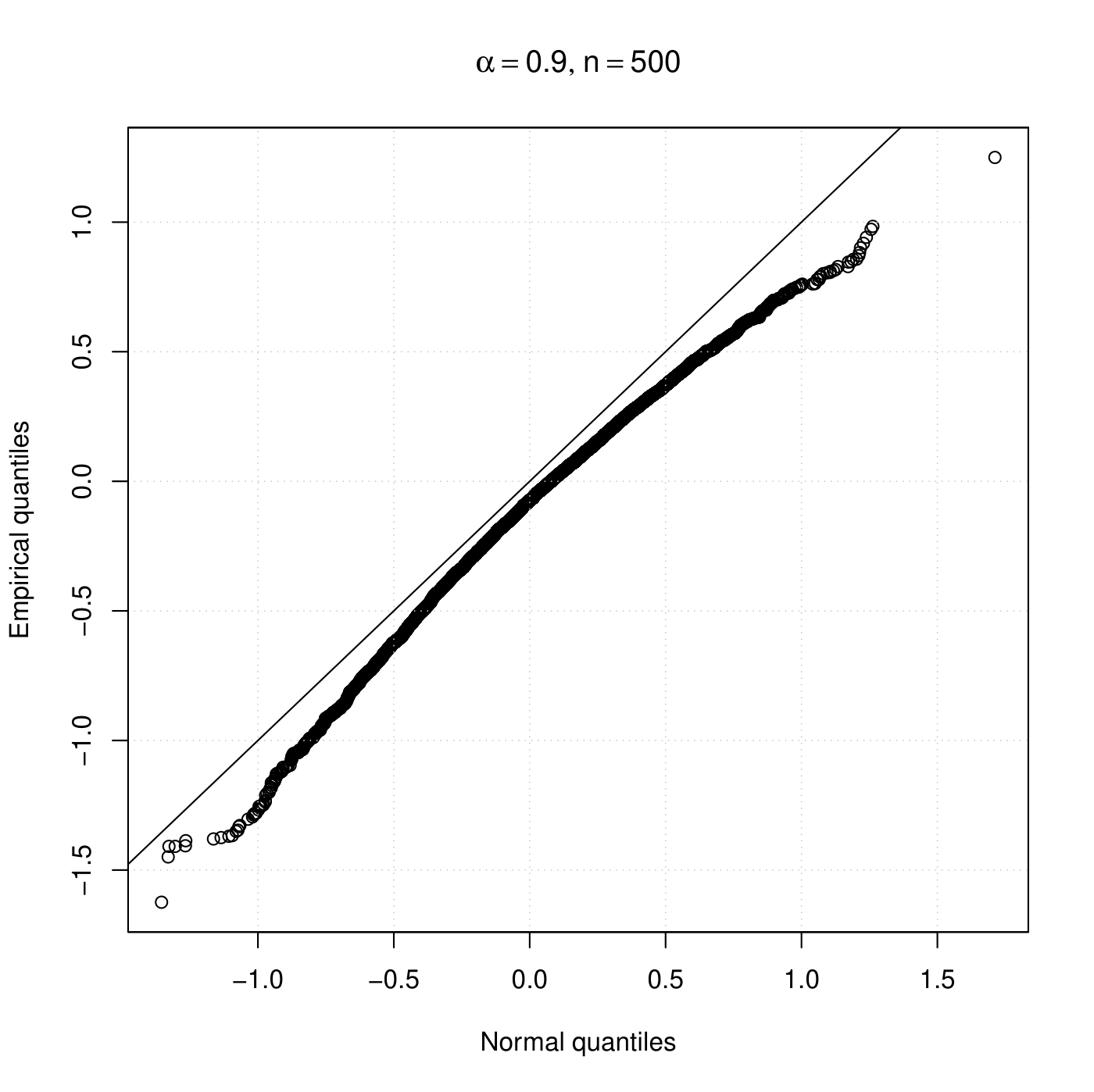}  	
		\caption{Histograms and qq-plots of the standardized estimates $\sqrt{n}(\widehat\alpha_n-\alpha)$ for $\alpha=0.7$, $\alpha=0.8$, and $\alpha=0.9$, along with their associated limiting normal density/quantiles given in Theorem \ref{weak_limit_stat} (under stationarity). The sample size is $n=500$.}\label{fig:STAT_alpha_n500II}
		\includegraphics[width=5.4cm]{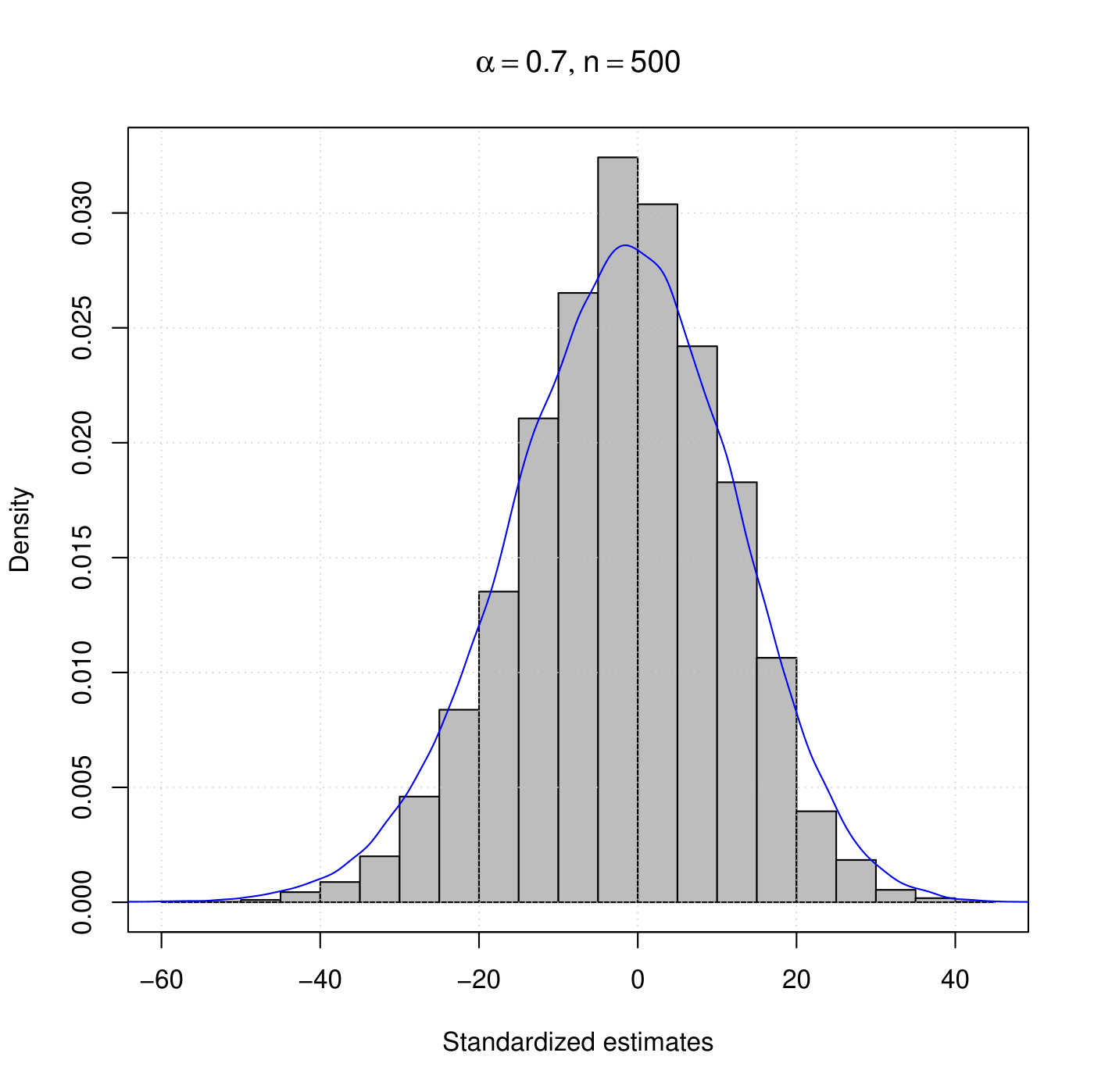}\includegraphics[width=5.4cm]{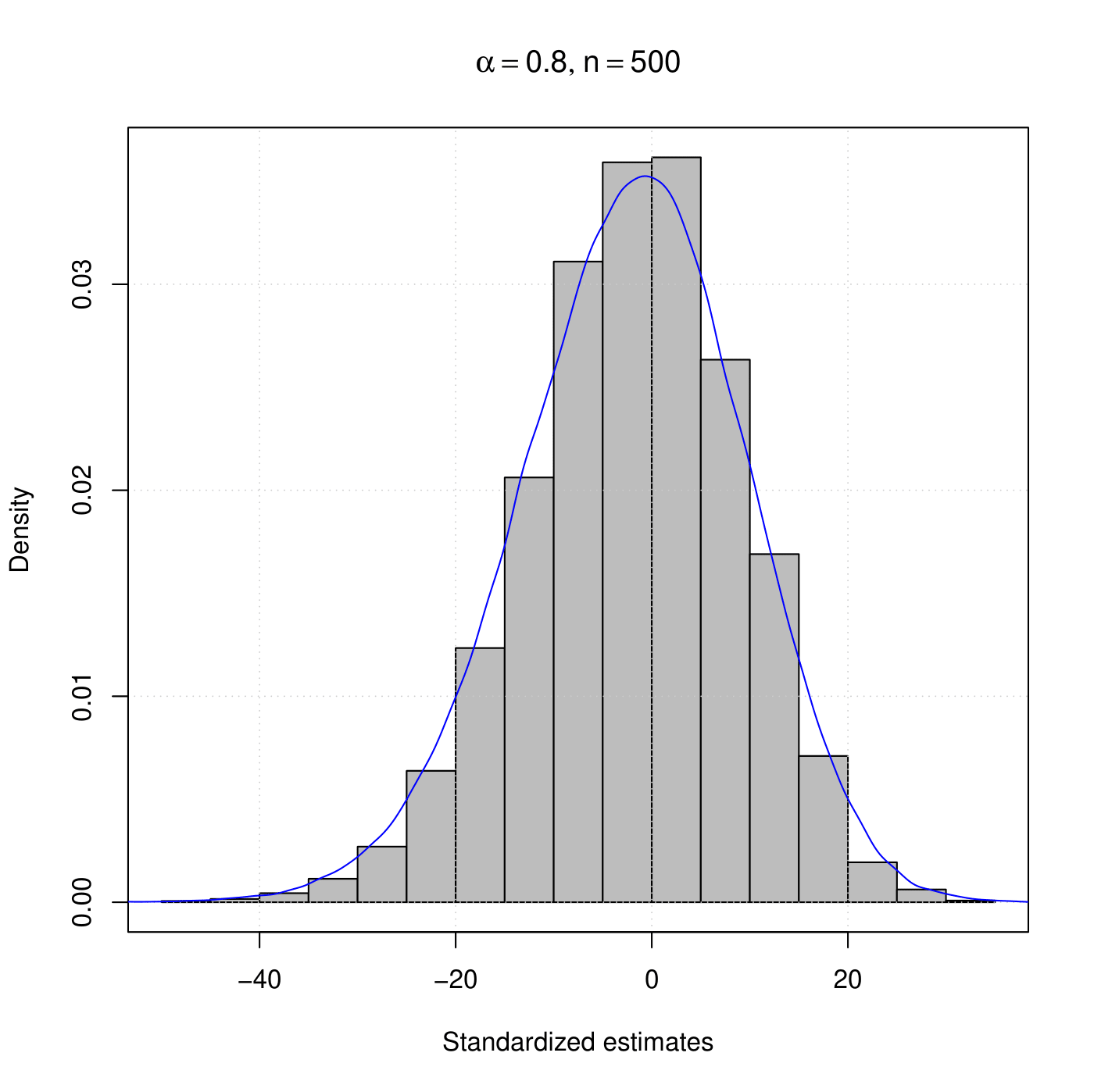}\includegraphics[width=5.4cm]{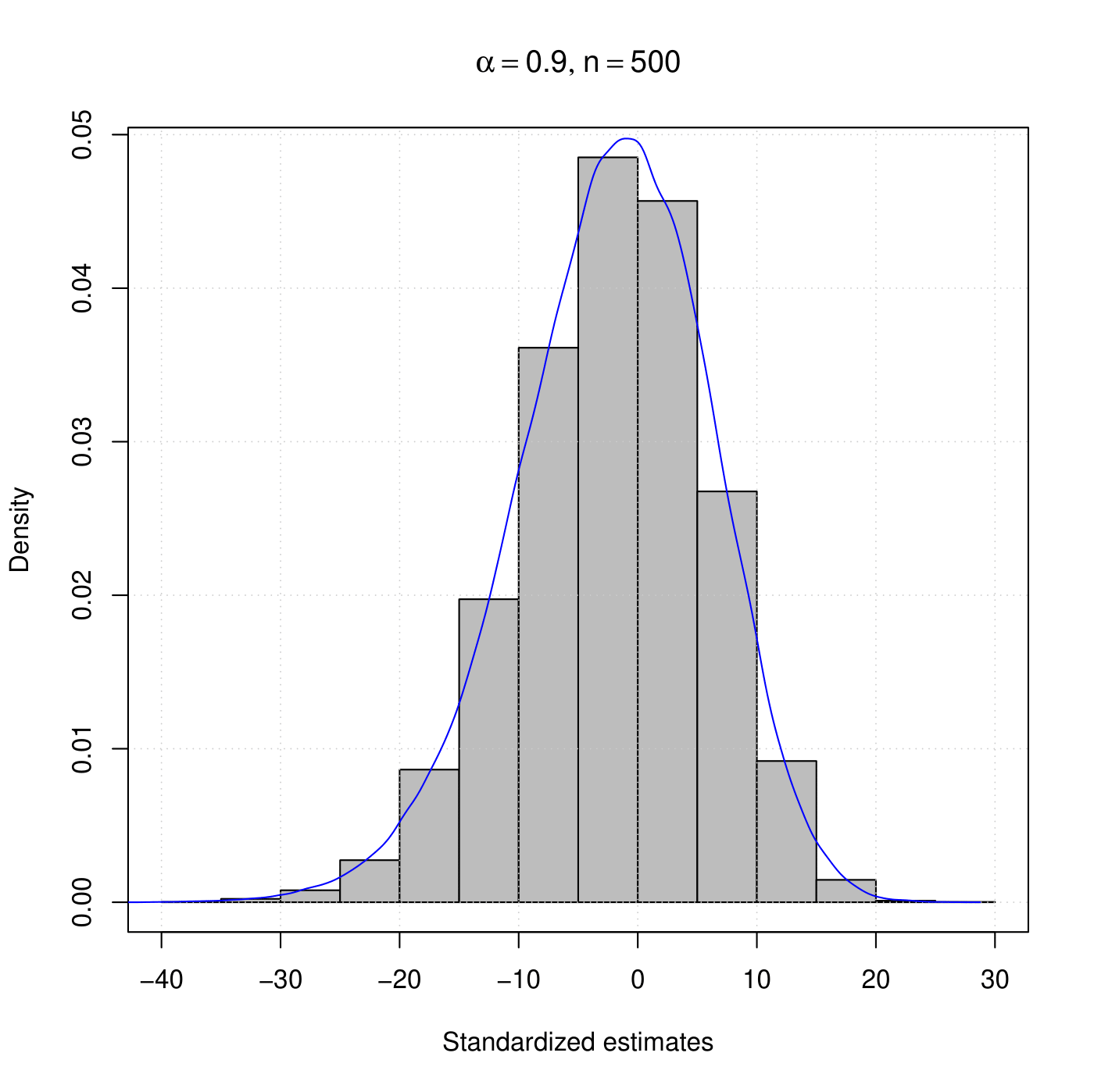}
		\includegraphics[width=5.4cm]{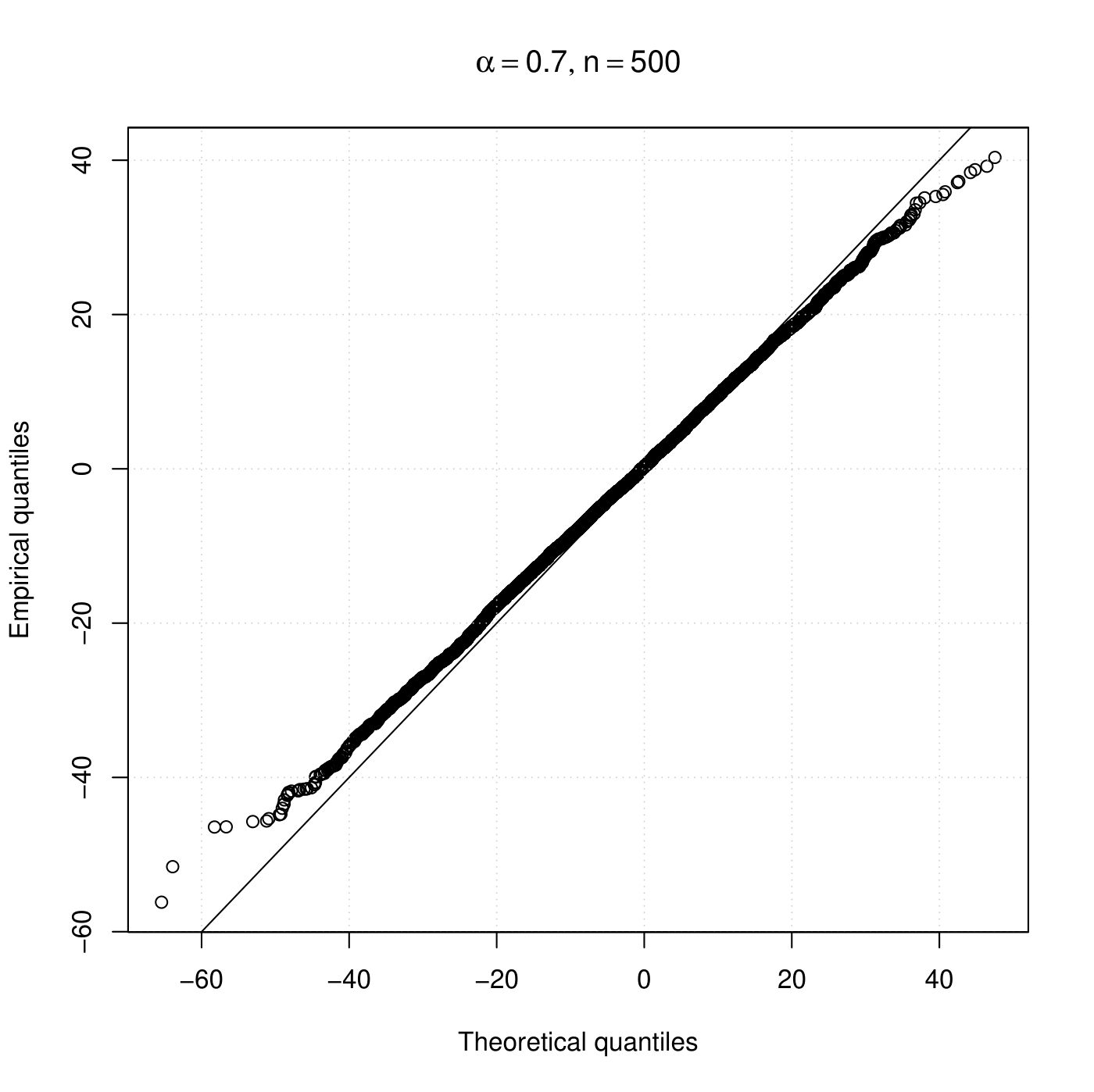}\includegraphics[width=5.4cm]{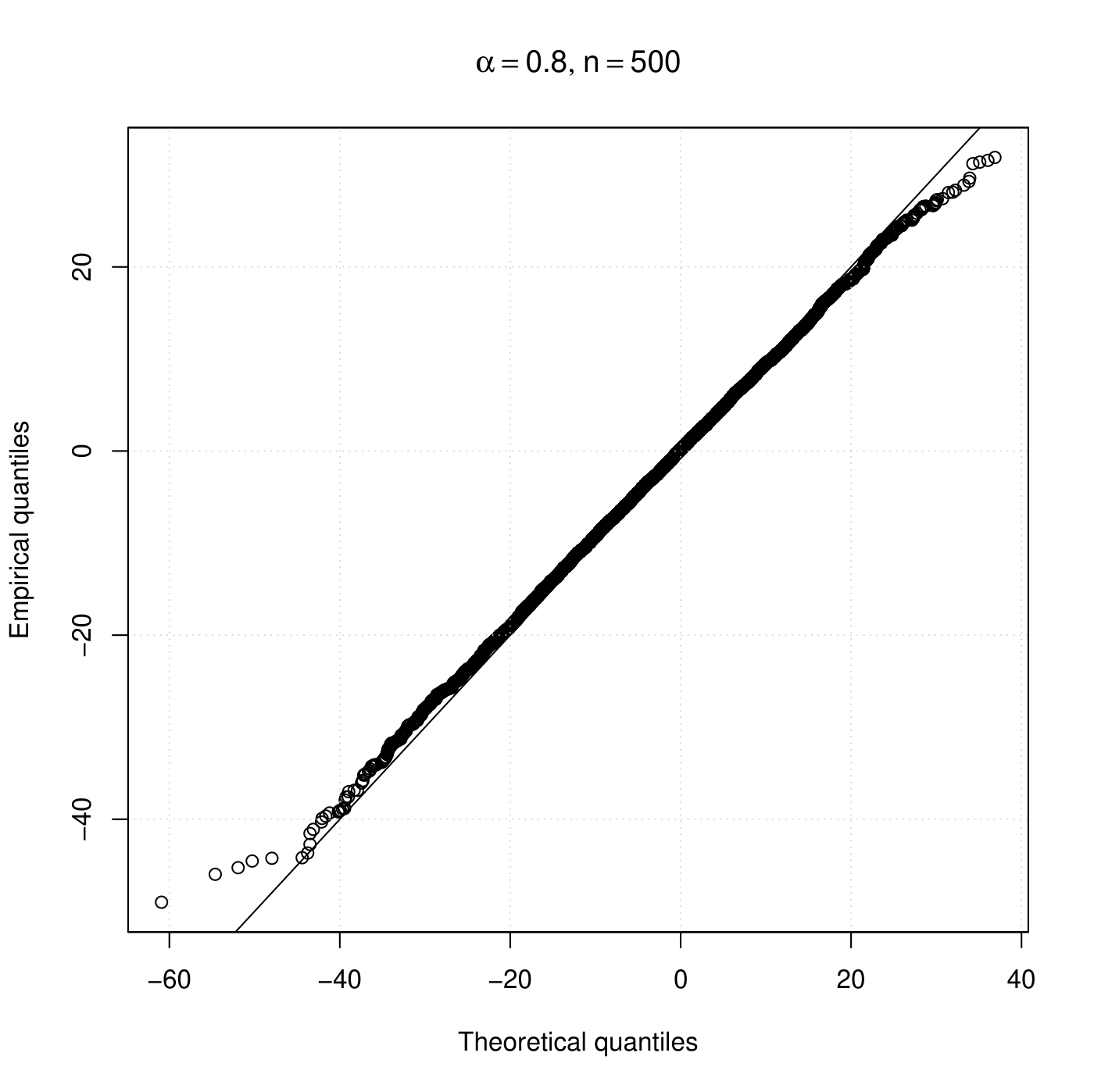}\includegraphics[width=5.4cm]{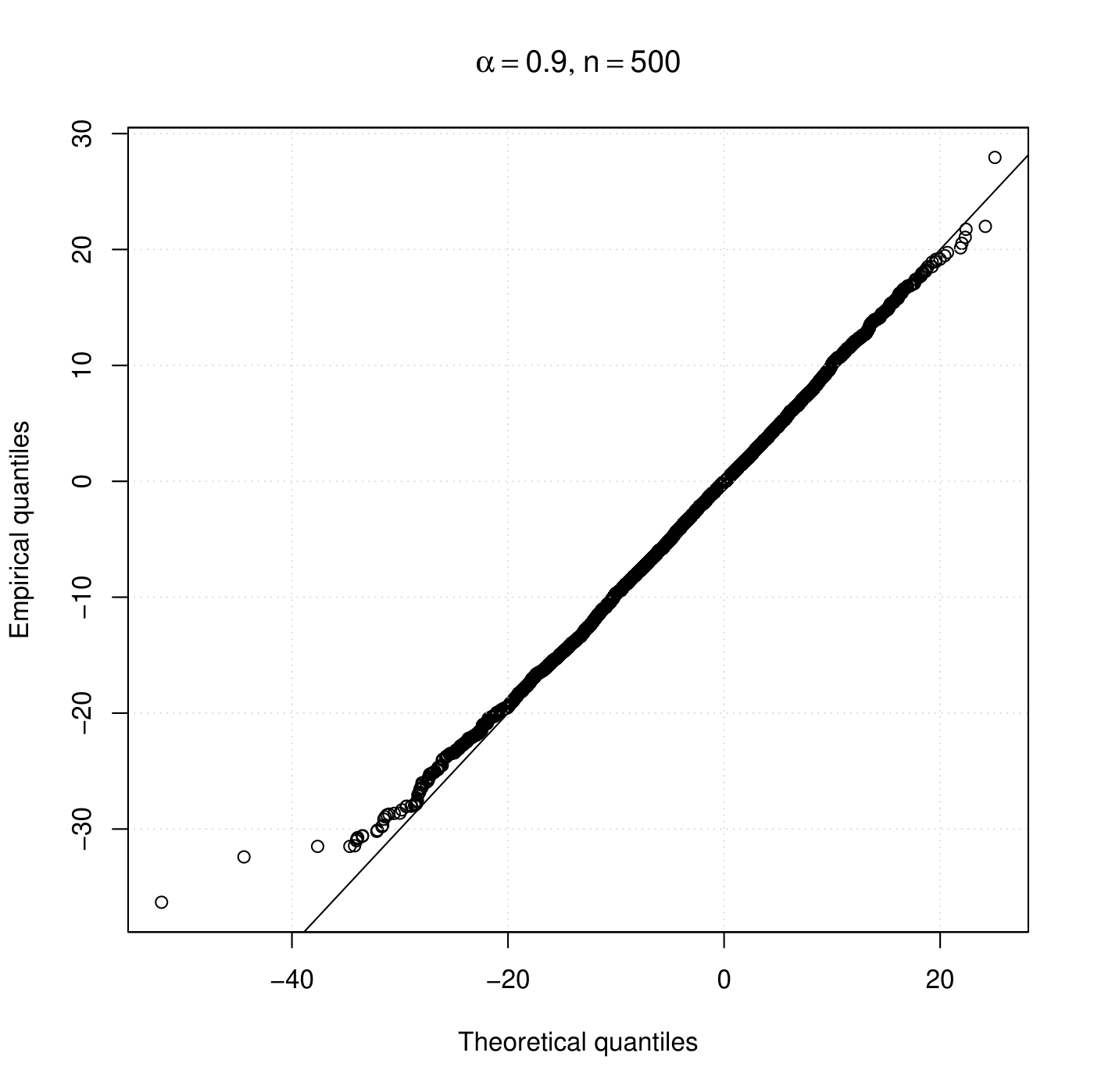}  	
		\caption{Histograms and qq-plots of the standardized estimates $n(\widehat\alpha_n-\alpha)$ for $\alpha=0.7$, $\alpha=0.8$, and $\alpha=0.9$, along with their associated limiting density/quantiles given in Theorem \ref{weak_limit_nonstat} (under nearly non-stationarity). The sample size is $n=500$.}\label{fig:NONSTAT_alpha_n500II}
	\end{center}
\end{figure}

\begin{figure}
	\begin{center}
		\includegraphics[width=5.4cm]{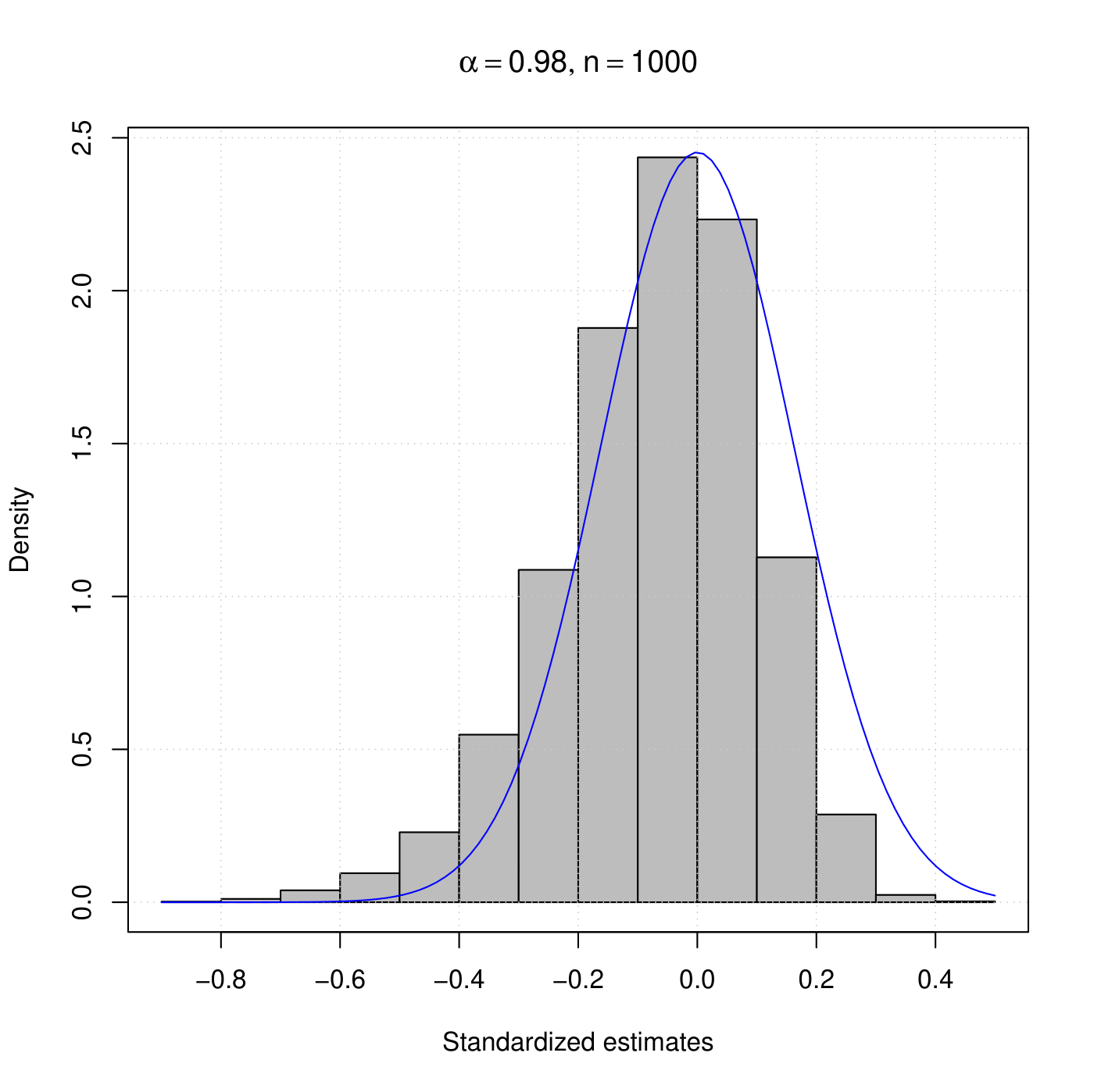}\includegraphics[width=5.4cm]{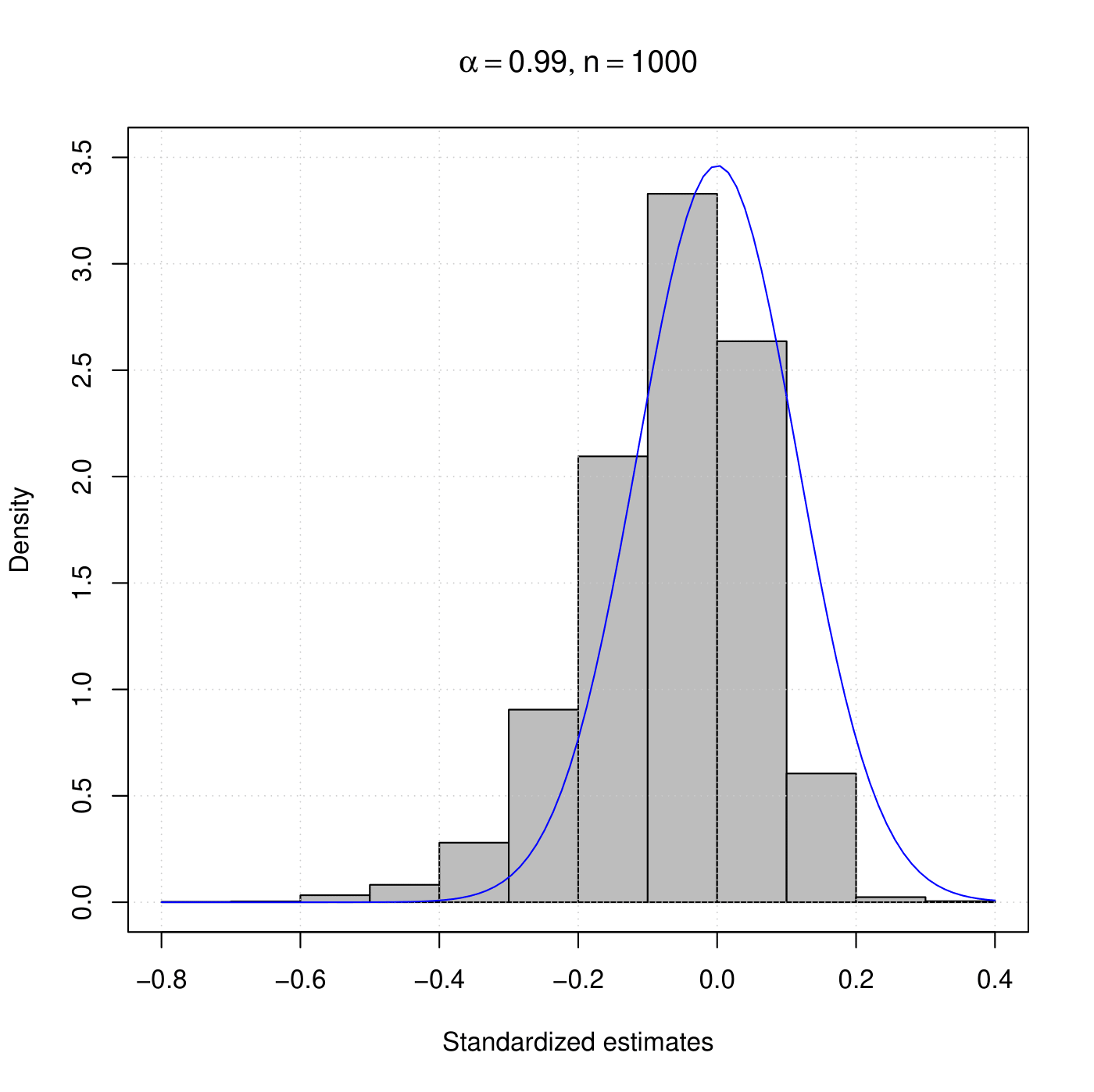}\includegraphics[width=5.4cm]{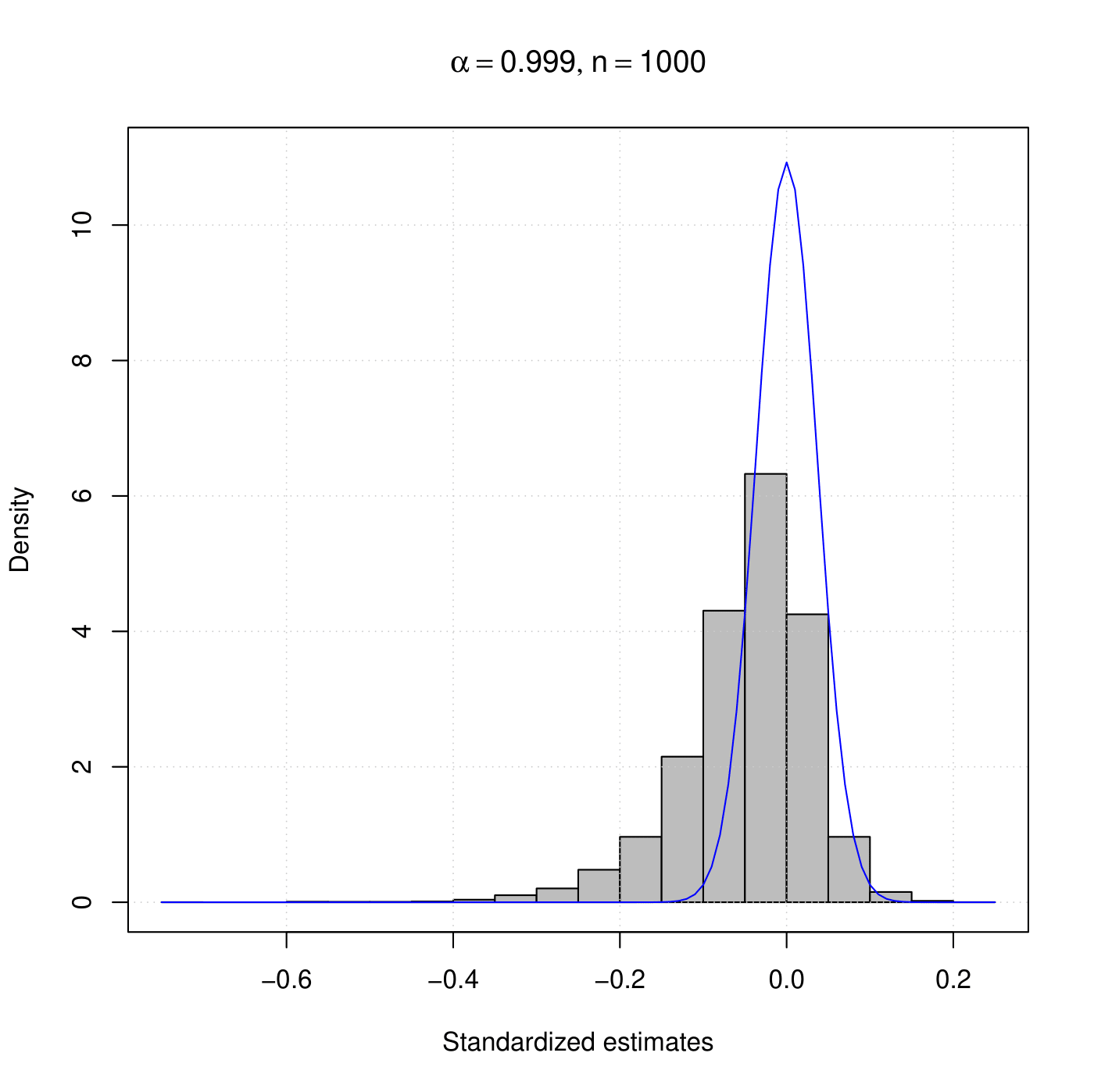}
		\includegraphics[width=5.4cm]{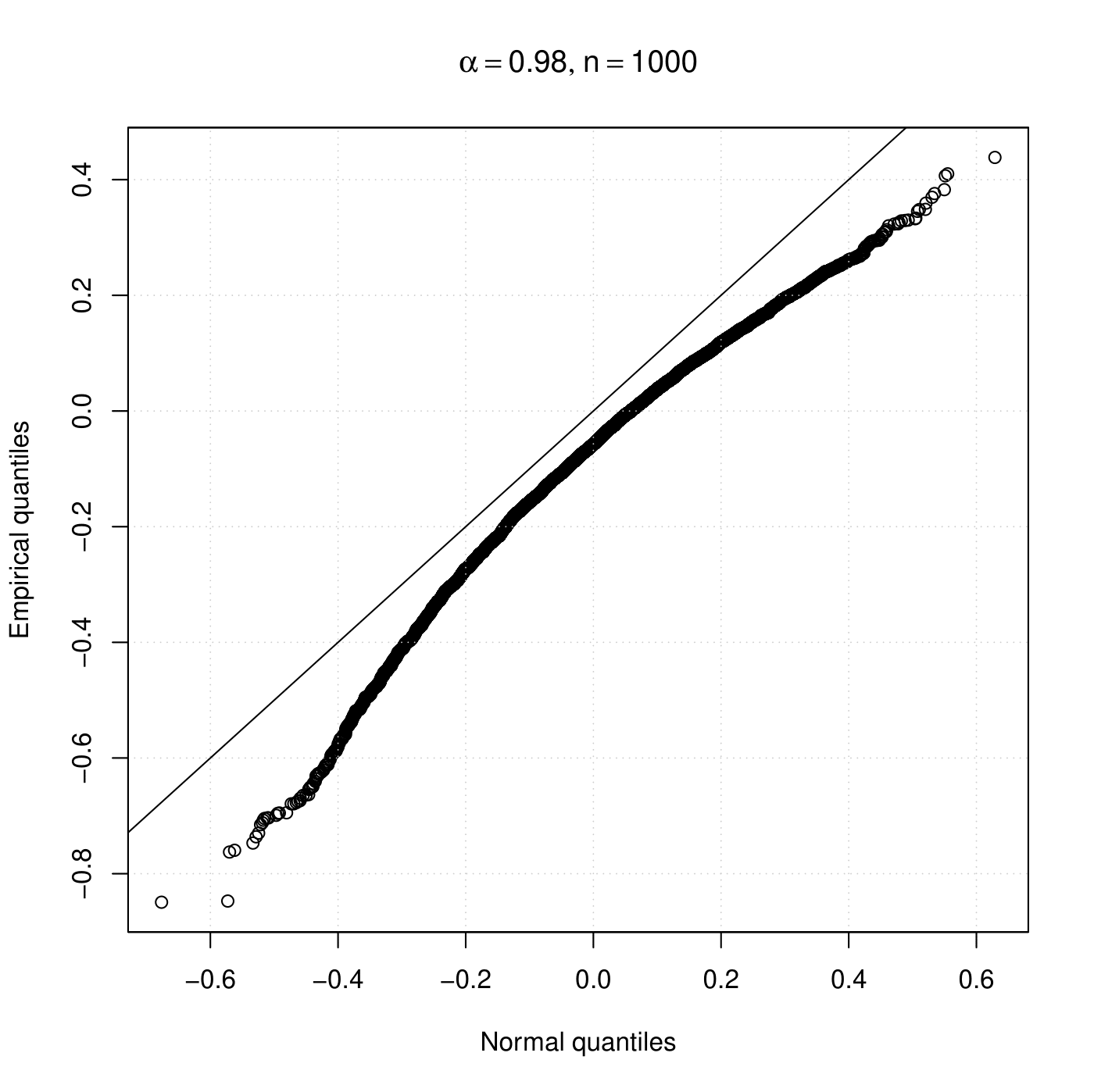}\includegraphics[width=5.4cm]{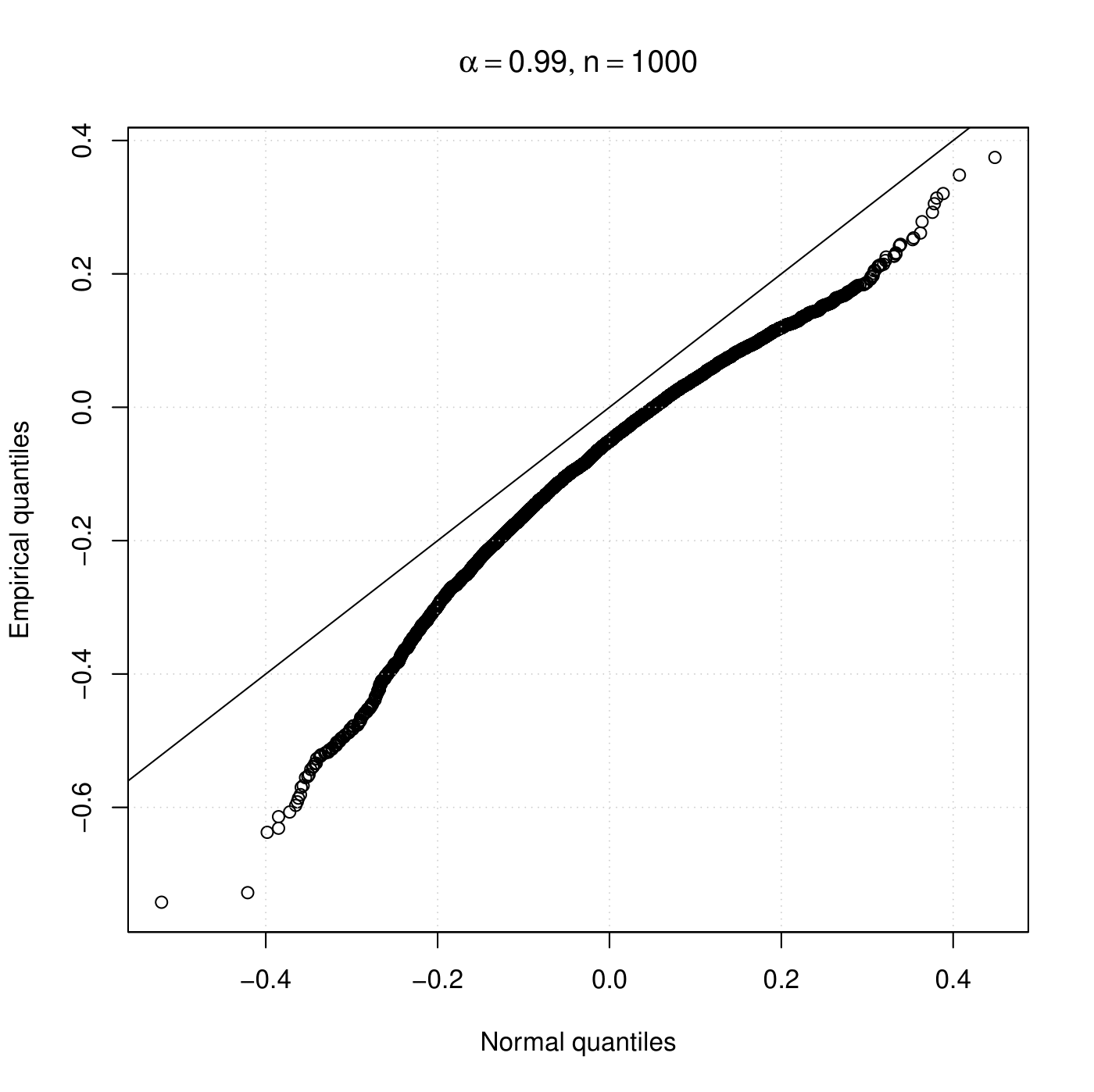}\includegraphics[width=5.4cm]{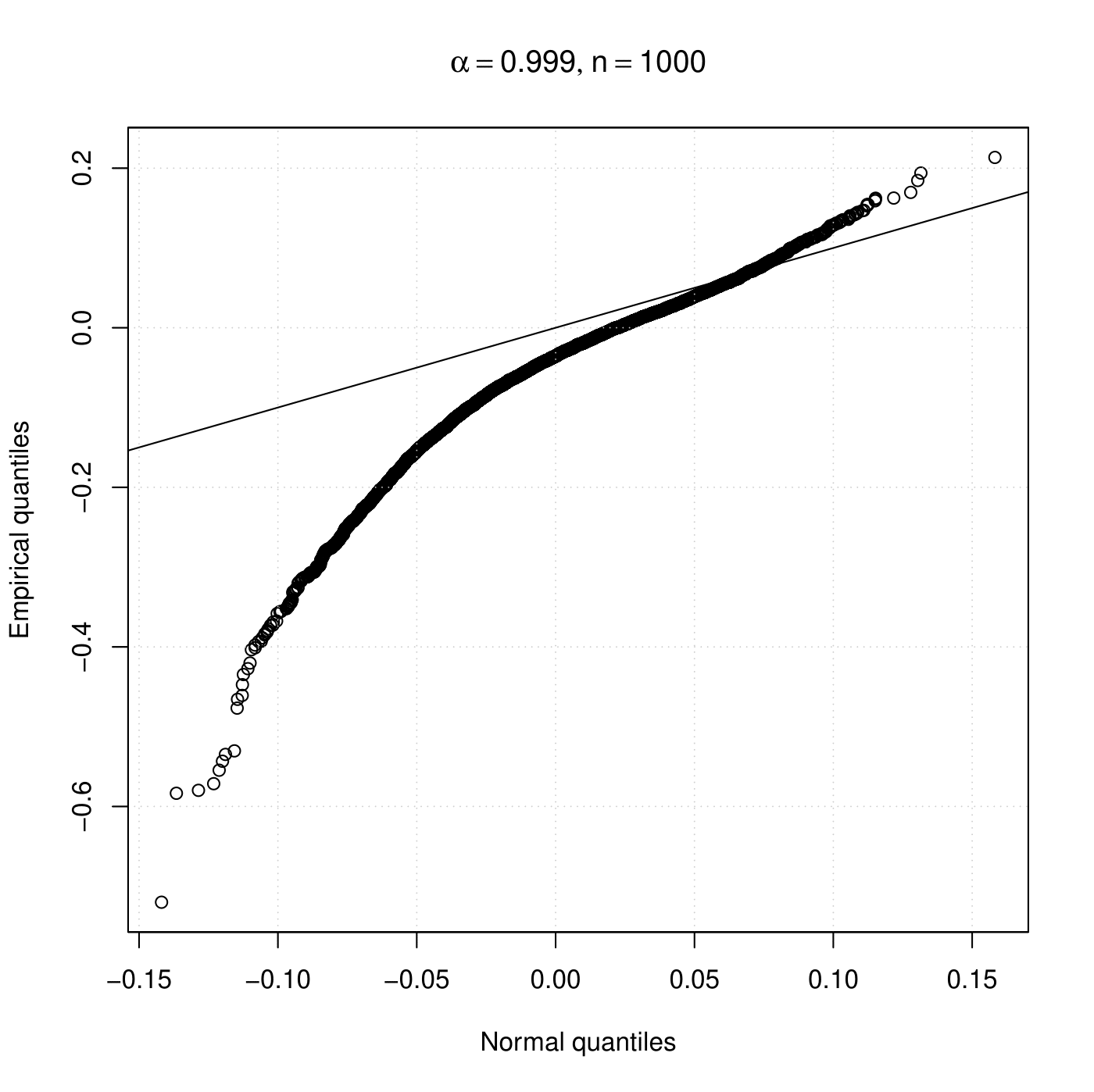}  	
		\caption{Histograms and qq-plots of the standardized estimates $\sqrt{n}(\widehat\alpha_n-\alpha)$ for $\alpha=0.98$, $\alpha=0.99$, and $\alpha=0.999$, along with their associated limiting normal density/quantiles given in Theorem \ref{weak_limit_stat} (under stationarity). The sample size is $n=1000$.}\label{fig:STAT_alpha_n1000}
		\includegraphics[width=5.4cm]{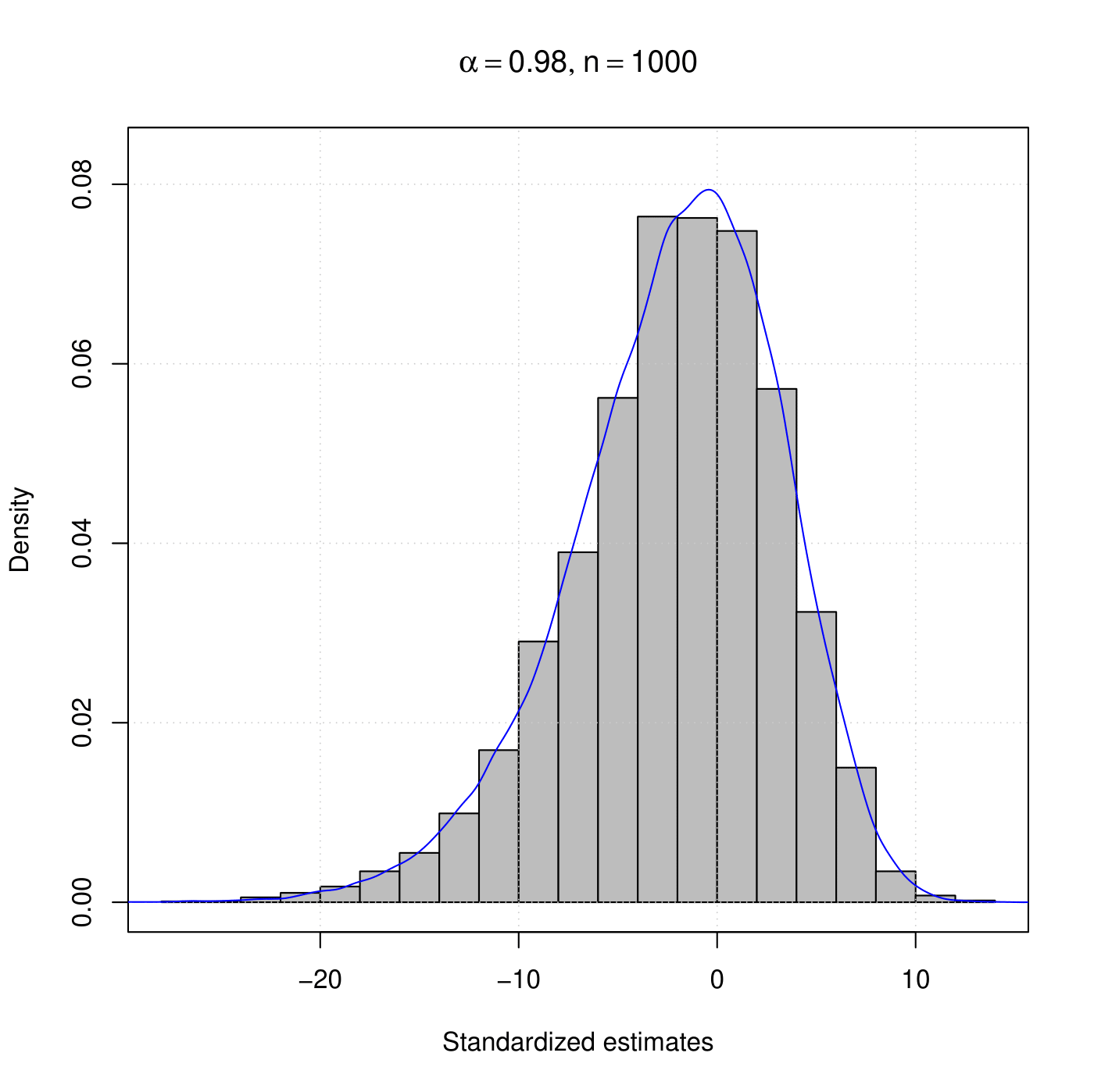}\includegraphics[width=5.4cm]{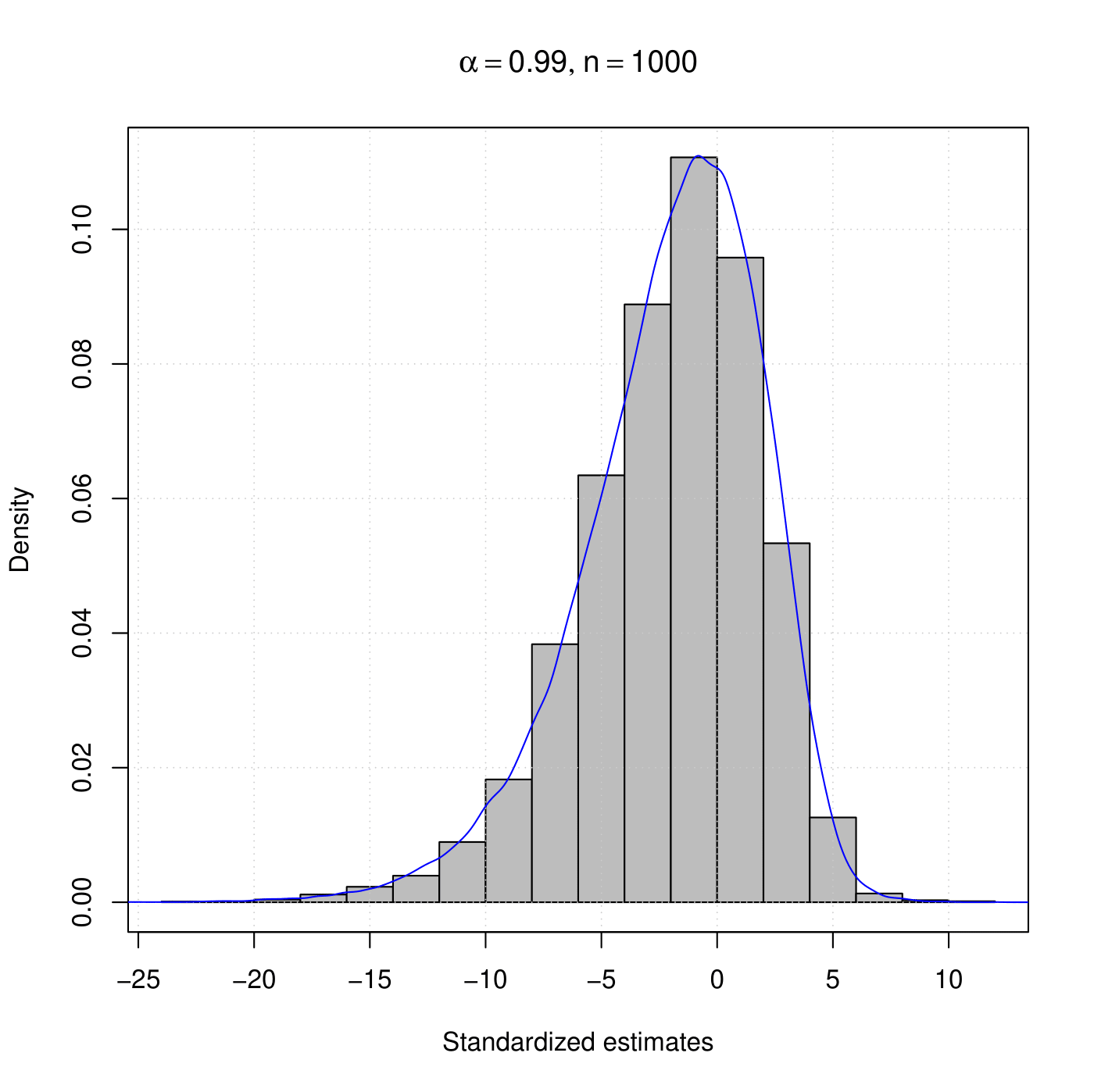}\includegraphics[width=5.4cm]{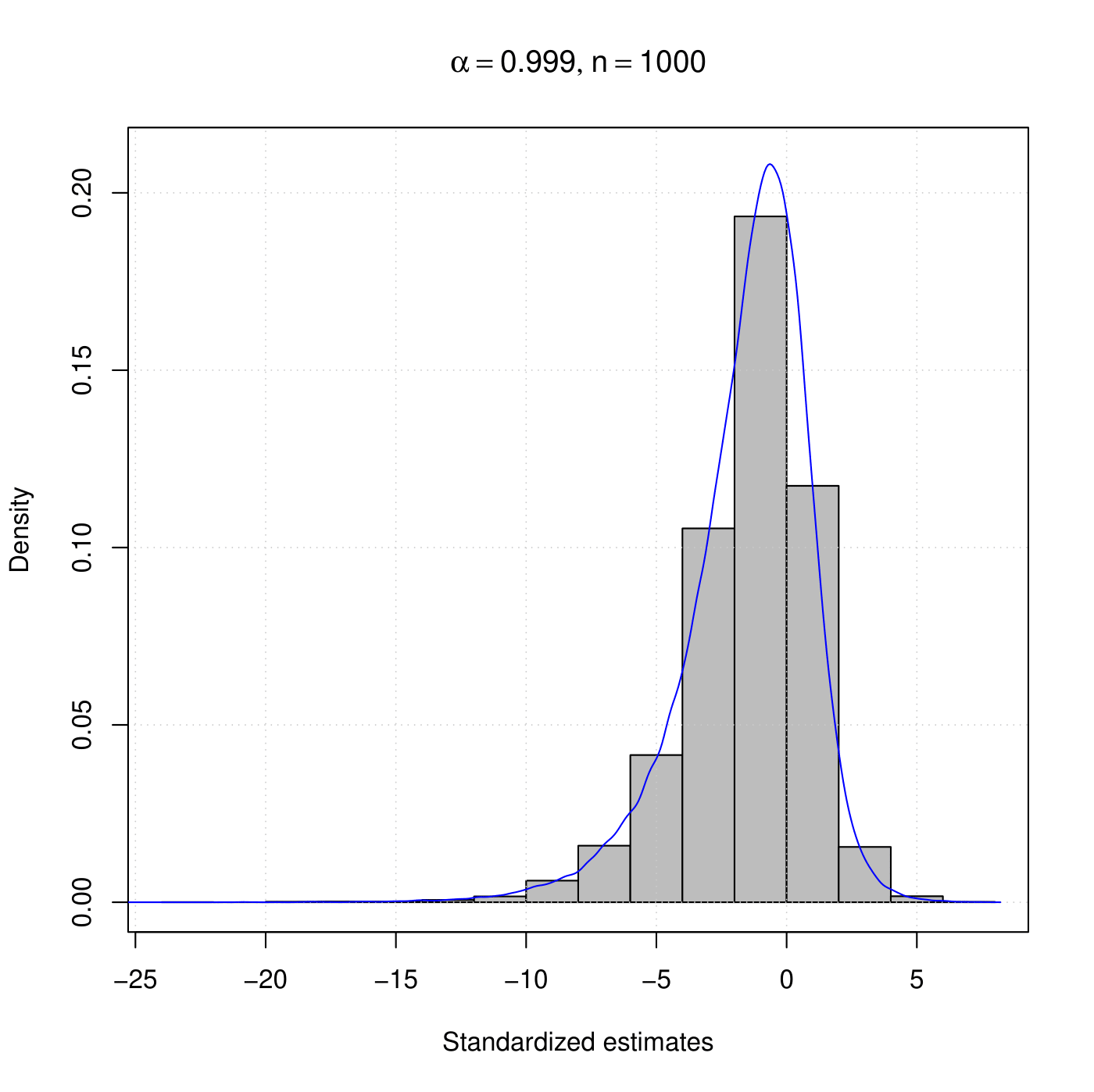}
		\includegraphics[width=5.4cm]{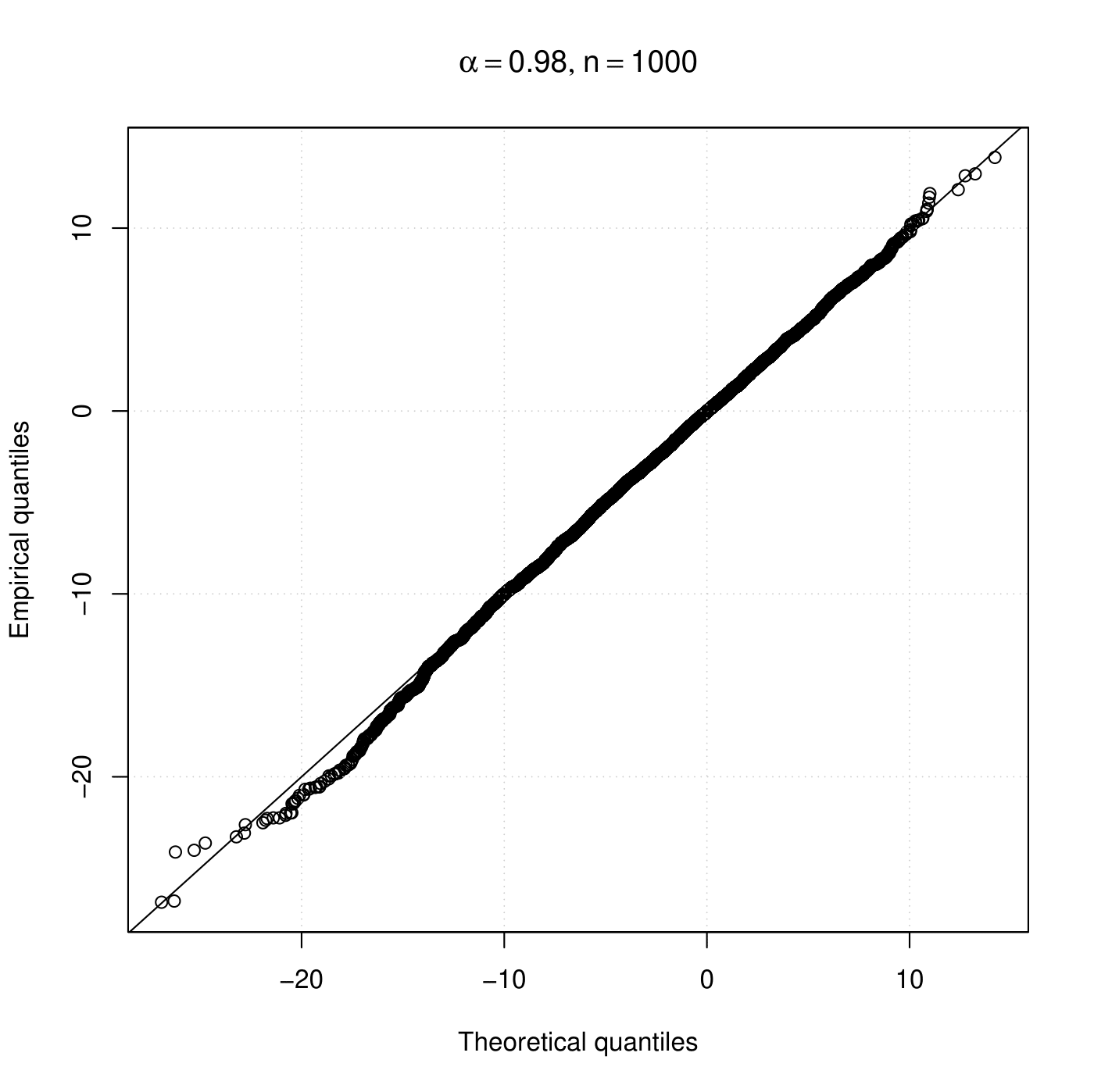}\includegraphics[width=5.4cm]{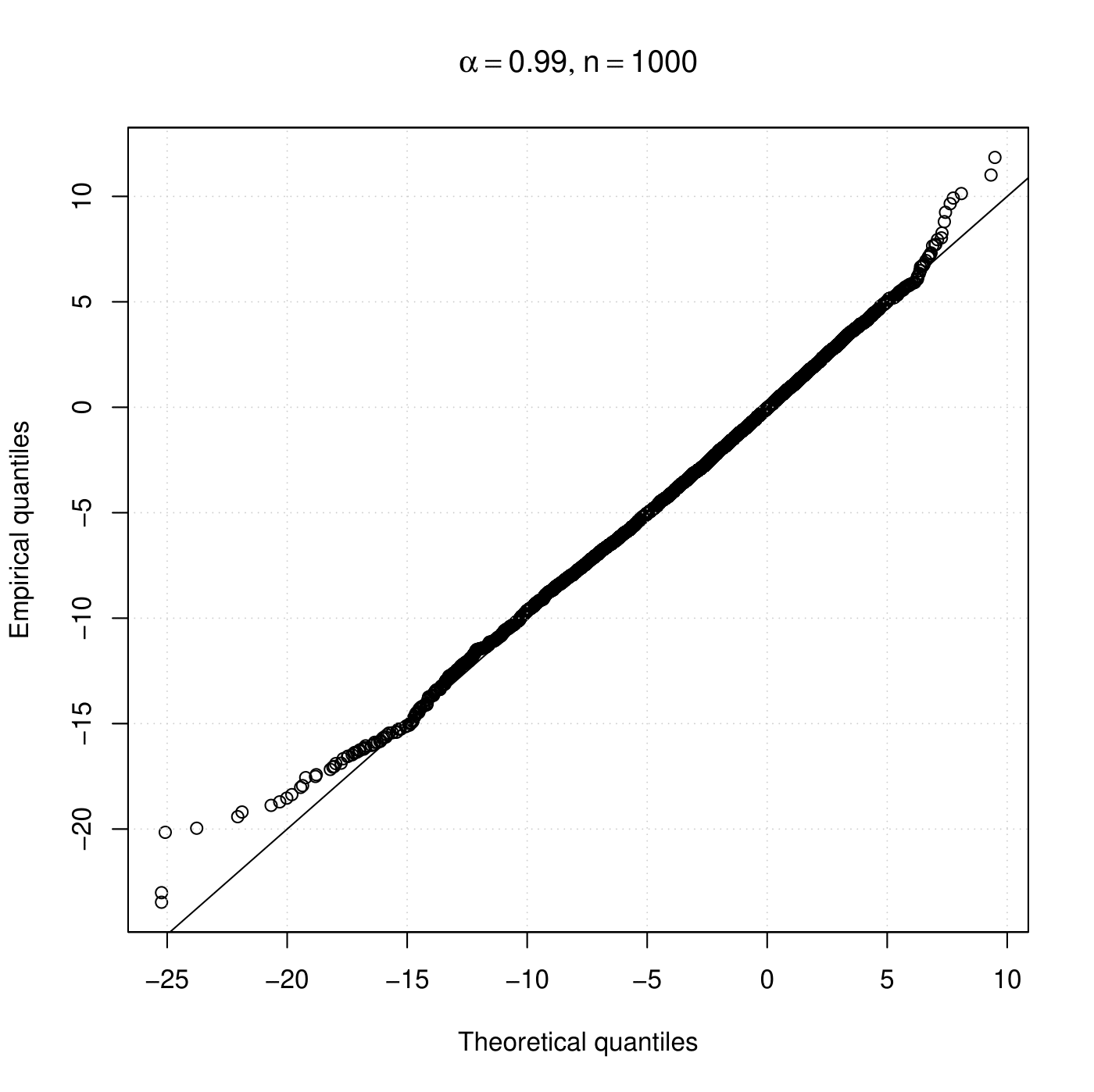}\includegraphics[width=5.4cm]{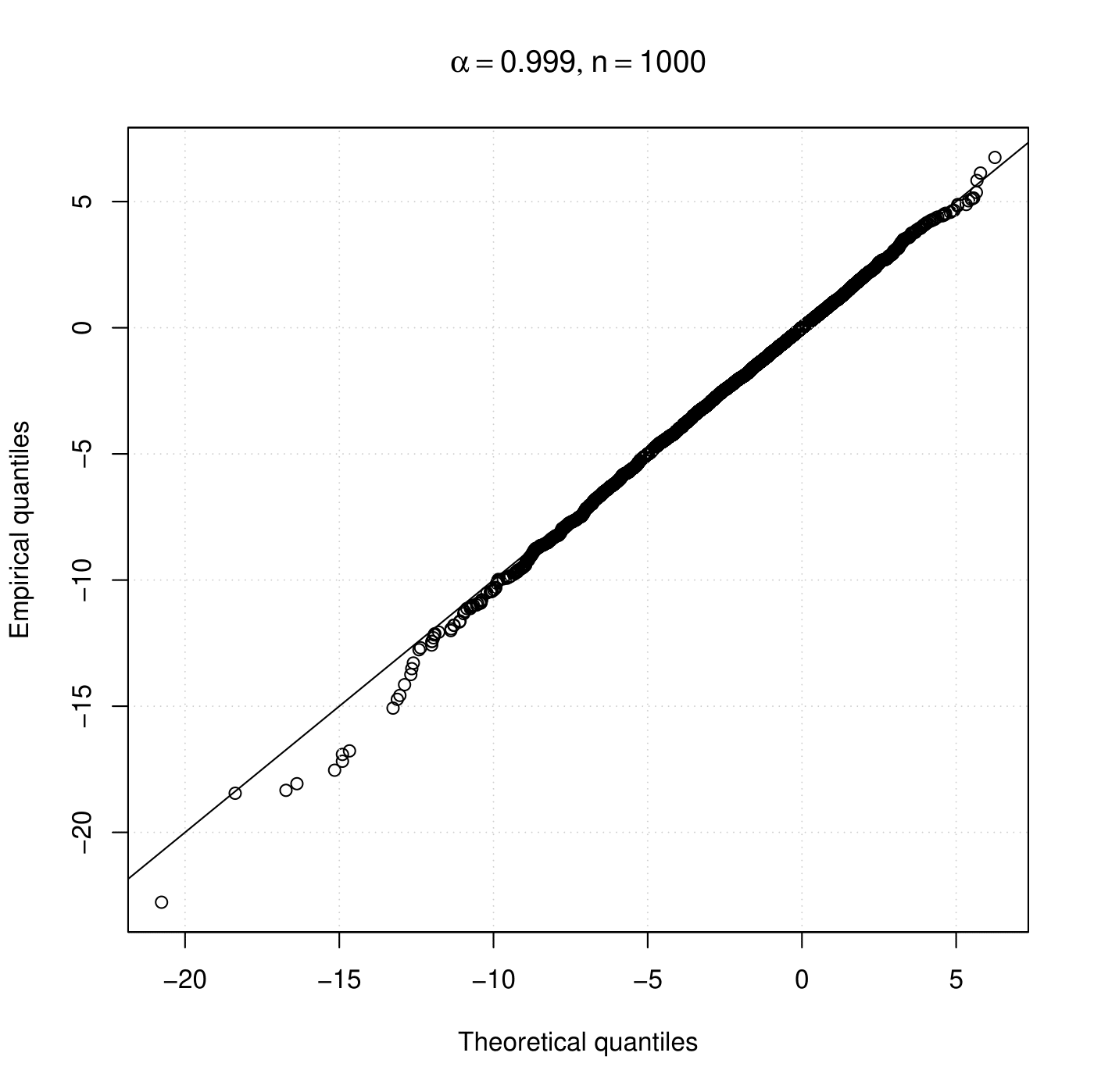}  	
		\caption{Histograms and qq-plots of the standardized estimates $n(\widehat\alpha_n-\alpha)$ for $\alpha=0.98$, $\alpha=0.99$, and $\alpha=0.999$, along with their associated limiting density/quantiles given in Theorem \ref{weak_limit_nonstat} (under nearly non-stationarity). The sample size is $n=1000$.}\label{fig:NONSTAT_alpha_n1000}
	\end{center}
\end{figure}

\begin{figure}
	\begin{center}
		\includegraphics[width=5.4cm]{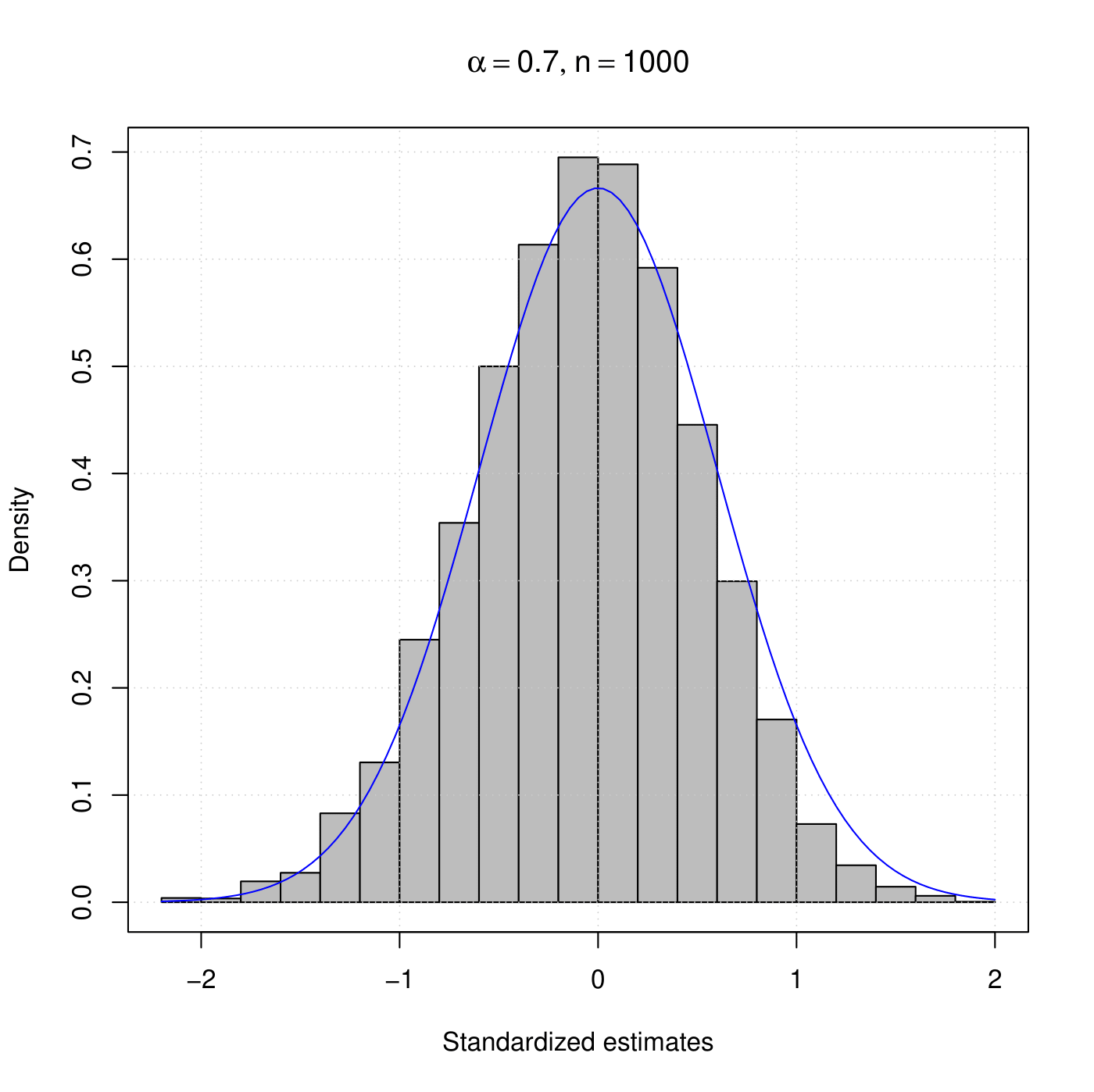}\includegraphics[width=5.4cm]{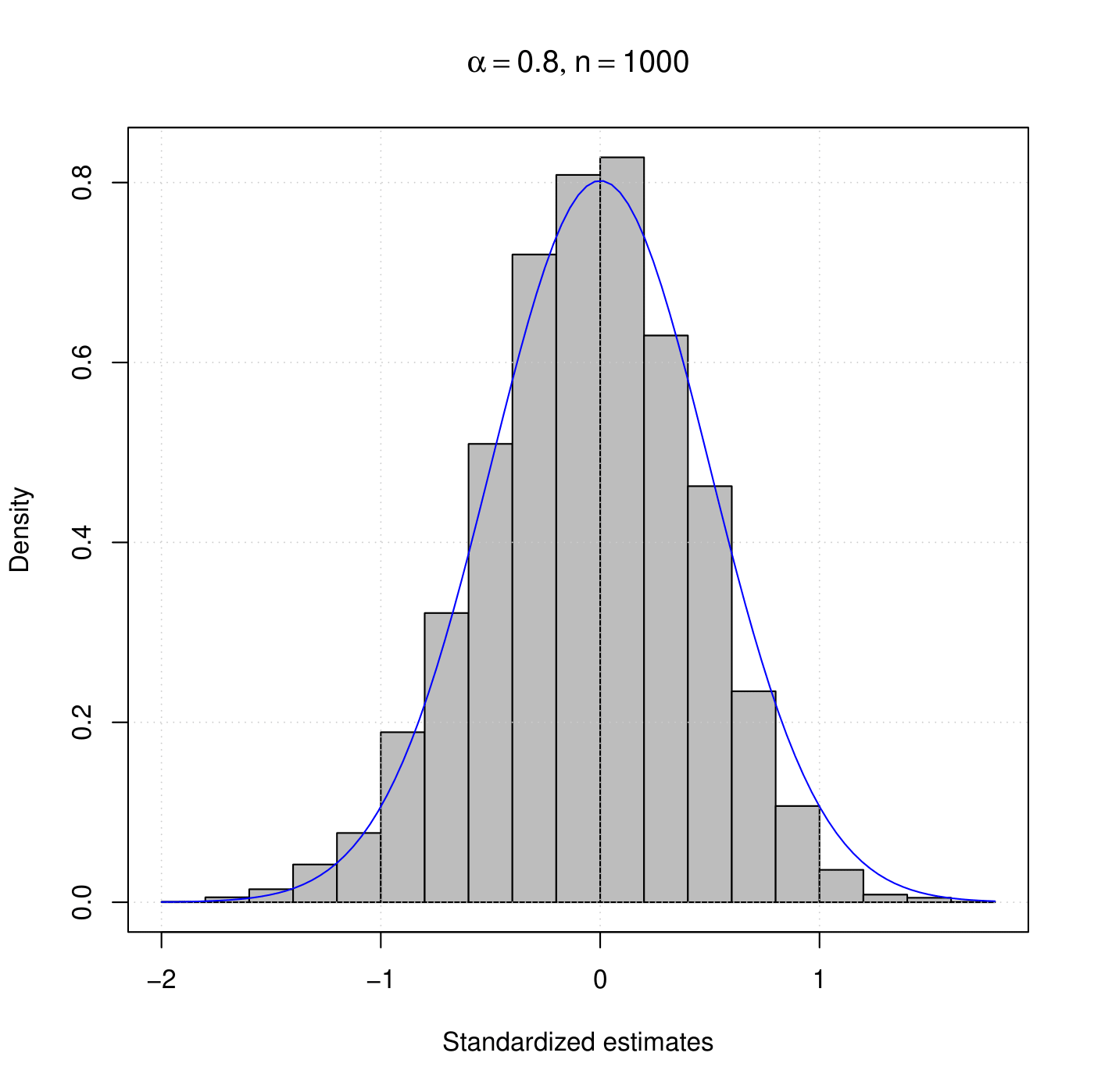}\includegraphics[width=5.4cm]{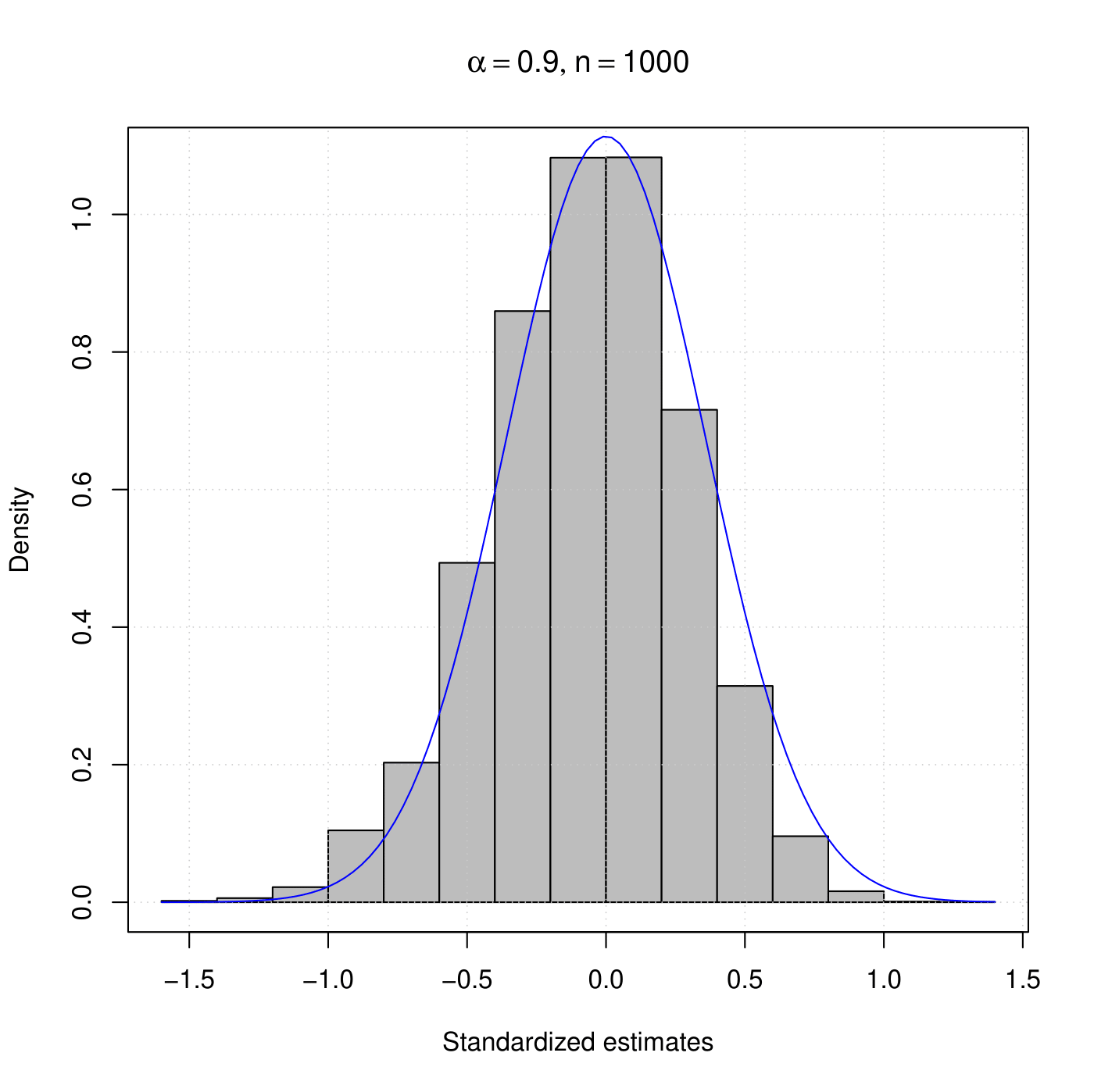}
		\includegraphics[width=5.4cm]{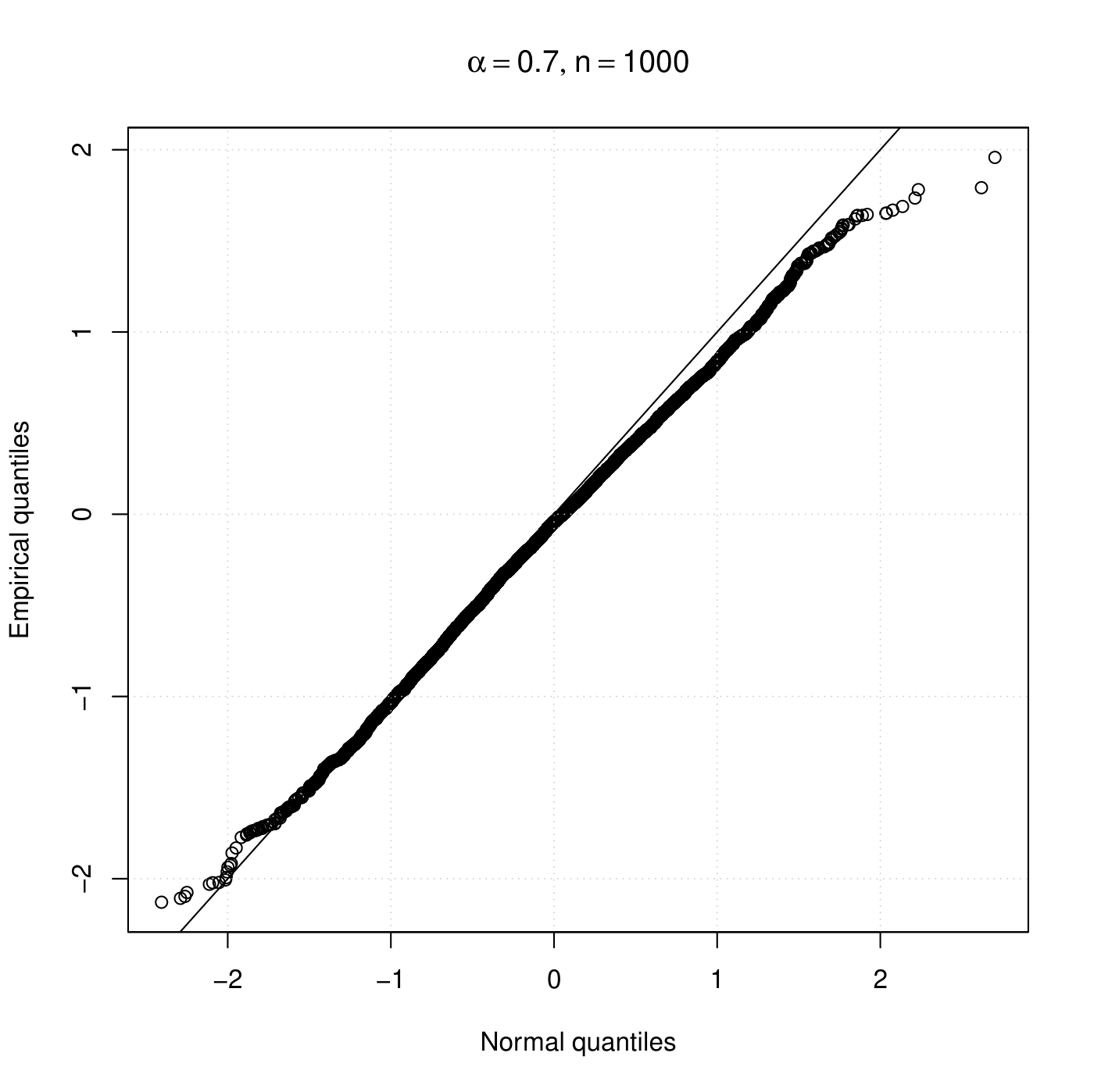}\includegraphics[width=5.4cm]{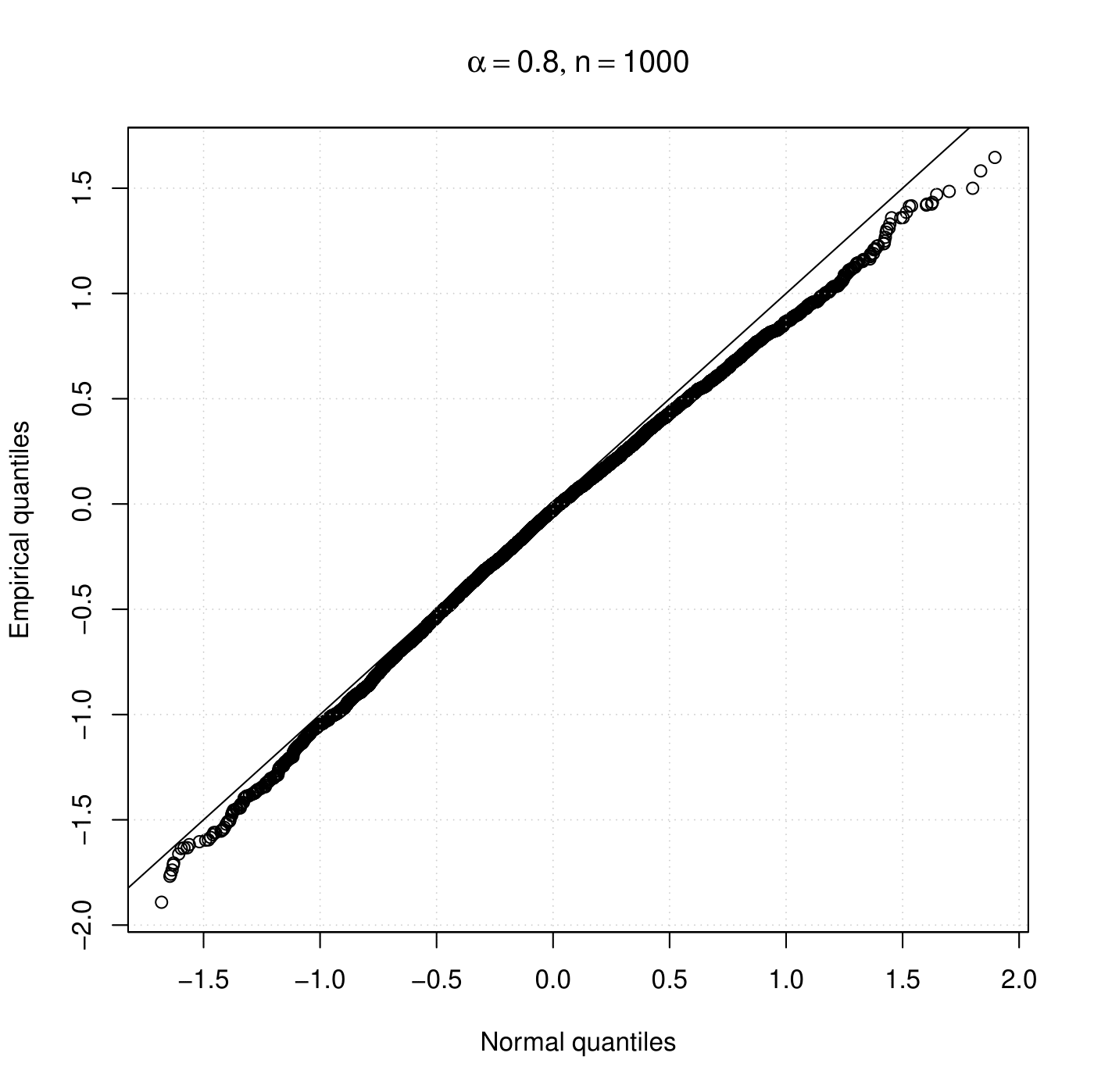}\includegraphics[width=5.4cm]{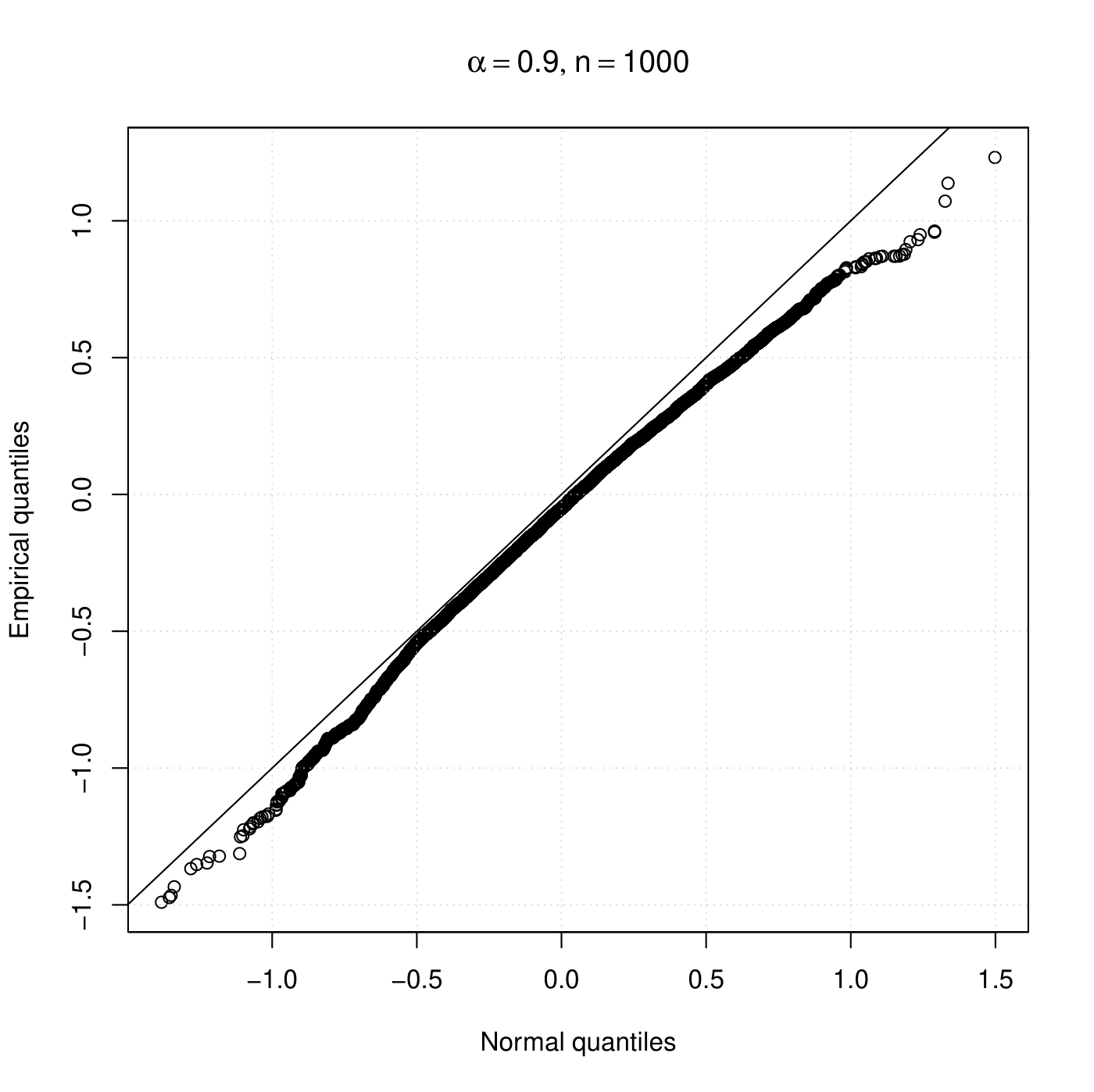}  	
		\caption{Histograms and qq-plots of the standardized estimates $\sqrt{n}(\widehat\alpha_n-\alpha)$ for $\alpha=0.7$, $\alpha=0.8$, and $\alpha=0.9$, along with their associated limiting normal density/quantiles given in Theorem \ref{weak_limit_stat} (under stationarity). The sample size is $n=1000$.}\label{fig:STAT_alpha_n1000II}
		\includegraphics[width=5.4cm]{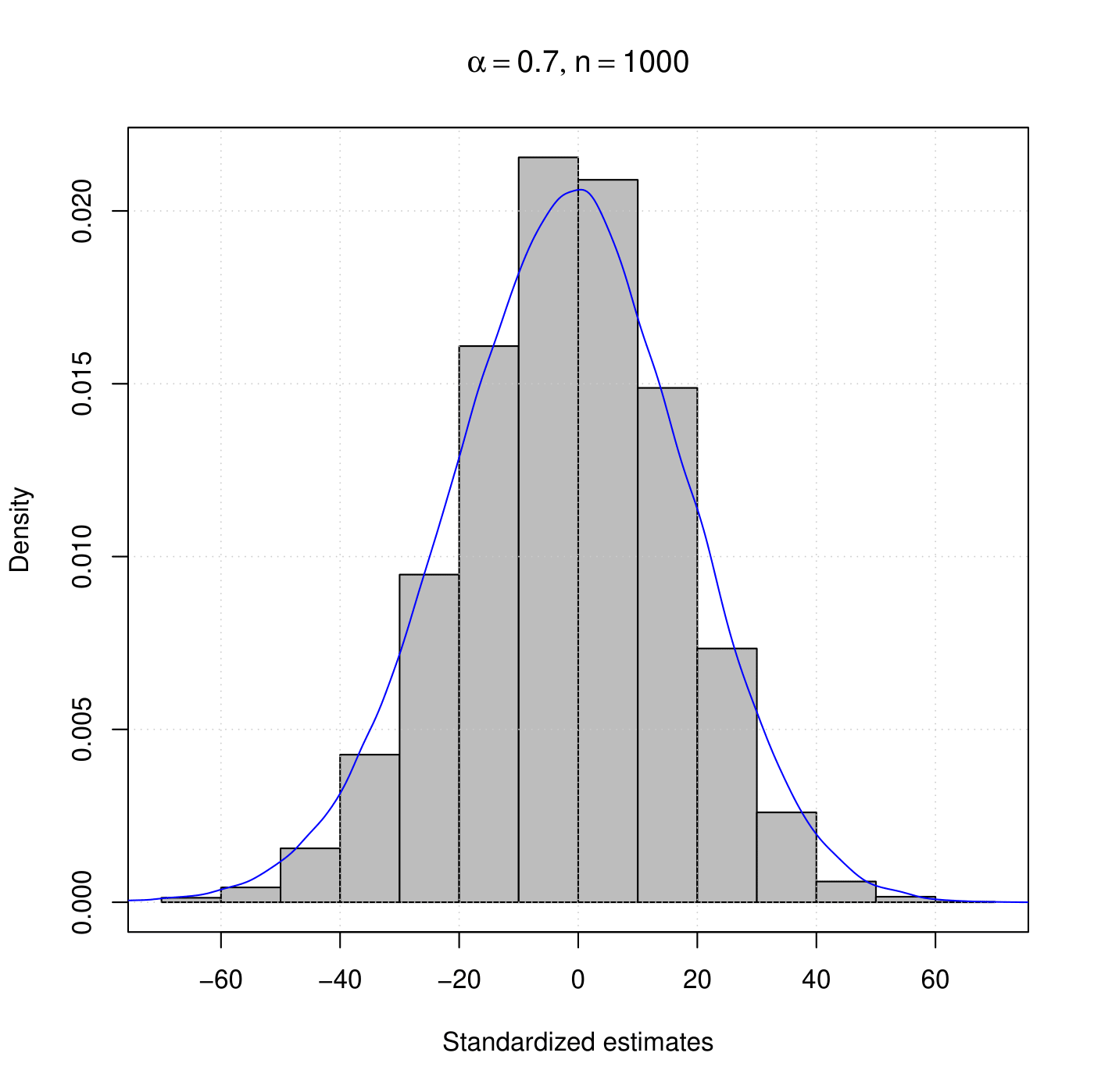}\includegraphics[width=5.4cm]{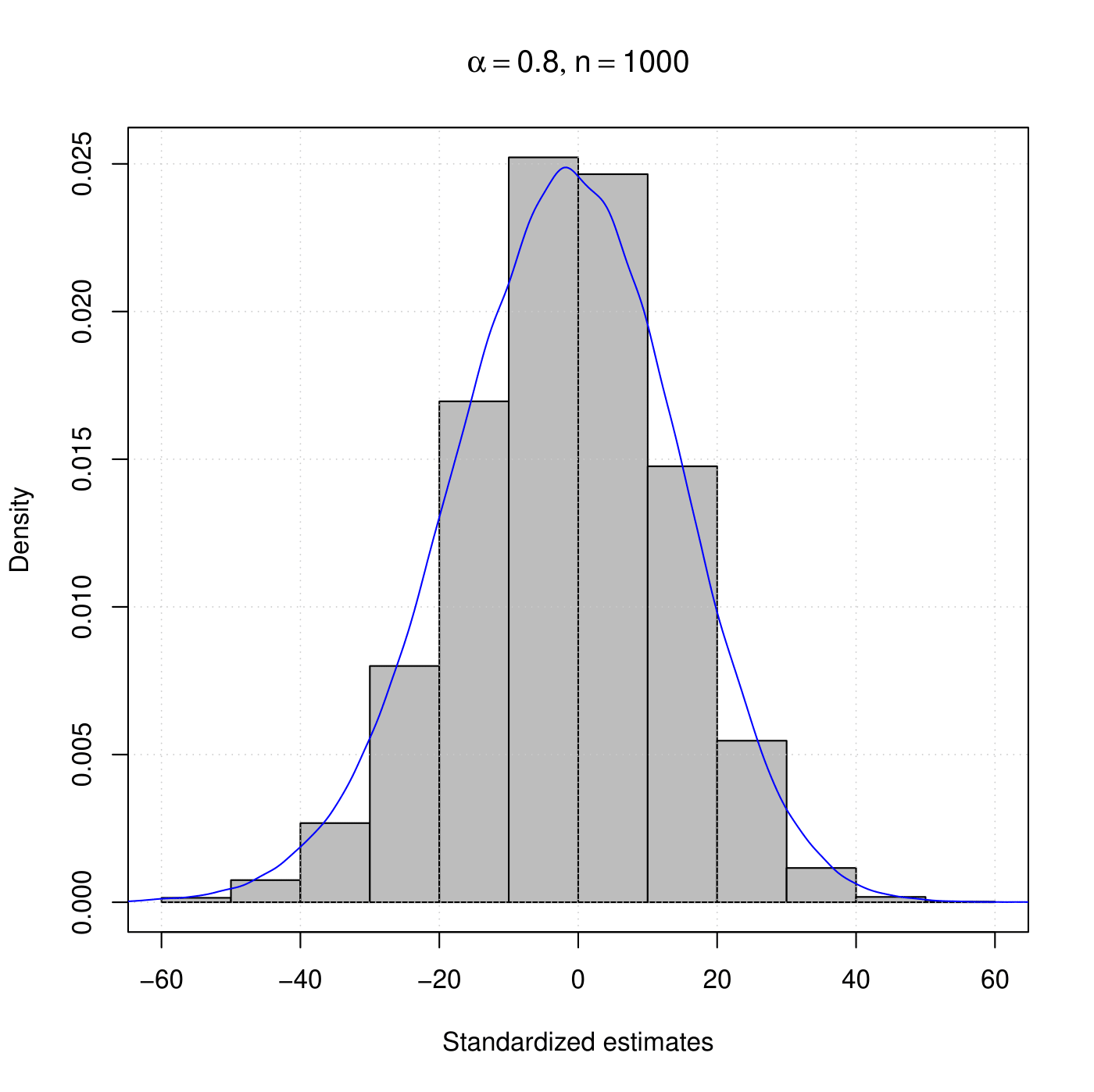}\includegraphics[width=5.4cm]{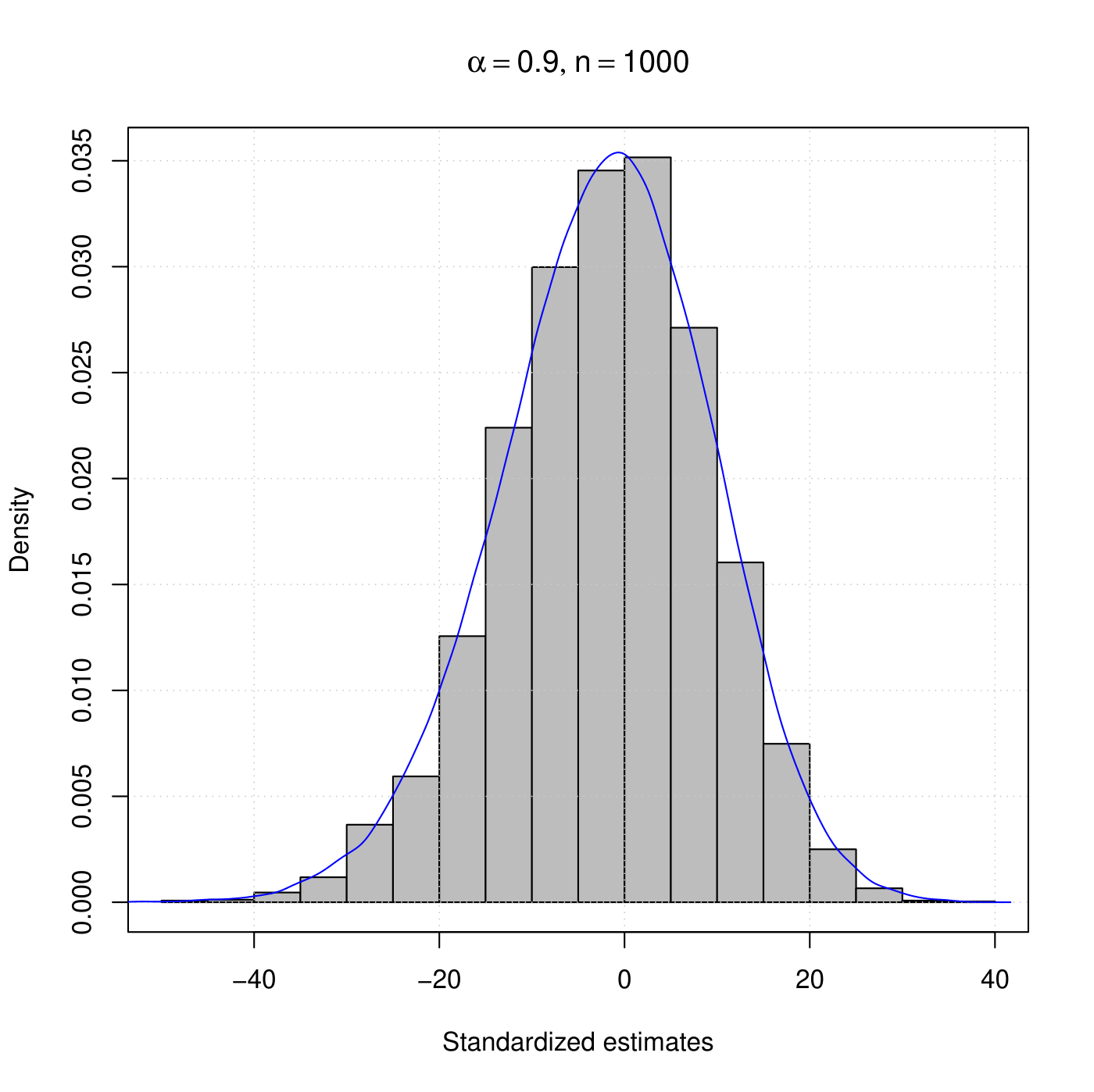}
		\includegraphics[width=5.4cm]{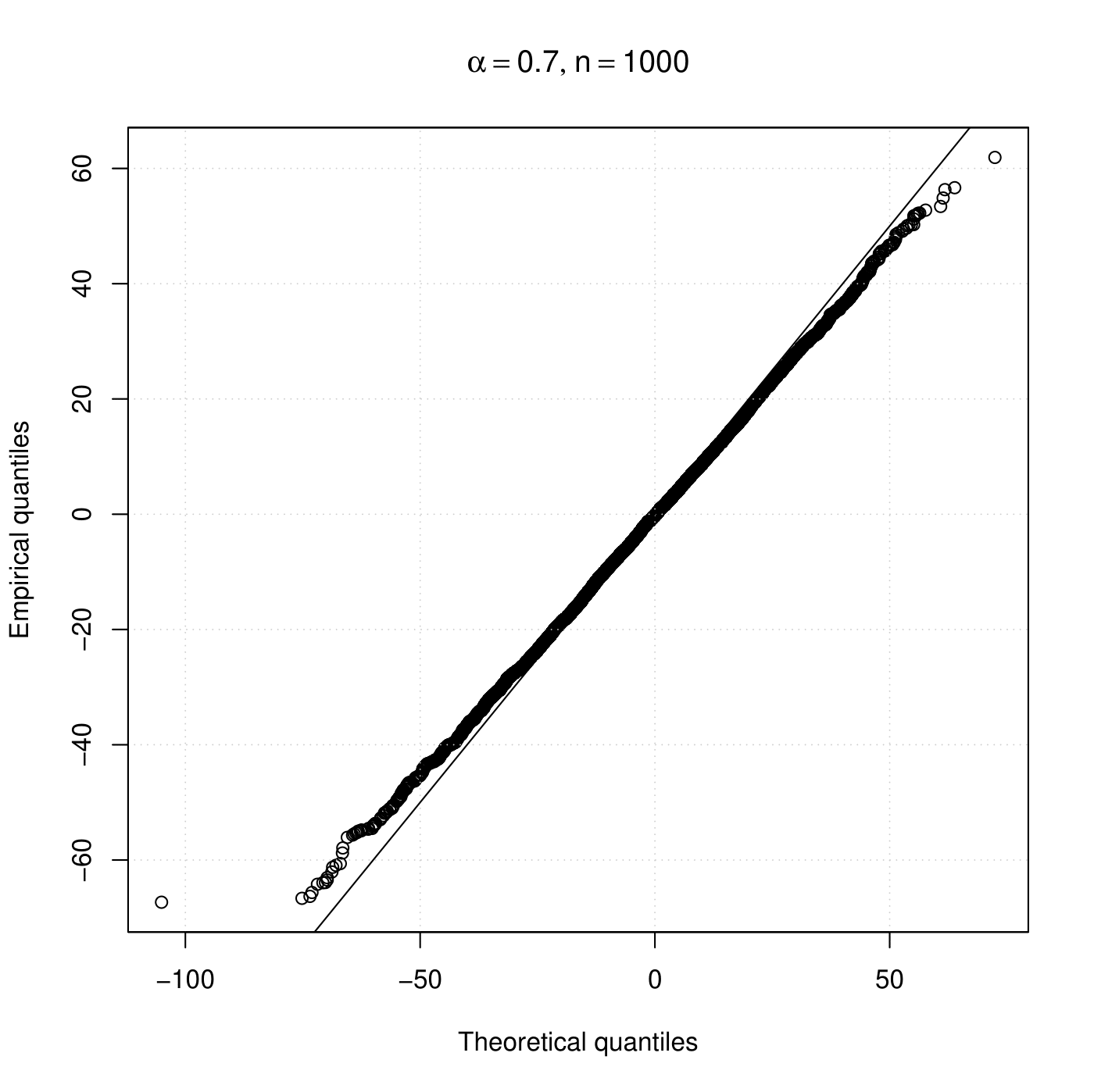}\includegraphics[width=5.4cm]{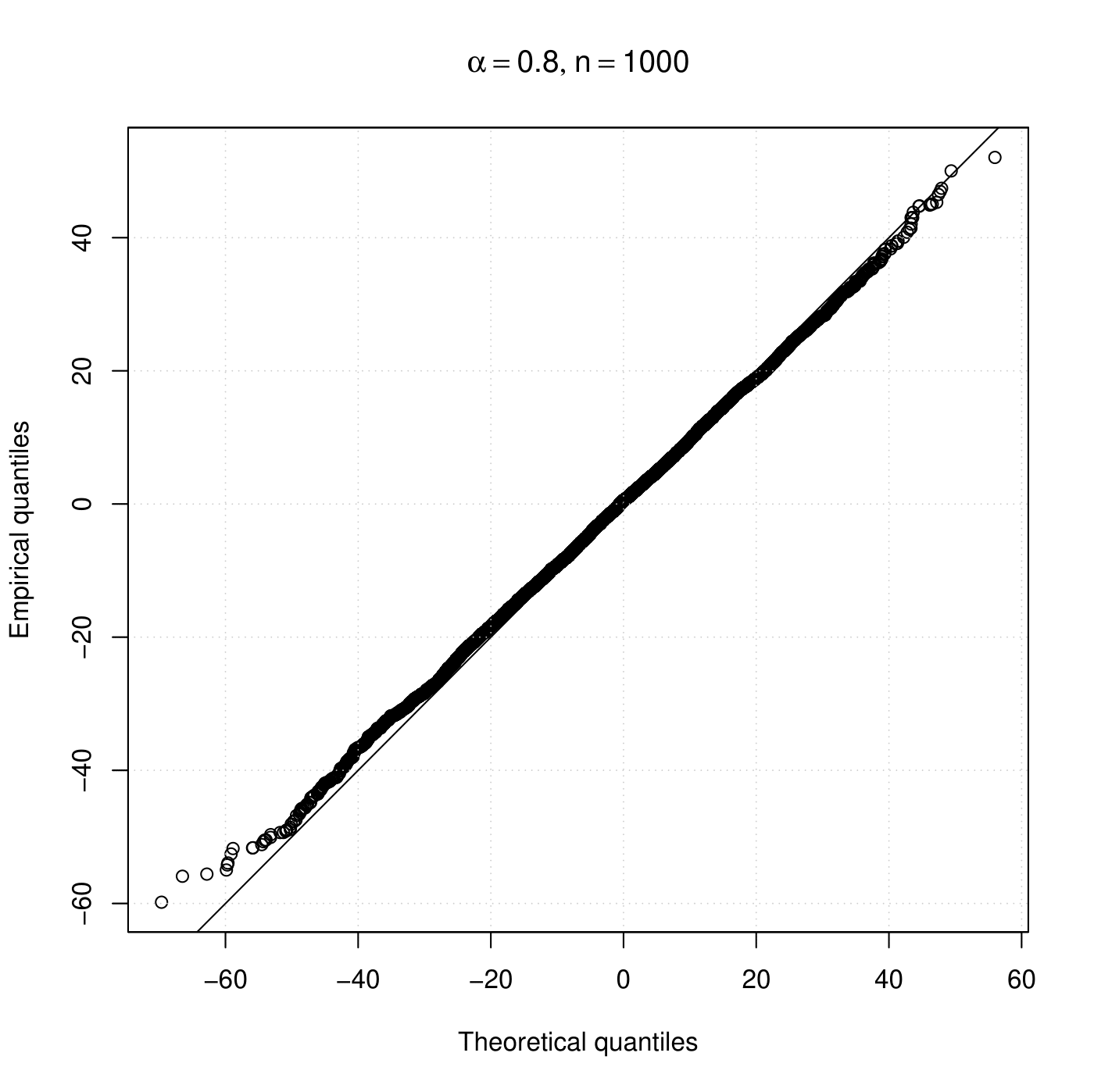}\includegraphics[width=5.4cm]{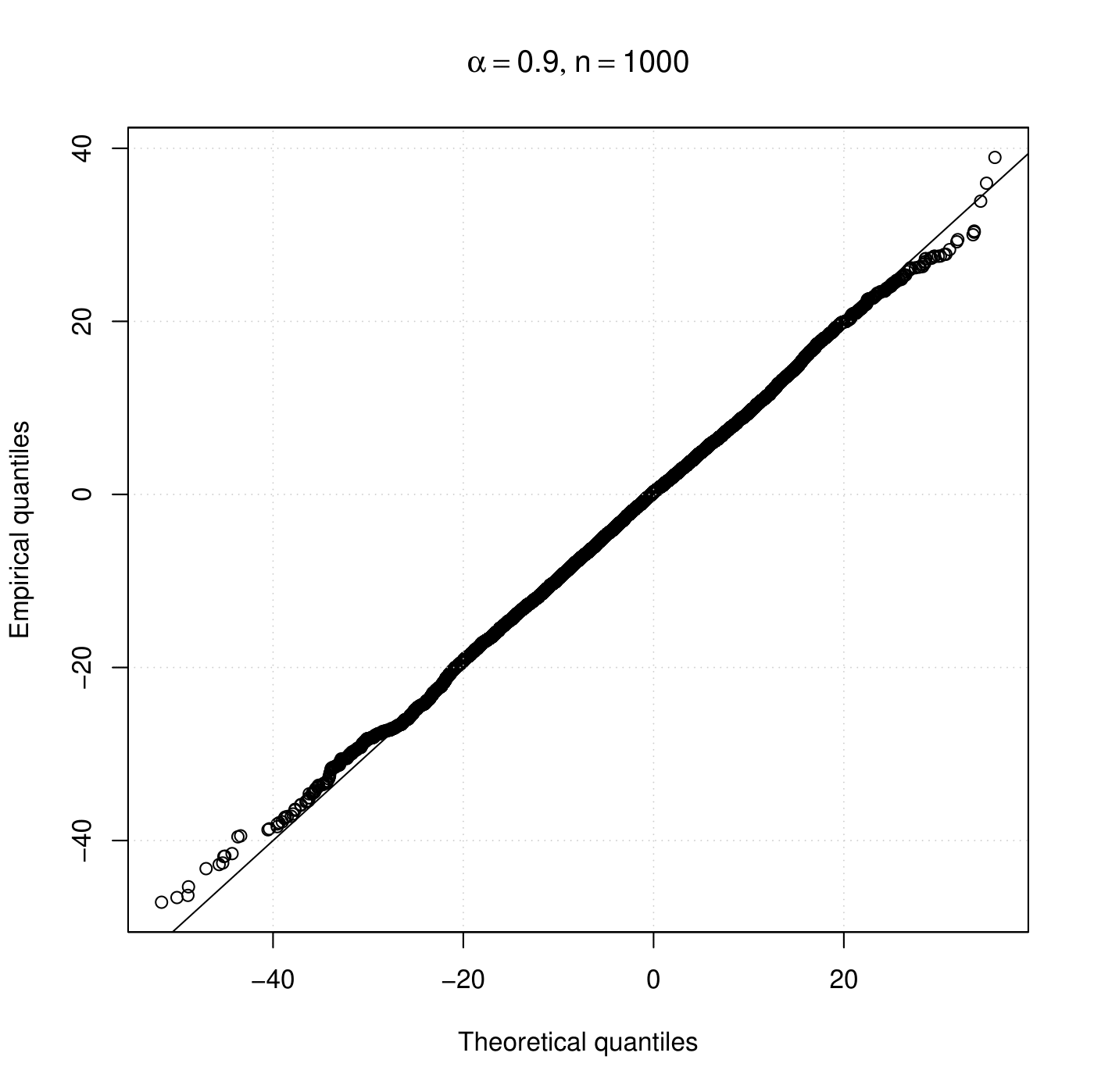}  	
		\caption{Histograms and qq-plots of the standardized estimates $n(\widehat\alpha_n-\alpha)$ for $\alpha=0.7$, $\alpha=0.8$, and $\alpha=0.9$, along with their associated limiting density/quantiles given in Theorem \ref{weak_limit_nonstat} (under nearly non-stationarity). The sample size is $n=1000$.}\label{fig:NONSTAT_alpha_n1000II}
	\end{center}
\end{figure}

Our interest now is to evaluate the coverages of the confidence intervals based on the asymptotic results under the nearly unstable and stable assumptions. In Table \ref{Table:CI}, we provide the empirical coverages of confidence intervals, from a Monte Carlo simulation with 10000 replications, for $\alpha$ with significance level at 10\%, 5\%, and 1\% based on Theorem \ref{weak_limit_nonstat} (under nearly non-stationarity). The sample size is  $n=500$ and we consider $\alpha=0.999,0.99,0.98,0.9,0.8,0.7$. These results show that inference on the correlation parameter using our methodology is satisfactory since the coverages are close to the nominal levels for all cases considered, even when $\alpha$ is not close to the non-stationarity region.

\begin{table}
	\centering
	\begin{tabular}{ccccccc}
		\hline
		$\alpha\rightarrow$ & 0.999 & 0.99 & 0.98 & 0.9 & 0.8 & 0.7 \\
		\hline
		90\% & 0.934 & 0.917 & 0.897 & 0.916 & 0.920 & 0.915 \\
		95\% & 0.967 & 0.952 & 0.939 & 0.960 & 0.968 & 0.966 \\
		99\% & 0.989 & 0.984 & 0.982 & 0.994 & 0.993 & 0.990\\
		\hline
	\end{tabular}
\caption{Empirical coverages of the 90\%, 95\%, and 99\% confidence intervals for $\alpha$ based on the nearly unstable approach. Sample size $n=500$.}\label{Table:CI} 
\end{table}

\section{Unit Root Test}\label{sec:urt}

In this section, we propose a statistical procedure for testing unit root in a Poisson INARCH(1) model with correlation parameter $\alpha$. The null and alternative hypotheses are respectively $\texttt{H}_0: \alpha=1$ and $\texttt{H}_1: \alpha<1$. To this end, we consider the nearly unstable approach and the statistic $n(\widehat\alpha_n-1)$, which is inspired by the traditional unit root test for the continuous AR(1) model of \cite{dicful1979}. Under the conditions of Theorem \ref{weak_limit_nonstat}, we have that
\begin{eqnarray}\label{AD_URT}
n(\widehat\alpha_n-1)=n(\widehat\alpha_n-\alpha_n)-\gamma_n\stackrel{d}{\longrightarrow}\mathcal D_\gamma-\gamma,
\end{eqnarray}
as $n\rightarrow\infty$, where $\mathcal D_\gamma$ is a random variable (depending on the parameter $\gamma$) following the asymptotic distribution given in the right-hand side of (\ref{cls_weak_conv}). We can approach the null hypothesis of interest through our methodology by taking $\gamma\rightarrow0$. In this case, the distribution of the right-hand side of (\ref{AD_URT}) approaches that of $\mathcal D_0$, which has the associated $\mathcal X$ process satisfying the stochastic differential equation $d\mathcal X(t)=\beta t+\sqrt{\mathcal X(t)}dB(t)$, $t\geq0$. Denote by $q_\zeta$ the $\zeta$-quantile of the distribution of $\mathcal D_0$, for $\zeta\in(0,1)$, that is $P(\mathcal D_0\leq q_\zeta)=\zeta$. These quantiles can be obtained from Monte Carlo simulation as done in Section \ref{sec:sim}.
Based on the above discussion, we propose the following decision rule for testing $\texttt{H}_0: \alpha=1$ against 
$\texttt{H}_1: \alpha<1$ with significance level at $\zeta\times100\%$:
\begin{itemize}
	\item Reject $\texttt{H}_0$ in favor of $\texttt{H}_1$ if $n(\widehat\alpha_n-1)<q_\zeta$.
\end{itemize}

To evaluate the finite-sample performance of the proposed unit root test (URT), we run a Monte Carlo simulation with 10000 replications. We set $\beta=1$ and sample sizes $n=50,80,100,200,300,400,\\500,1000,2000,5000$. In Table \ref{Table:SigLevels}, we provide the empirical significance levels with nominal levels at 10\%, 5\%, and 1\%. We observe that the URT is yielding the desired Type-I error even for small sample sizes (for instance, $n=50,80$). 

Aiming at the investigation of the test power, another Monte Carlo simulation is considered under the same setup as before and with a significance level at 5\%. We consider $\alpha=0.999,0.99,0.98,0.95,0.9,0.8,0.7$ and compute the proportion of rejections of the null hypothesis in each scenario. The results are presented in Table \ref{Table:Power}. As expected, the power increases when either we are going away from the null hypothesis or the sample size increases. The proportions of rejections given in that table indicate that the proposed URT is working satisfactorily and offers a promising device in dealing with count time series based on the INGARCH approach.

\begin{table}
	\centering
	\begin{tabular}{ccccccccccc}
		\hline
		 $n\rightarrow$ & 50 & 80 & 100 & 200 & 300 & 400 & 500 & 1000 & 2000 & 5000 \\
		 \hline
		10\% & 0.116 & 0.104 & 0.109 & 0.101 & 0.105 & 0.104 & 0.103 & 0.101 & 0.103 & 0.104 \\
		5\% & 0.061 & 0.056 & 0.057 & 0.054 & 0.055 & 0.054 & 0.049 & 0.054 & 0.054 & 0.049 \\
		1\% & 0.014 & 0.013 & 0.014 & 0.012 & 0.010 & 0.010 & 0.010 & 0.011 & 0.010 & 0.010\\
		\hline
	\end{tabular}\caption{Empirical significance levels obtained from a Monte Carlo study to evaluate the proposed unit root test under some sample sizes and nominal significance levels at 10\%, 5\%, and 1\%.}\label{Table:SigLevels}  
\end{table}

\begin{table}
		\centering
	\begin{tabular}{cccccccc}
		\hline
		$n\downarrow$\,\,\,$\alpha\rightarrow$ & 0.999 & 0.99 & 0.98 & 0.95 & 0.9 & 0.8 & 0.7 \\
		\hline
		50 & 0.057 & 0.085 & 0.115 & 0.258 & 0.623 & 0.983 & 1.000 \\
		80 & 0.114 & 0.173 & 0.264 & 0.618 & 0.973 & 1.000 & 1.000 \\
		100 & 0.166 & 0.270 & 0.413 & 0.865 & 1.000 & 1.000 & 1.000 \\
		200 & 0.216 & 0.425 & 0.695 & 0.998 & 1.000 & 1.000 & 1.000 \\
		300 & 0.265 & 0.614 & 0.930 & 1.000 & 1.000 & 1.000 & 1.000 \\
		400 & 0.314 & 0.798 & 0.995 & 1.000 & 1.000 & 1.000 & 1.000 \\
		500 & 0.367 & 0.927 & 1.000 & 1.000 & 1.000 & 1.000 & 1.000 \\
		1000 & 0.434 & 0.998 & 1.000 & 1.000 & 1.000 & 1.000 & 1.000 \\
		2000 & 0.551 & 1.000 & 1.000 & 1.000 & 1.000 & 1.000 & 1.000 \\
		5000 & 0.839 & 1.000 & 1.000 & 1.000 & 1.000 & 1.000 & 1.000\\
        \hline
       \end{tabular}\caption{Empirical power obtained from a Monte Carlo study to evaluate the proposed unit root test under some sample sizes and values of $\alpha$. Significance level at 5\%.}\label{Table:Power}
\end{table}

\section{Real Data Application}\label{sec:application}

We here apply the proposed methodology to the daily number of deaths due to COVID-19 in the United Kingdom from January 30, 2020, to June 4, 2021, so yielding $n=492$ observations. This dataset is publicly available at the site \url{https://coronavirus.data.gov.uk}. The plot of the daily number of deaths and its associated ACF are provided in Figure \ref{fig:data}, which reveals a nearly unstable/non-stationary behavior.

\begin{figure}
	\begin{center}
		\includegraphics[width=9cm]{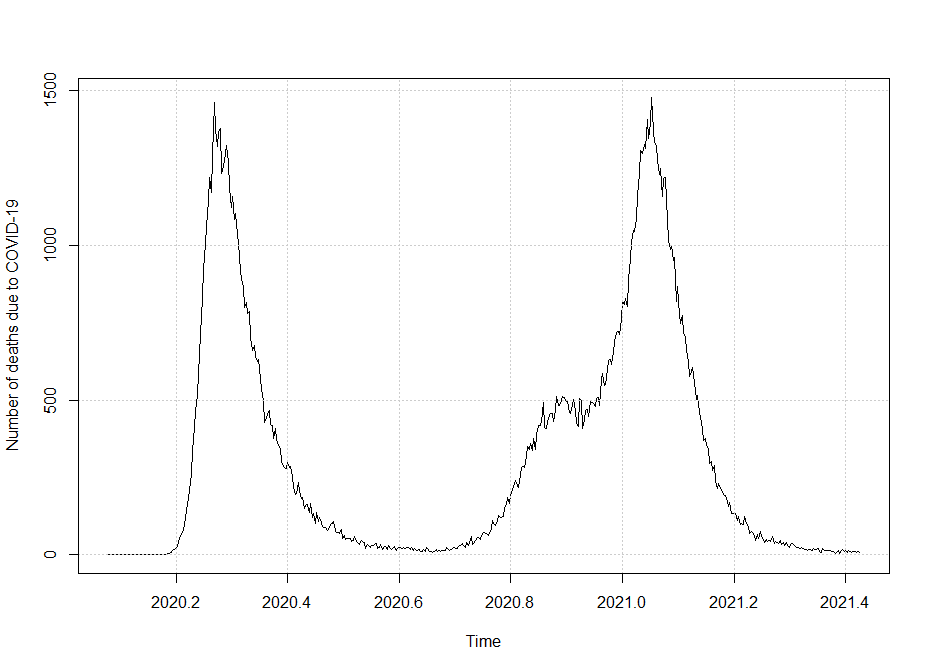}\includegraphics[width=9cm]{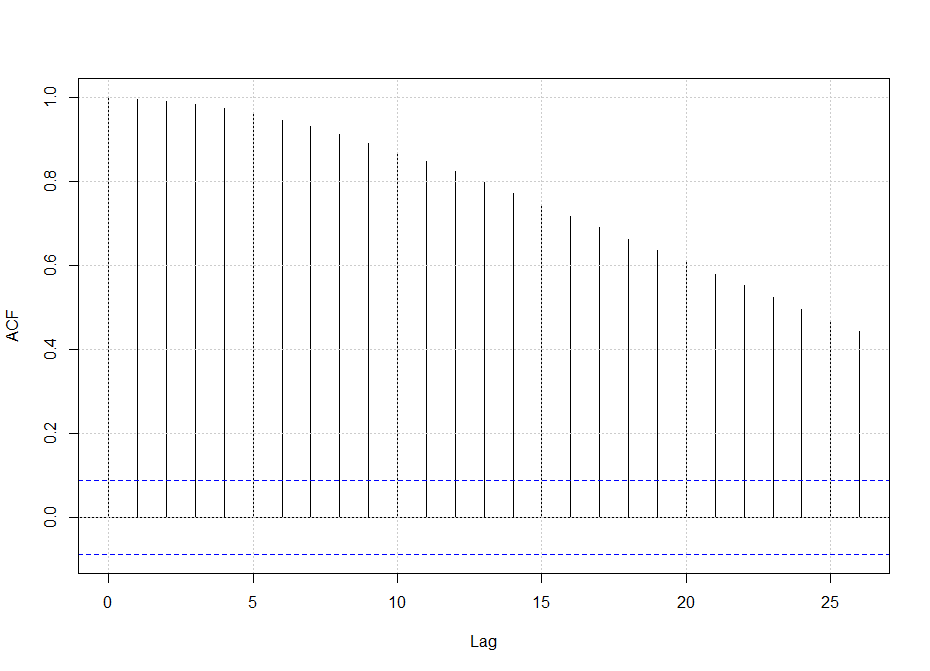}  	
		\caption{Plot of the daily number of deaths due to COVID-19 in UK and its associated ACF.}\label{fig:data}
	\end{center}
\end{figure}

We assume that the time series comes from an NU-INARCH(1) process. The aim of this application is to illustrate that the theoretical results found in this paper can reveal the unit root behavior for a real dataset. We first need to deal with $\beta$, which is unknown and can be seen as a nuisance parameter; our primary interest in this paper relies on the correlation parameter $\alpha_n$. One strategy is to estimate $\beta$ through the conditional maximum likelihood method, which consists in maximizing $\ell\propto\sum_{t=2}^n(y_t\log\lambda_t-\lambda_t)$, and then assume it known in what follows. This procedure gives $\beta=0.269$. 
At the end of this application, we will evaluate such an approach by performing a small Monte Carlo simulation study.

Using (\ref{cls}), we obtain the estimate for the correlation parameter equal to $\widehat\alpha_n=0.997$, which is very close to 1. We obtain the standard error of the $\alpha_n$ estimate ($\mbox{s.e.}(\widehat\alpha_n)$) using the asymptotic distribution stated in Theorem \ref{weak_limit_stat}, which gives the $\mbox{s.e.}(\widehat\alpha_n)\approx0.014$. We perform the URT proposed in Section \ref{sec:urt} for testing the hypothesis $\texttt{H}_0: \alpha=1$ against $\texttt{H}_1: \alpha<1$. We obtain $n(\widehat\alpha_n-1)= -1.257>-17.952=q_{0.05}$ and therefore we do not reject the null hypothesis on the unit root with significance level at 5\%. The density function of $\mathcal D_0$ based on Gaussian kernel and 100000 Monte Carlo replications is provided in Figure \ref{fig:URT} along with vertical lines denoting the statistic test and the $0.05$-quantile (of the $\mathcal D_0$ distribution). The associated $p$-value is $0.704$, which shows that we obtain the same indication by using any usual significance level.

\begin{figure}
	\begin{center}
		\includegraphics[width=10cm]{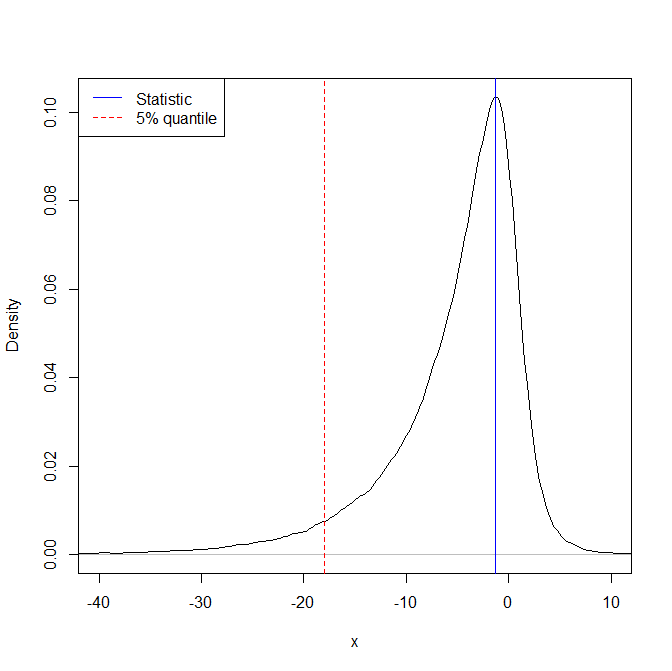}	
		\caption{Density (based on Gaussian kernel) of $\mathcal D_0$ using 100000 Monte Carlo replications. Vertical solid and dashed lines represent the statistic and the $0.05$-quantile (of the $\mathcal D_0$ distribution), respectively.}\label{fig:URT}
	\end{center}
\end{figure}

In Figure \ref{fig:pred}, we present the count time series data and the predicted means based on the fitted NU INARCH model, which reveals a good agreement between the observed time series and the model.

\begin{figure}
	\begin{center}
		\includegraphics[width=9cm]{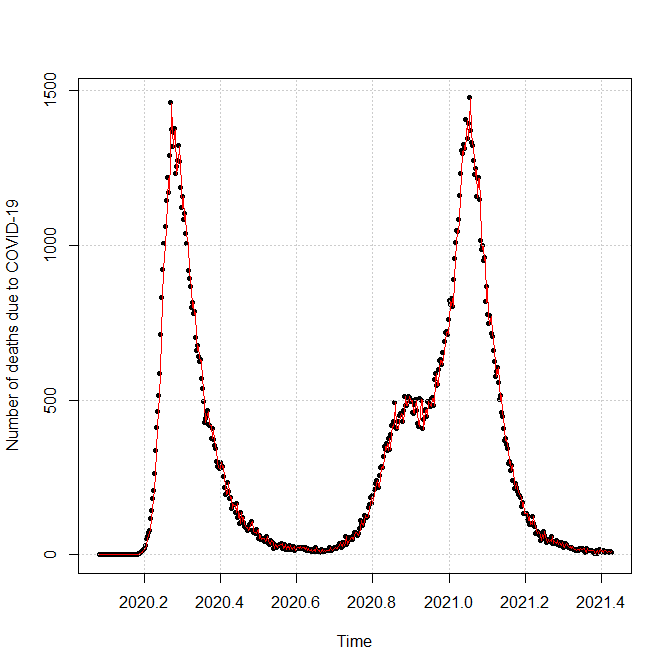}\includegraphics[width=9cm]{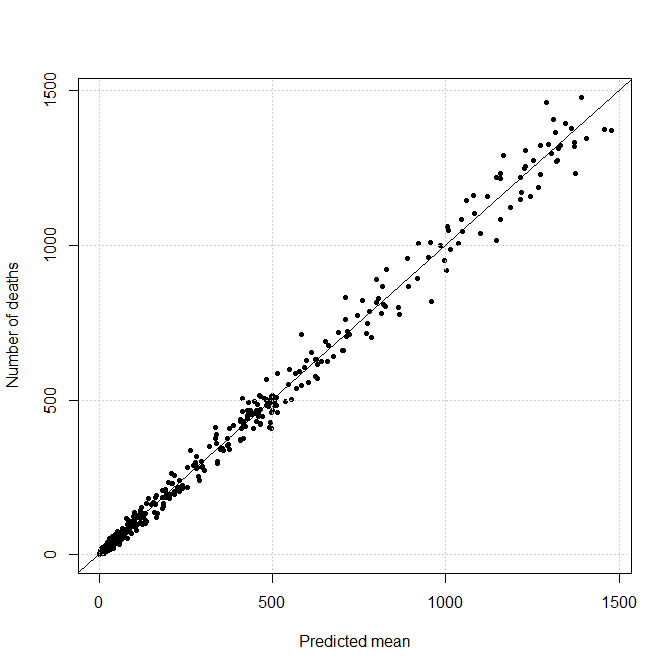}  	
		\caption{Left: Number of deaths (points) and the predicted mean $\widehat{E}(Y_t|Y_{t-1})=\beta+\widehat{\alpha}_nY_{t-1}$ (solid line). Right: Number of deaths against predicted means.}\label{fig:pred}
	\end{center}
\end{figure}

We conclude this application by evaluating our strategy by estimating $\beta$ and assuming known. To do this, we run a small Monte Carlo simulation with 1000 replications. In each loop, we generate an NU-INARCH model with $\beta=0.269$, $\alpha=0.997$, and $n=492$ (specifications of the application), construct confidence intervals for $\alpha$ based on both approaches with fixed and non-fixed (estimated as done in this section and then assumed known) $\beta$, and check if they contain the "true" value. The empirical coverages of the 90\%, 95\%, and 99\% confidence intervals under both approaches are reported in Table \ref{Table:coverage_beta_unknown}. As can be seen from this table, the proposed solution given here in the application provides the expected nominal coverages and works even better than the fixed $\beta$ case for the 90\% and 95\% coverages; the 99\% coverages are very close to each other.

\begin{table}
	\centering
	\begin{tabular}{ccc}
		\hline
		& Fixed $\beta$ & Non-fixed $\beta$ \\
		\hline
		90\% & 0.927 & 0.911 \\
		95\% & 0.960 & 0.948 \\
		99\% & 0.986 & 0.979 \\
		\hline
	\end{tabular}\caption{Empirical coverages of 90\%, 95\%, and 99\% confidence intervals for the correlation parameter $\alpha$ based on a Monte Carlo simulation under the NU-INGARCH model with the settings $\beta=0.269$, $\alpha=0.997$, and $n=492$. Both approaches with fixed and non-fixed (estimated) $\beta$ are reported.}\label{Table:coverage_beta_unknown}
\end{table}

\section{Discussion and Future Research}\label{conclusions}

A nearly unstable INARCH(1) process was introduced and weak convergence of a normalized version was established. The asymptotic distribution of the CLS estimator of the correlation parameter was derived under both nearly unstable and stable cases, which have been explored via Monte Carlo simulations. We also proposed a unit root test and checked its performance in terms of yielding the desired Type-I error and power through simulation. The nearly unstable INARCH approach was applied to the daily number of deaths due to the COVID-19 in the UK, which exhibits a non-stationary behavior. the proposed URT has provided evidence for the existence of a unit root in agreement with the descriptive analysis. 

We have assumed that the conditional distribution in (\ref{eq1}) is Poisson, but the methods presented in this paper can be easily adapted for other distributional assumptions such as negative binomial or more generally mixed Poisson distributions, among others. More specifically, the very same strategy given in Proposition \ref{prop:moments} and Lemma \ref{lem:cov_M} can be employed to find the proper normalizations for the processes $\{X^{(n)}_{\floor{nt}},\,\,t\geq0\}$ and $\{W^{(n)}(t),\,\,t\geq0\}$ in these other cases. After obtaining these results, the asymptotic distributions of the normalized count process and CLS estimator are established following the same steps as those given in Theorems \ref{mainthm} and \ref{weak_limit_nonstat}, respectively. We also believe that extending the results for higher-order INGARCH models deserves future investigation.

\section*{Acknowledgments}  \noindent Research supported in part by grants from the KAUST and NIH 1R01EB028753-01 (W. Barreto-Souza), HKSAR-RGC-GRF No. 14325216 and the Theme-based Research Scheme of HKSAR-RGC-TBS T32-101/15-R (N.H. Chan).

\end{document}